\newtheorem{theorem}{Theorem}
\newtheorem{lemma}{Lemma}
\newtheorem{proposition}{Proposition}
\newtheorem{corollary}{Corollary}
\newtheorem{definition}{Definition}
\newtheorem{example}{Example}
\newcommand{\qed}{\hfill$\Box$}
\newcommand{\treename}{Karp\& Miller\xspace}
\newcommand{\treeshortname}{{\sf KM}\xspace}
\newenvironment{proof}{\noindent{\it Proof.\ \ }}{}
\definecolor{darkviolet}{rgb}{0.5,0,0.4}
\definecolor{darkgreen}{rgb}{0,0.4,0.2}
\definecolor{darkblue}{rgb}{0.1,0.1,0.9}
\definecolor{darkgrey}{rgb}{0.5,0.5,0.5}
\definecolor{lightblue}{rgb}{0.4,0.4,1}
\newcommand{\wPN}{\ensuremath{\omega}PN\xspace}
\newcommand{\wPNR}{\ensuremath{\omega}PN+R\xspace}
\newcommand{\PNT}{PN+T\xspace}
\newcommand{\PNR}{PN+R\xspace}
\newcommand{\wPNT}{\ensuremath{\omega}PN+T\xspace}
\newcommand{\wIPNR}{\ensuremath{\omega}IPN+R\xspace}
\newcommand{\wIPNT}{\ensuremath{\omega}IPN+T\xspace}
\newcommand{\wOPNR}{\ensuremath{\omega}OPN+R\xspace}
\newcommand{\wOPNT}{\ensuremath{\omega}OPN+T\xspace}
\newcommand{\wIPN}{\ensuremath{\omega}IPN\xspace}
\newcommand{\wOPN}{\ensuremath{\omega}OPN\xspace}
\newcommand{\w}{\ensuremath{\omega}\xspace}
\newcommand{\dc}[1]{\ensuremath{\downarrow\!\!\left(#1\right)}\xspace}
\newcommand{\remIw}{\ensuremath{\textsf{remI\w}}}
\newcommand{\nbar}{\ensuremath{\overline{n}}}
\newcommand{\mbar}{\ensuremath{\overline{m}}}
\newcommand{\sigmabar}{\ensuremath{\overline{\sigma}}}
\DeclareMathOperator{\effect}{\mathit{effect}}
\DeclareMathOperator{\af}{\mathsf{AllowsFiring}}
\DeclareMathOperator{\reach}{\mathsf{Reach}}
\newcommand{\SCT}{\ensuremath{\texttt{Build-\treeshortname}}\xspace}
\newcommand{\treealgo}{\SCT}
\newcommand{\Post}{\ensuremath{\mathtt{Post}}}
\newcommand{\tuple}[1]{\langle#1\rangle\xspace}
\newcommand{\Nn}{\ensuremath{\mathcal{N}}\xspace}
\newcommand{\Tt}{\ensuremath{\mathcal{T}}\xspace}
\newcommand{\NN}{\ensuremath{\mathbb{N}}\xspace}
\newcommand{\ZZ}{\ensuremath{\mathbb{Z}}\xspace}
\newcommand{\nbw}[1]{\ensuremath{\mathrm{nb}\omega\left(#1\right)}}
\newcommand{\mgeqp}[1]{\ensuremath{\mgeq_{#1}}}
\newcommand{\mgeq}{\ensuremath{\succeq}}
\newcommand{\mleq}{\ensuremath{\preceq}}
\newcommand{\ml}{\ensuremath{\prec}}
\newcommand{\mleqp}[1]{\ensuremath{\mleq_{#1}}}
\newcommand{\0}{\ensuremath{\mathbf{0}}}
\newcommand{\expsp}{\textsc{Expspace}}
\newcommand{\NNP}{\NN^{+}}
\newcommand{\Oh}{\mathcal{O}}
\newcommand{\threshold}{h}
\newcommand{\tstep}[2]{\xrightarrow{#1}_{#2}}
\newcommand{\step}[1]{\xrightarrow{#1}}
\newcommand{\seq}{\sigma}
\newcommand{\seqbis}{\pi}
\newcommand{\maxred}{R}
\newcommand{\eff}{\mathit{effect}}
\newcommand{\lenfn}{\ell}
\newcommand{\ceilthr}[2]{[#1]_{#2 \to \omega}}
\newcommand{\floorthr}[2]{[#1]_{\omega \to #2}}
\newcommand{\cbasis}[1]{\omega(#1)}
\newcommand{\nbwb}[1]{\ensuremath{\mathrm{nb}\overline{\omega}\left(#1\right)}}
\providecommand{\infnorm}[1]{\lVert#1\rVert_{\infty}}
\newcommand{\effabs}[2]{\Delta_{#1}[#2]}
\newcommand{\numplaces}{r_{1}}
\newcommand{\effs}{\vec{B}}
\newcommand{\effel}{\vec{b}}
\newcommand{\lvv}{\vec{x}}
\newcommand{\lvvbis}{\vec{y}}
\newcommand{\fev}{\vec{d}}
\newcommand{\pom}[1]{\cbasis{#1}}
\def\ignore#1{}
\title{$\omega$-Petri nets}
\author{G. Geeraerts${}^1$ \and A. Heußner${}^2$\and M. Praveen${}^3$\and J.-F. Raskin${}^1$}
\date{
  \small
  ${}^1$ Universit\'e Libre de Bruxelles (ULB), Belgium\\
  \small
  ${}^2$ Otto-Friedrich Universit\"at Bamberg, Germany\\
  \small
  ${}^3$Laboratoire Sp\'ecification et V\'erification, ENS Cachan, France
}
\begin{document}
\maketitle

\begin{abstract}
  We introduce \w-Petri nets (\wPN), an extension of {\em plain} Petri
  nets with $\omega$-labeled input and output arcs, that is
  well-suited to analyse \emph{parametric concurrent systems with
    dynamic thread creation}. Most techniques (such as the Karp and
  Miller tree or the Rackoff technique) that have been proposed in the
  setting of \emph{plain Petri nets} do not apply directly to \wPN
  because \wPN define transition systems that have \emph{infinite
    branching}.  This motivates a thorough analysis of the
  computational aspects of \wPN. We show that an \wPN can be turned
  into an plain Petri net that allows to recover the reachability set
  of the \wPN, but that does not preserve termination. This yields
  complexity bounds for the reachability, (place) boundedness and
  coverability problems on \wPN. We provide a practical algorithm to
  compute a coverability set of the \wPN and to decide termination by
  adapting the classical Karp and Miller tree construction. We also
  adapt the Rackoff technique to \wPN, to obtain the exact complexity
  of the termination problem. Finally, we consider the extension of
  \wPN with reset and transfer arcs, and show how this extension
  impacts the decidability and complexity of the aforementioned
  problems.
\end{abstract}



\section{Introduction}

In this paper, we introduce $\omega$-Petri nets (\wPN), an extension
of {\em plain} Petri nets (PN) that allows input and output arcs to be
labeled by the symbol $\omega$, instead of a natural number.  An
$\omega$-labeled input arc consumes, non-deterministically, any number
of tokens in its input place while an $\omega$-labeled output arc
produces non-deterministically any number of tokens in its output
place.  We claim that \wPN are particularly well suited for modeling
\emph{parametric concurrent systems} (see for instance our recent work
on the Grand Central Dispatch technology \cite{GHR12}), and to perform
\emph{parametric verification} \cite{german92reasoning} on those
systems, as we illustrate now by means of the example in
Fig~\ref{fig:example-intro}. The example present a skeleton of a
distributed program, in which a \texttt{main} function forks $P$
parallel threads (where $P$ is a parameter of the program), each
executing the \texttt{one\_task} function. Many distributed programs
follow this abstract skeleton that allows to perform calculations in
parallel, and being able to model precisely such concurrent behaviors
is an important issue. In particular, we would like that the model
captures the fact that \emph{$P$ is a parameter}, so that we can, for
instance, check that the execution of the program always terminates
(assuming each individual execution of \texttt{one\_task} does),
\emph{for all possible values of~$P$}. Clearly, the Petri net (a) in
Fig.~\ref{fig:example-intro} does not capture the parametric nature of
the example, as place $p_1$ contains a fixed number $K$ of tokens. The
PN (b), on the other hand captures the fact that the program can
\texttt{fork} an unbounded number of threads, but \emph{does not
  preserve termination}: $(\mathtt{post})^\w$ is an infinite execution
of PN (b), while the programme terminates (assuming each
\texttt{one\_task} thread terminates) for all values of $P$, because
the \texttt{for} loop in line~\ref{lst:for-example-intro} executes
exactly $P$ times. Finally, observe that the \wPN (c) has the desired
properties: firing transition \texttt{fork} creates
\emph{non-deterministically} an \emph{unbounded} albeit \emph{finite}
number of tokens in $p_2$ (to model all the possible executions of the
for loop in line~\ref{lst:for-example-intro}), and all possible
executions of this \wPN terminate, because the number of tokens
produced in $p_2$ remains \emph{finite} and no further token creation
in $p_2$ is allowed after the firing of the \texttt{fork} transition.
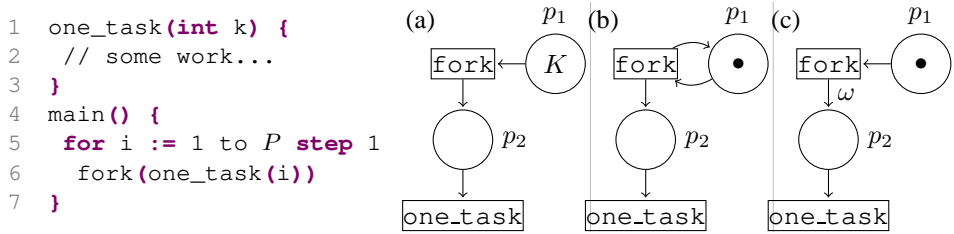
\begin{figure}[t!]
  \centering
  \begin{minipage}{.375\linewidth}
    \begin{pseudo}
one_task(int k) { 
 // some work...  
}

main() { 
 for i := 1 to $P$ step 1 @\label{lst:for-example-intro}@
  fork(one_task(i)) 
}
    \end{pseudo}
  \end{minipage}
  \begin{minipage}{.615\linewidth}
   \begin{tikzpicture}[node distance=.4cm,scale=.9]
    \begin{scope}
      \node[transition] (post) {{\tt fork}};
      \node[place, label={$p_1$}] (p1) [right=of post] {$K$};
      \node[place, label= right:{$p_2$}] (posted) [below= of post]  {};
      \node[transition ] (ot) [below=of posted] {{\tt one\_task}};
      \node[above left of= post, node distance=.8cm] {(a)} ;
      
       \path
       (post) edge [pre] (p1)
              edge [post] (posted)
       (ot)   edge [pre] (posted) ;
       
       \draw[color=lightgray] (p1.north east)+(.2,.6) -- +(.2,-2.8) ;
     \end{scope}
     \begin{scope}[shift={(2.7,0)}]
       \node[transition] (post) {{\tt fork}};
       \node[place, label={$p_1$}] (p1) [right=of post] {$\bullet$};
       \node[place, label= right:{$p_2$}] (posted) [below= of post]  {};
       \node[transition ] (ot) [below=of posted] {{\tt one\_task}};
       \node[above left of= post, node distance=.8cm] {(b)} ;

       \path
       (post) edge [pre, bend right] (p1)
              edge [post, bend left] (p1)
              edge [post] (posted)
       (ot)   edge [pre] (posted) ;       
       
       \draw[color=lightgray] (p1.north east)+(.2,.6) -- +(.2,-2.8) ;

     \end{scope} 
     \begin{scope}[shift={(5.4,0)}]
       \node[transition] (post) {{\tt fork}};
       \node[place, label={$p_1$}] (p1) [right=of post] {$\bullet$};
       \node[place, label= right:{$p_2$}] (posted) [below= of post]  {};
       \node[transition ] (ot) [below=of posted] {{\tt one\_task}};
       \node[above left of= post, node distance=.8cm] {(c)} ;

       \path
       (post) edge [pre] (p1)
              edge [post] node [right] {$\w$} (posted)
       (ot)   edge [pre] (posted) ;
     \end{scope}
  \end{tikzpicture}
\end{minipage}
  \caption{An example of a parametric system with three possible models}
  \label{fig:example-intro}
\end{figure}


While close to Petri nets, \wPN are sufficiently different that a
thorough and careful study of their computational properties is
required.  This is the main contribution of the paper.  A first
example of discrepancy is that the semantics of \wPN is an infinite
transition system which is {\em infinitely branching}. This is not the
case for plain PN: their transition systems can be infinite but they
are finitely branching. As a consequence, some of the classical
techniques for the analysis of Petri nets cannot be applied. Consider
for example the {\em finite unfolding of the transition
  system}~\cite{FS01} that stops the development of a branch of the
reachability tree whenever a node with a smaller ancestor is
found. This tree is finite (and effectively constructible) for any
plain Petri net and any initial marking because the set of markings
$\mathbb{N}^k$ is {\em well-quasi ordered}, and {\em finite branching}
of plain Petri nets allows for the use of K\"onig's lemma\footnote{In
  fact, this construction is applicable to any well-structured
  transition system which is finitely branching and allows to decide
  the termination problem for example.}. However, this technique
cannot be applied to \wPN, as they are infinitely branching. Such
peculiarities of \wPN motivate our study of three different tools for
analysing them.
First, we consider, in Section~\ref{sec:solv-term}, a variant of the
Karp and Miller tree \cite{KM69} that applies to \wPN. In order to
cope with the infinite branching of the semantics of \wPN, we need to
introduce in the Karp and Miller tree $\omega$'s that are not the
result of accelerations but the result of $\omega$-output arcs.  Our
variant of the Karp and Miller construction is {\em recursive}, this
allows us to tame the technicality of the proof, and as a consequence,
our proof when applied to {\em plain} Petri nets, provides a
simplification of the original proof by Karp and Miller.
Second, in Section~\ref{sec:solv-cover-place}, we show how to
construct, from an \wPN, a plain Petri net that preserve its
reachability set. This reduction allows to prove that many bounds on
the algorithmic complexity of (plain) PN problems apply to \wPN
too. However, it does not preserve {\em termination}. Thus, we study,
in Section~\ref{sec:extend-rack-techn}, as a third contribution, an
extension of the self-covering path technique due to
Rackoff~\cite{Rackoff78}. This technique allows to provide a direct
proof of {\sc ExpSpace} upper bounds for several classical decision
problems, and in particular, this allows to prove {\sc ExpSpace}
completeness of the termination problem.

Finally, in Section~\ref{sec:extensions}, as a additional
contribution, and to get a complete picture, we consider extensions of
\wPN with {\em reset} and {\em transfer} arcs~\cite{DFS98}. For those
extensions, the decidability results for reset and transfer nets
(without $\omega$ arcs) also apply to our extension with the notable
exception of the termination problem that becomes, as we show here,
undecidable.  The summary of our results are given in
Table~\ref{tab:complexity}.

\begin{table}[t!]
  \caption{Complexity results on \wPN (with the section numbers   
    where the results are proved). \wIPNR (\wOPNR) and \wIPNT 
    (\wOPNT) denote resp. Petri nets with reset (R) and transfer 
    (T) arcs with \w on input (output) arcs only.}
  \centering
  \begin{tabular}{|l||p{.23\linewidth}|p{.23\linewidth}|p{.23\linewidth}|}
    \hline
    Problem           &\wPN                                                          &\wPNT                                                 &\wPNR\\
    \hline\hline
    Reachability      
    &Decidable and \textsc{ExpSpace}-hard (\ref{sec:reach})
    &\multirow{2}{\linewidth}{Undecidable (\ref{sec:extensions})}             
    &\multirow{3}{\linewidth}{Undecidable (\ref{sec:extensions})} \\\cline{1-2}
    %
    %
    Place-boundedness 
    &\multirow{3}{\linewidth}{\textsc{ExpSpace}-c (\ref{sec:reach})} 
    & 
    & 
    \\\cline{1-1}\cline{3-3}
    Boundedness
    & 
    &Decidable (\ref{sec:extensions})             
    & 
    \\\cline{1-1}\cline{3-4}
    Coverability 
    &  
    &\multicolumn{2}{|c|}{Decidable and Ackerman-hard (\ref{sec:extensions})}\\\hline
  \end{tabular}
  \medskip

  \begin{tabular}{|l||p{.25\linewidth}|p{.27\linewidth}|p{.25\linewidth}|}
    \hline
    Problem        &\wPN                                                  &\wOPNT, \wOPNR                        &\wIPNT, \wIPNR\\
    \hline\hline
    Termination    &\textsc{ExpSpace}-c (\ref{sec:extend-rack-techn})     &Undecidable (\ref{sec:extensions})    &Decidable and Ackerman-hard (\ref{sec:extensions})    \\
    \hline
  \end{tabular}
  \label{tab:complexity}
\end{table}

\paragraph{{\bf Related works}}
\wPN are well-structured transition systems~\cite{FS01}. The set
saturation technique~\cite{ACJT96} and so symbolic backward analysis
can be applied to them while the finite tree unfolding is not
applicable because of the infinite branching property of \wPN. For the
same reason, \wPN are \emph{not} well-structured nets \cite{FMP04}.

In~\cite{BJK2010}, Bradzil \textit{et al.} extends the Rackoff
technique to VASS games with $\omega$ output arcs. While this
extension of the Rackoff technique is technically close to ours, we
cannot directly use their results to solve the termination problem of
\wPN.

Several works (see for instance \cite{DRV02,Del03} rely on Petri nets
to model \emph{parametric systems} and perform \emph{parametrised
  verification}. However, in all these works, the dynamic creation of
threads uses the same pattern as in Fig.~\ref{fig:example-intro} (b),
and does not preserve termination. \wPN allow to model more faithfully
the dynamic creation of an unbounded number of threads, and are thus
better suited to model new programming paradigms (such as those use in
GCD \cite{GHR12}) that have been recently proposed to better support
multi-core platforms.

{\bf Remark}: due to lack of space, most proofs can be found in the
appendix.


\section{\w-Petri nets\label{sec:omega-petri-nets}}
Let us define the syntax and semantics of our Petri net extension,
called \emph{$\omega$ Petri nets} (\wPN for short). Let $\omega$ be a
symbol that denotes `any positive integer value'. We extend the
arithmetic and the $\leq$ ordering on $\ZZ$ to $\ZZ\cup\{\w\}$ as
follows: $\w+\w=\w-\w=\w$; and for all $c\in\ZZ$: $c+\w=\w+c=\w-c=\w$;
$c-\w=c$; and $c\leq\w$. The fact that $c-\w=c$ might sound surprising
but will be justified later when we introduce $\wPN$. An
\emph{\w-multiset} (or simply \emph{multiset}) of elements from $S$ is
a function $m:S\mapsto\mathbb{N}\cup\{\w\}$. We denote multisets $m$
of $S=\{s_1, s_2,\ldots, s_n\}$ by extension using the syntax
$\{m(s_1)\otimes s_1, m(s_2)\otimes s_2,\ldots, m(s_n)\otimes s_n\}$
(when $m(s)=1$, we write $s$ instead of $m(s)\otimes s$, and we omit
elements $m(s)\otimes s$ when $m(s)=0$). Given two multisets $m_1$ and
$m_2$, and an integer value $c$ we let $m_1+ m_2$ be the multiset
s.t. $(m_1+ m_2)(p)=m_1(p)+m_2(p)$; $m_1-m_2$ be the multiset
s.t. $(m_1- m_2)(p)=m_1(p)-m_2(p)$; and $c\cdot m_1$ be the multiset
s.t. $(c\cdot m_1)(p)=c\times m_1(p)$ for all $p\in P$.

\paragraph{Syntax} Syntactically, \wPN extend plain Petri nets
\cite{petri62institut,reisig} by allowing (input and output) arcs to
be labeled by $\w$. Intuitively, if a transition $t$ has $\w$ as
output (resp. input) effect on place $p$, the firing of $t$
non-deterministically creates (consumes) a positive number of tokens
in $p$.

\begin{definition}
  A \emph{Petri net with \w-arcs} (\wPN) is a tuple $\Nn=\tuple{P,T}$
  where: $P$ is a finite set of \emph{places}; $T$ a finite set of
  \emph{transitions}. Each transition is a pair $t=(I,O)$, where:
  $I:P\rightarrow \NN\cup\{\w\}$ and $O:P\rightarrow \NN\cup\{\w\}$,
  give respectively the input (output) effect $I(p)$ ($O(p)$) of $t$
  on place $p$.
\end{definition}
By abuse of notation, we denote by $I(t)$ (resp. $O(t)$) the functions
s.t. $t=(I(t), O(t))$. When convenient, we sometimes regard $I(t)$ or
$O(t)$ as \emph{\w-multisets} of places.  Whenever there is $p$
s.t. $O(t)(p)=\w$ (resp. $I(t)(p)=\w$), we say that $t$ is an
\emph{$\w$-output-transition} (\emph{$\w$-input-transition}). A
transition $t$ is an $\w$-transition iff it is an
$\w$-output-transition or an $\w$-input-transition. Otherwise, $t$ is
a \emph{plain} transition. Remark that a (plain) Petri net is an \wPN
with plain transitions only. Moreover, when an \wPN contains no
$\w$-output-transitions (resp. no $\w$-input transitions), we say that
it is an $\w$-input-PN ($\w$-output-PN), or \wIPN (\wOPN) for
short. For all transitions $t$, we denote by $\effect(t)$ the function
$O(t)-I(t)$. Remark that $\effect(t)(p)$ could be $\omega$ for some
$p$ (in particular when $O(t)(p)=I(t)(p)=\omega$).  Intuitively,
$\effect(t)(p)=\omega$ models the fact that firing $t$ can increase
the marking of $p$ by an arbitrary number of tokens. Finally, observe
that $O(t)(p)=c\neq\w$ and $I(t)(p)=\w$ \emph{implies}
$\effect(t)(p)=c-\w=c$. This models the fact that firing $t$ can at
most increase the marking of $p$ by $c$ tokens. Thus, intuitively, the
value $\effect(t)(p)$ models the \emph{maximal possible effect} of $t$
on $p$. We extend the definition of $\effect$ to sequences of
transitions $\sigma=t_1t_2\cdots t_n$ by letting
$\effect(\sigma)=\sum_{i=1}^n \effect(t_i)$.

A \emph{marking} is a function $P\mapsto\NN$. An \emph{$\w$-marking}
is a function $P\mapsto\NN\cup\{\w\}$, i.e. an \w-multiset
on~$P$. Remark that any marking is an $\w$-marking, and that, for all
transitions $t=(I,O)$, $I$ and $O$ are both $\w$-markings.  We denote
by ${\bf 0}$ the marking s.t. ${\bf 0}(p)=0$ for all $p\in P$. For all
\w-markings $m$, we let $\pom{m}$ be the set of places $\{p\mid
m(p)=\omega\}$, and let $\nbw{m}=|\pom{m}|$.  We define \emph{the
  concretisation} of $m$ as the set of all markings that coincide with
$m$ on all places $p\not\in\pom{m}$, and take an arbitrary value in
any place from $\pom{m}$. Formally: $\gamma(m)=\{m'\mid \forall
p\not\in \pom{m}: m'(p)=m(p)\}$. We further define a family of
orderings on $\w$-markings as follows. For any $P'\subseteq P$, we let
$m_1\mleqp{P'}m_2$ iff $(i)$ for all $p\in P'$: $m_1(p)\leq m_2(p)$,
and $(ii)$ for all $p\in P\setminus P'$: $m_1(p)=m_2(p)$. We
abbreviate $\mleqp{P}$ by $\mleq$ (where $P$ is the set of places of
the \wPN). It is well-known that $\mleq$ is a \emph{well-quasi
  ordering} (wqo), that is, we can extract, from any infinite sequence
$m_1, m_2,\ldots, m_i,\ldots$ of markings, an infinite subsequence
$\mbar_1, \mbar_2,\ldots, \mbar_i,\ldots$ s.t. $\mbar_i\mleq
\mbar_{i+1}$ for all $i\geq 1$. For all \w-markings $m$, we let
$\dc{m}$ be the \emph{downward-closure} of $m$, defined as
$\dc{m}=\{m'\mid m'\textrm{ is a marking and } m'\mleq m\}$. We extend
$\downarrow$ to sets of \w-markings: $\dc{S}=\cup_{m\in S}\dc{m}$. A
set $D$ of markings is \emph{downward-closed} iff $\dc{D}=D$. It is
well-known that (possibly infinite) downward-closed sets of markings
can always be represented by a finite set of \w-markings, because the
set of \w-markings forms an \emph{adequate domain of limits}
\cite{DBLP:journals/jcss/GeeraertsRB06}: for all downward-closed sets
$D$ of markings, there exists a finite set $M$ of \w-markings
s.t. $\dc{M}=D$. We associate, to each \wPN, an \emph{intial marking}
$m_0$. From now on, we consider mostly initialised \wPN $\tuple{P, T,
  m_0}$.

\begin{figure}
  \centering
  \begin{tikzpicture}
    \node[place,label={$p_1$}] (p1) {$\bullet$} ;
    \node[transition,label={$t_1$}] (t1) [right=of p1] {} ;
    \node[place,label={$p_2$}] (p2) [right=of t1] {} ;
    \node[transition, label={$t_2$}] (t2) [right=of p2] {} ;
    \node[place, label={$p_3$}] (p3) [right=of t2] {} ;
    \node[transition, label=right:{$t_4$}, color=red] (t4) [below=of p3] {} ;
    \node[transition, label=right:{$t_3$}] (t3) [below=of p2] {} ;

    \path
    (t1) edge [pre] (p1)
         edge [post] node [above] {$\w$} (p2)
    (t3) edge [pre] (p2)
    (t2) edge [post] node [above] {$2$} (p3)
         edge [pre] (p2)
    (t4) edge [post, bend right, color=red] (p3) 
         edge [pre, bend left, color=red] (p3) ;
  \end{tikzpicture}
  \caption{An example \wPN $\Nn_1$. The \wPN $\Nn_1'$ is obtained by
    removing transition $t_4$ (red).}
  \label{fig:examplewPN}
\end{figure}
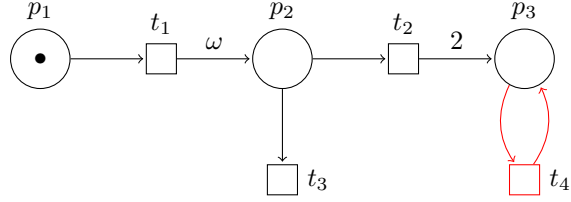

\begin{example}
  An example of an \wPN (actually an \wOPN) $\Nn_1=\tuple{P,T,m_0}$ is
  shown in Fig.~\ref{fig:examplewPN}. In this example, $P=\{p_1, p_2,
  p_3\}$, $T=\{t_1,t_2,t_3,t_4\}$, $m_0(p_1)=1$ and
  $m_0(p_2)=m_0(p_3)=0$. $t_1$ is the only $\w$-transition, with
  $O(t_1)(p_2)=\w$. This \wPN will serve as a running example
  throughout the section.
\end{example}

\paragraph{Semantics} Let $m$ be an \emph{$\w$-marking}. A transition
$t=(I,O)$ is \emph{firable from $m$} iff: $m(p) \mgeq I(p)$ for all
$p$ s.t. $I(p)\neq\omega$. We consider two kinds of possible effects
for~$t$. The first is the \emph{concrete semantics} and applies only
when $m$ is a \emph{marking}. In this case, firing $t$ yields a new
marking $m'$ s.t. for all $p\in P$: $m'(p)=m(p)-i+o$ where:
$i=I(t)(p)$ if $I(t)(p)\neq \omega$, $i\in \{0,\ldots,m(p)\}$ if
$I(t)(p)=\omega$, $o=O(t)(p)$ if $O(t)(p)\neq\omega$ and $o\geq 0$ if
$O(t)(p)=\omega$. This is denoted by $m\xrightarrow{t}m'$. Thus,
intuitively, $I(t)(p)=\omega$ (resp. $O(t)(p)=\omega$) means that $t$
consumes (produces) an arbitrary number of tokens in $p$ when
fired. Remark that, in the concrete semantics, $\w$-transitions are
\emph{non-deterministic}: when $t$ is an $\w$-transitions that is
firable in $m$, there are \emph{infinitely many} $m'$
s.t. $m\xrightarrow{t} m'$. The latter semantics is the
\emph{$\w$-semantics}. In this case, firing $t=(I,O)$ yields the
(unique) $\w$-marking $m'=m-I+O$ (denoted $m\xrightarrow{t}_\w
m'$). Remark that $m\xrightarrow{t}m'$ iff $m\xrightarrow{t}_\w m'$
when $m$ and $m'$ are markings.

We extend the $\rightarrow$ and $\rightarrow_\w$ relations to finite
or infinite sequences of transitions in the usual way. Also we write
$m\xrightarrow{\sigma}$ iff $\sigma$ is \emph{firable} from
$m$. More precisely, for a finite sequence of transitions
$\sigma=t_1\cdots t_n$, we write $m\xrightarrow{\sigma}$ iff there
are $m_1$, \ldots, $m_n$ s.t. for all $1\leq i\leq n$:
$m_{i-1}\xrightarrow{t_i}m_i$. For an infinite sequence of transitions
$\sigma=t_1\cdots t_j\cdots$, we write $m_0\xrightarrow{\sigma}$ iff
there are $m_1, \ldots, m_j,\ldots$ s.t. for all $i\geq 1$:
$m_{i-1}\xrightarrow{t_i}m_i$.

Given an \wPN $\Nn=\tuple{P,T,m_0}$, an \emph{execution} of $\Nn$ is
either a finite sequence of the form $m_0,t_1,m_1,t_2,\ldots,t_n, m_n$
s.t. $m_0\xrightarrow{t_1}m_1\xrightarrow{t_2}\cdots
\xrightarrow{t_n}m_n$, or an infinite sequence of the form
$m_0,t_1,m_1,t_2,\ldots,t_j, m_j,\ldots$ s.t. for all $j\geq 1$:
$m_{j-1}\xrightarrow{t_j} m_j$. We denote by $\reach(\Nn)$ the set of
markings $\{m\mid \exists \sigma\textrm{ s.t. }
m_0\xrightarrow{\sigma} m\}$ that are reachable from $m_0$ in
$\Nn$. Finally, a \emph{finite set of \w-markings} ${\cal CS}$ is a
\emph{coverability set} of $\Nn$ (with initial marking $m_0$) iff
$\dc{{\cal CS}}=\dc{\reach(\Nn)}$. That is, any coverability set
${\cal CS}$ is a \emph{finite representation of the downward-closure
  of $\Nn$'s reachable markings}.

\begin{example}\label{example:intro-2}
  The sequence $t_1t_2^K$ is firable for all $K\geq 0$ in $\Nn_1$
  (Fig.~\ref{fig:examplewPN}). Indeed, for each $K\geq 0$, one
  possible execution corresponding to $t_1t_2^K$ is given by
  $\tuple{1,0,0} \xrightarrow{t_1}\tuple{0,3K,0}
  \xrightarrow{t_2}\tuple{0,3K-1,2} \xrightarrow{t_2}\tuple{0,3K-2,4}
  \xrightarrow{t_2}\cdots \xrightarrow{t_2}\tuple{0,2K,2K}$. Remark
  that there are other possible executions corresponding to the same
  sequence of transitions, because the number of tokens created by
  $t_1$ in $p_2$ is chosen non-deterministically. Also,
  $t_1t_2t_4^\omega$ is an infinite firable sequence of
  transitions. Finally, observe that the set of reachable markings in
  $\Nn_1$ is $\reach(\Nn)=\{\tuple{1,0,0}\}\cup\{\tuple{0, i, 2\times
    j}\mid i,j\in\mathbb{N}\}$. The set of \w markings ${\cal
    CS}=\{\tuple{1,0,0}, \tuple{0,\w,\w}\}$ is a coverability set of
  $\Nn$. Note that $\dc{{\cal CS}}\supsetneq \reach(\Nn)$: for
  instance, $\tuple{0,1,1}\in \dc{{\cal CS}}$, but $\tuple{0,1,1}$ is
  not reachable.
\end{example}
  
Let us now observe two properties of the semantics of \wPN, that will
be useful for the proofs of Section~\ref{sec:solv-term}. The first
says that, when firing a sequence of transitions $\sigma$ that have
non \w-labeled arcs on to and from some place $p$, the effect of
$\sigma$ on $p$ is as in a plain PN:
\begin{lemma}\label{lemma:when-no-omega-the-effect-is-the-same-as-PN}
  Let $m$ and $m'$ be two markings and let $\sigma=t_1\cdots t_n$ be a
  sequence of transitions of an \wPN s.t. $m\xrightarrow{\sigma}
  m'$. Let $p$ be a place s.t. for all $1\leq i\leq n$:
  $O(t_i)(p)\neq\w\neq I(t_i)(p)$. Then,
  $m'(p)=m(p)+\effect(\sigma)(p)$.
\end{lemma}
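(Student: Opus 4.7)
The plan is to prove this by a straightforward induction on the length $n$ of the sequence $\sigma$. The intuition is simple: the non-determinism in the concrete firing relation $\xrightarrow{t}$ at a place $p$ arises \emph{only} when $I(t)(p) = \omega$ or $O(t)(p) = \omega$. Under the hypothesis of the lemma, neither occurs for any $t_i$ in $\sigma$, so the transition behaves at place $p$ exactly as a plain Petri net transition, and effects accumulate additively.

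For the base case $n = 0$, we have $m = m'$ and $\effect(\sigma)(p) = 0$, so the equality trivially holds. For the inductive step, suppose $m \xrightarrow{t_1 \cdots t_{n-1}} m_{n-1} \xrightarrow{t_n} m'$. By the induction hypothesis applied to $t_1 \cdots t_{n-1}$ (noting that the no-$\omega$ hypothesis is preserved on the prefix), we have $m_{n-1}(p) = m(p) + \sum_{i=1}^{n-1} \effect(t_i)(p)$. Now I would unfold the definition of $m_{n-1} \xrightarrow{t_n} m'$: by the concrete semantics, $m'(p) = m_{n-1}(p) - i + o$ where $i$ is chosen from $\{0,\ldots,m_{n-1}(p)\}$ if $I(t_n)(p) = \omega$ and equals $I(t_n)(p)$ otherwise, with a symmetric convention for $o$. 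Since by hypothesis $I(t_n)(p) \neq \omega$ and $O(t_n)(p) \neq \omega$, both values are forced: $i = I(t_n)(p)$ and $o = O(t_n)(p)$, hence $m'(p) = m_{n-1}(p) + O(t_n)(p) - I(t_n)(p) = m_{n-1}(p) + \effect(t_n)(p)$. Combining with the induction hypothesis yields $m'(p) = m(p) + \sum_{i=1}^n \effect(t_i)(p) = m(p) + \effect(\sigma)(p)$.

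There is no real obstacle here; the only subtlety worth highlighting is that $\effect(t_i)(p) = O(t_i)(p) - I(t_i)(p)$ is a genuine integer (not $\omega$) for each $i$ under the hypothesis, so the sum $\sum_{i=1}^n \effect(t_i)(p)$ is a well-defined integer and the equation $m'(p) = m(p) + \effect(\sigma)(p)$ makes sense without invoking the extended arithmetic involving $\omega$. This is precisely why the lemma is useful later: it isolates places on which the \wPN behaves like a plain PN, enabling arguments that transfer classical Petri-net reasoning to such places.
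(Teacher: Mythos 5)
Your proof is correct, and it is exactly the routine argument the paper has in mind: the paper states this lemma as an immediate observation about the concrete semantics and provides no proof for it (even in the appendix). Your induction on $|\sigma|$, noting that the absence of $\omega$ on the arcs touching $p$ forces the consumed and produced amounts to be $i = I(t_n)(p)$ and $o = O(t_n)(p)$ so that effects add up as in a plain PN, is precisely the justification the authors leave implicit.
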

The latter property says that the set of markings that are reachable
by a given sequence of transitions $\sigma$ is upward-closed
w.r.t. $\mleqp{P'}$, where $P'$ is the set of places where the effect
of $\sigma$ is $\w$.
\begin{lemma}\label{lemma:reachable-markings-is-upward-closed}
  Let $m_1$, $m_2$ and $m_3$ be three markings, and let $\sigma$ be a
  sequence of transitions s.t. $(i)$ $m_1\xrightarrow{\sigma}m_2$,
  $(ii)$ $m_3\mgeqp{P'} m_2$ with $P'=\{p\mid
  \effect(\sigma)(p)=\w\}$. Then, $m_1\xrightarrow{\sigma}m_3$ holds
  too.
\end{lemma}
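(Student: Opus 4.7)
The plan is to build an explicit witness of $m_1 \xrightarrow{\sigma} m_3$ by perturbing the given firing of $\sigma$ from $m_1$ to $m_2$. First I would fix such a firing, say $m_1 = n_0 \xrightarrow{t_1} n_1 \xrightarrow{t_2} \cdots \xrightarrow{t_n} n_n = m_2$, and for every step $i$ and place $p$ let $\iota_i(p), o_i(p) \in \NN$ denote the actual number of tokens consumed and produced at that step: these coincide with $I(t_i)(p)$ and $O(t_i)(p)$ on plain arcs, and are nondeterministic choices permitted by the $\w$-semantics on $\w$-arcs.

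The key observation is that for every $p \in P'$, the condition $\effect(\sigma)(p) = \w$ together with the arithmetic rules ($c - \w = c$ and $c + \w = \w$) forces the existence of some index $j_p$ with $O(t_{j_p})(p) = \w$: an $\w$ can only survive in $\sum_i \effect(t_i)(p)$ through an $\w$-output arc, since an $\w$-input arc alone contributes $O(t_i)(p) - \w = O(t_i)(p) \in \NN$. For each $p \in P'$, I would then define a modified firing of $\sigma$ that agrees with the original everywhere except at step $j_p$, where the quantity produced on $p$ is raised from $o_{j_p}(p)$ to $o_{j_p}(p) + (m_3(p) - m_2(p))$. This is a legal resolution of the $\w$-output because $m_3(p) - m_2(p) \in \NN$ by hypothesis~(ii). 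The adjustments for different places $p \in P'$ are independent and can be applied simultaneously.

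Letting $n'_i$ be the resulting intermediate markings, a straightforward induction on $i$ shows that $n'_i \mgeq n_i$ pointwise, with equality outside $P'$, and that on $p \in P'$ the surplus $m_3(p)-m_2(p)$ kicks in exactly from step $j_p$ onwards. Since $n'_{i-1}(p) \geq n_{i-1}(p)$, every firability test $n'_{i-1}(p) \geq I(t_i)(p)$ (for plain input arcs) and every nondeterministic $\w$-input choice that was valid in the original firing remains valid for the new one, so the construction is well-defined. Evaluating at $i = n$ gives $n'_n(p) = m_2(p) = m_3(p)$ for $p \notin P'$ (second clause of $m_3 \mgeqp{P'} m_2$) and $n'_n(p) = m_2(p) + (m_3(p) - m_2(p)) = m_3(p)$ for $p \in P'$, yielding $m_1 \xrightarrow{\sigma} m_3$.

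There is no real obstacle in this argument; the only step meriting a sentence of justification is the per-place existence of an $\w$-output transition in $\sigma$, which is precisely what the arithmetic remark on $\w$ provides and which also explains why the analogous statement would fail for input-only $\w$-arcs.
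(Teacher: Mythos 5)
Your proof is correct. Note that the paper itself never proves this lemma: it is stated in Section~2 as an observation about the semantics, and no proof appears in the appendix (unlike Lemmas~\ref{lemma:terminates-iff-no-s-c-e}, \ref{lemma:path-implies-execution}, etc.), so there is no paper proof to compare against. Your argument supplies exactly the justification one would expect: the two load-bearing points are (a) that $\effect(\sigma)(p)=\w$ forces some $O(t_{j_p})(p)=\w$, since an $\w$ on an input arc alone contributes $c-\w=c\in\ZZ$ to the sum and hence cannot create an $\w$ in $\effect(\sigma)(p)$, and (b) that inflating the nondeterministic output choice at step $j_p$ by $m_3(p)-m_2(p)\geq 0$ only increases the intermediate markings on $p$, so every plain-arc firability test and every $\w$-input consumption choice of the original run remains valid. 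Both are handled correctly, the per-place adjustments indeed commute since token counts on distinct places evolve independently, and the final accounting gives exactly $m_3$.
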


\paragraph{Problems}
We consider the following problems. Let $\Nn=(P,T,m_0)$ be an \wPN:
\begin{enumerate}
\item The \emph{reachability problem} asks, given a marking $m$,
  whether $m\in\reach(N)$.
\item The \emph{place boundedness problem} asks, given a place $p$ of
  $\Nn$, whether there exists $K\in\mathbb{N}$ s.t. for all $m\in
  \reach(\Nn)$: $m(p)\leq K$. If the answer is positive, we say that
  $p$ is \emph{bounded} (from $m_0$).
\item The \emph{boundedness problem} asks whether all places of $\Nn$
  are bounded (from $m_0$).
\item The \emph{covering problem} asks, given a marking $m$ of $\Nn$,
  whether there exists $m'\in\reach(\Nn)$ s.t. $m'\mgeq m$.
\item The \emph{termination problem} asks whether all executions of
  $\Nn$ are finite.
\end{enumerate}

Remark that a \emph{coverability set} of the \wPN is sufficient to
solve \emph{boundedness}, \emph{place boundedness} and
\emph{covering}, as in the case of Petri nets. If ${\cal CS}$ is a
coverability set of $\Nn$, then: $(i)$ $p$ is bounded iff
$m(p)\neq\omega$ for all $m\in {\cal CS}$; $(ii)$ $\Nn$ is bounded iff
$m(p)\neq\omega$ for all $p$ and for all $m\in {\cal CS}$; and
$(iii)$, $\Nn$ can cover $m$ iff there exists $m'\in {\cal CS}$
s.t. $m\mleq m'$.  As in the plain Petri nets case, a sufficient and
necessary condition of non-termination is the existence of a
\emph{self covering execution}.  A \emph{self covering execution} of
an \wPN $\Nn=\tuple{P,T,m_0}$ is a \emph{finite} execution of the form
$m_0\xrightarrow{t_1}m_1\cdots
\xrightarrow{t_k}m_k\xrightarrow{t_{k+1}}\cdots\xrightarrow{t_n}m_n$
with $m_n\mgeq m_k$:

\begin{lemma}\label{lemma:terminates-iff-no-s-c-e}
  An \wPN terminates iff it admits no self-covering execution.
\end{lemma}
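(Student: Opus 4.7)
I will prove both directions by contrapositive, showing that $\Nn$ admits a self-covering execution if and only if it admits an infinite execution.

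The easy direction (infinite $\Rightarrow$ self-covering) follows from the well-quasi-ordering of $\mleq$ on markings, as noted just before Lemma~\ref{lemma:when-no-omega-the-effect-is-the-same-as-PN}: any infinite execution $m_0 \xrightarrow{t_1} m_1 \xrightarrow{t_2} \cdots$ contains two indices $i < j$ with $m_i \mleq m_j$, so the truncation up to $m_j$ is a self-covering execution.

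For the converse direction, the key step is a monotonicity lemma adapted to the concrete semantics of \wPN: if $m \xrightarrow{t} m'$ and $\hat m \mgeq m$, then there exists $\hat m' \mgeq m'$ with $\hat m \xrightarrow{t} \hat m'$. I plan to prove this place by place. On a place $p$ with $I(t)(p), O(t)(p) \in \NN$, the additional $\hat m(p)-m(p)$ tokens simply carry over. On a place with $O(t)(p)=\w$, the firing from $\hat m$ can choose to produce the same number of output tokens as the firing from $m$, which suffices. On a place with $I(t)(p)=\w$, the firing from $\hat m$ can consume exactly the same number $i \leq m(p) \leq \hat m(p)$ of tokens as the first firing, leaving the surplus untouched. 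A straightforward induction on the length then extends monotonicity to sequences: $m \xrightarrow{\sigma} m'$ and $\hat m \mgeq m$ imply $\hat m \xrightarrow{\sigma} \hat m'$ for some $\hat m' \mgeq m'$.

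Given a self-covering execution $m_0 \xrightarrow{\sigma_1} m_k \xrightarrow{\sigma_2} m_n$ with $m_n \mgeq m_k$, I can then iteratively apply the above to the loop $\sigma_2$, obtaining $m_n \xrightarrow{\sigma_2} m_n^{(1)}$ with $m_n^{(1)} \mgeq m_n$, then $m_n^{(1)} \xrightarrow{\sigma_2} m_n^{(2)} \mgeq m_n^{(1)}$, and so on, which concatenated with $\sigma_1$ yields the required infinite execution. The main subtlety, compared to the plain Petri net case, lies in the monotonicity step itself: the non-determinism of $\w$-input arcs forces us to commit explicitly to a consumption choice that preserves $\mgeq$, rather than relying on the simple additive semantics of plain PN.
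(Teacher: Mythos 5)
Your proposal is correct and follows essentially the same route as the paper: the wqo argument for the ``infinite implies self-covering'' direction is identical, and your strong monotonicity lemma (replaying the same non-deterministic consumption/production choices from a larger marking) is exactly what the paper's proof does implicitly when it defines the repeated loop markings $m^j_{k+i}$ by adding the original differences $m_{k+i}-m_{k+i-1}$ to the previous marking. The only presentational difference is that you factor this monotonicity out as an explicit lemma proved place by place, whereas the paper inlines it via the explicit construction of the infinite execution.
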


\begin{example}
  Consider again the \wPN $\Nn_1$ in Fig.~\ref{fig:examplewPN}. Recall
  from Example~\ref{example:intro-2} that, for all $K\geq 0$,
  $t_1t_2^K$ is firable and allows to \emph{reach}
  $\tuple{0,2K,2K}$. All these markings are thus \emph{reachable}.
  These sequences of transitions also show that $p_2$ and $p_3$ are
  \emph{unbounded} (hence, $\Nn_1$ is unbounded too), while $p_1$ is
  \emph{bounded}. Marking $\tuple{0,1,1}$ is \emph{not reachable} but
  \emph{coverable}, while $\tuple{2,0,0}$ is neither reachable nor
  coverable. Finally, $\Nn_1$ does not terminate (because
  $t_1t_2t_4^\omega$ is firable), while $\Nn_1'$ does. In particular,
  \emph{in $\Nn_1'$}, $t_3$ can fire only a \emph{finite} number of
  time, because $t_1$ will always create a finite (albeit unbounded)
  number of tokens in $p_2$. This an important difference between \wPN
  and plain PN: no unbounded PNs terminates, while there are unbounded
  \wPN that terminate, e.g. $\Nn_1'$.
\end{example}


\section{A Karp and Miller procedure for
  \wPN\label{sec:solv-term}}
In this section, we presents an extension of the classical \treename
procedure \cite{KM69}, adapted to \wPN. We show that the finite tree
built by this algorithm (coined the \treeshortname tree), allows, as
in the case of PNs, to decide \emph{boundedness}, \emph{place
  boundednes}, \emph{coverability} and \emph{termination} on \wPN.

Before describing the algorithm, we discuss intuitively the
\treeshortname trees of the \wPN $\Nn_1$ and $\Nn_1'$ given in
Fig.~\ref{fig:examplewPN}. Their respective \treeshortname trees (for
the initial marking $m_0=\tuple{1,0,0}$) are $\Tt_1$ and $\Tt_1'$,
respectively the tree in Fig.~\ref{fig:example-tree} and its
\emph{black subtree} (i.e., excluding $n_7$). As can be observed, the
nodes and edges of a \treeshortname tree are labeled by \w-markings
and transitions respectively. The relationship between a
\treeshortname tree and the executions of the corresponding \wPN can
be formalised using the notion of \emph{stuttering path}. Intuitively,
a stuttering path is a sequence of nodes $n_1,n_2,\ldots, n_k$
s.t. for all $i\geq 2$: either $n_i$ is a son of $n_{i-1}$, or $n_i$
is an \emph{ancestor} of $n_{i-1}$ \emph{that has the same label} as
$n_{i-1}$.  For instance, $\pi=n_1,n_2,n_4,n_2,n_3,n_6, n_3, n_5, n_3,
n_5$ is a stuttering path in $\Tt_1'$. Then, we claim $(i)$ that
\emph{every execution of the \wPN is simulated by a stuttering path}
in its \treeshortname tree, and that $(ii)$ \emph{every stuttering
  path in the \treeshortname tree corresponds to a family of
  executions of the \wPN}, where an arbitrary number of tokens can be
produced in the places marked by \w in the \treeshortname tree. For
instance, the execution $m_0,t_1,\tuple{0, 42, 0},t_3,\tuple{0, 41,
  0},t_2,\tuple{0, 40, 2},t_3,\tuple{0, 39, 2},t_2,\tuple{0, 38,
  4},t_2,\tuple{0, 37, 6}$, of $\Nn_1'$ is witnessed in $\Tt_1'$ by
the stuttering path $\pi$ given above -- observe that the sequence of
edge labels in $\pi$'s equals the sequence of transitions of the
execution, and that all markings along the execution are
\emph{covered} by the labels of the corresponding nodes in $\pi$:
$m_0\in\gamma(n_1)$, $\tuple{0, 42, 0}\in\gamma(n_2)$, and so
forth. On the other hand, the stuttering path $n_1,n_2,n_3$ of $\Nn_1$
summarises all the (infinitely many) possible executions obtained by
firing a sequence of the form $t_1t_2^n$. Indeed, for all $k\geq 1$,
$\ell\geq 0$: $m_0,t_1,\tuple{0,k+\ell,0},t_2,\tuple{0,
  k+\ell-1,2},t_2,\ldots,t_2,\tuple{0,k,2\times\ell}$ is an execution of
$\Nn_1$, so, an arbitrary number of tokens can be obtained in both
$p_2$ and $p_3$ by firing sequences of the form $t_1t_2^n$. Finally,
observe that a \emph{self-covering execution} of $\Nn_1$, such as
$m_0,t_1,\tuple{0,1,0},t_2, \tuple{0,0,2},t_4,\tuple{0,0,2}$ can be
detected in $\Tt_1$, by considering the path $n_1,n_2,n_3,n_7$, and
noting that the label of $(n_3,n_7)$ is $t_4$ with
$\effect(t_4)\mgeq{\bf 0}$.

 \begin{figure}[t!]
   \centering
   \begin{tikzpicture}
     \node[draw, label=above right:{$n_1$}] (n1) {$\tuple{1,0,0}$} ;
     \node[draw, label=above right:{$n_2$}] (n2) [below=of n1] {$\tuple{0,\w,0}$} ;
     \node[draw, label=above left:{$n_3$}] (n3) [below left=of n2] {$\tuple{0,\w,\w}$} ;
     \node[draw, label=above right:{$n_4$}] (n4) [below right=of n2] {$\tuple{0,\w,0}$} ;
     \node[draw, label=above left:{$n_5$}] (n5) [below left=of n3] {$\tuple{0,\w,\w}$} ;
     \node[draw, label=above right:{$n_6$}] (n6) [below=of n3] {$\tuple{0,\w,\w}$} ;
     \node[draw, label=above right:{\color{red}$n_7$}, color=red] (n7) [below right=of n3] {$\tuple{0,\w,\w}$} ;
     
     \path[-latex]
     (n1) edge node [left] {$t_1$} (n2)
     (n2) edge node [above] {$t_2$} (n3)
          edge node [above] {$t_3$} (n4)
     (n3) edge node [above] {$t_2$} (n5)
          edge[color=red] node [above, color=red] {$t_4$} (n7)
          edge node [left] {$t_3$} (n6)
     ;  
   \end{tikzpicture}
   \caption{The \treeshortname trees $\Tt_1$ (whole tree) and $\Tt_1'$
     (black subtree) of resp. $\Nn_1$ and
     $\Nn_1'$.\label{fig:example-tree}}

\end{figure}

\paragraph{The \treealgo algorithm} Let us now show how to build
algorithmically the \treeshortname of an \wPN. Recall that, \emph{in
  the case of plain PNs}, the Karp\& Miller tree \cite{KM69} can be
regarded as a \emph{finite over-approximation of the (potentially
  infinite) reachability tree of the PN}. Thus, the Karp\& Miller
algorithm works by unfolding the transition relation of the PN, and
adds two ingredients to guarantee that the tree is finite. First, a
node $n$ that has an ancestor $n'$ \emph{with the same label} is
\emph{not developed} (it has no children). Second, when a node $n$
with label $m$ has an ancestor $n'$ with label $m'\ml m$, an
\emph{acceleration function} is applied to produce a marking $m_\w$
s.t. $m_\w(p)=\w$ if $m(p)>m'(p)$ and $m_\w(p)=m(p)$ otherwise. This
acceleration is \emph{sound} wrt to coverability since the sequence of
transition that has produced the branch $(n,n')$ can be iterated an
arbitrary number of times, thus producing arbitrary large numbers of
tokens in the places marked by $\w$ in $m_\w$. Remark that these two
constructions are not sufficient to ensure termination of the
algorithm in the case of \wPN, as \wPN are \emph{not finitely
  branching} (firing an $\w$-output-transition can produce infinitely
many different successors). To cope with this difficulty, our solution
unfolds the \emph{$\w$-semantics} $\rightarrow_\w$ instead of the
concrete semantics $\rightarrow$. This has an important consequence:
whereas the presence of a node labeled by $m$ with $m(p)=\w$ in the
\treeshortname tree of a PN $\Nn$ \emph{implies} that $\Nn$ \emph{does
  not terminate}, this is \emph{not true anymore} in the case of
\wPN. For instance, all nodes but $n_1$ in $\Tt_1'$
(Fig.~\ref{fig:example-tree}) are marked by $\w$, yet the
corresponding \wPN $\Nn_1'$ (Fig.~\ref{fig:examplewPN}) \emph{does
  terminate}.

Our version of the \treename tree adapted to \wPN is given in
Fig.~\ref{fig:scg}. It builds a tree $\Tt=\tuple{N,E,\lambda,\mu,n_0}$
where: $N$ is a set of nodes; $E\subseteq N\times N$ is a set of
edges; $\lambda:N\mapsto (\NN\cup\{\w\})^P$ is a function that labels
nodes by $\omega$-markings\footnote{We extend $\lambda$ to set of
  nodes $S$ in the usual way: $\lambda(S)=\{\lambda(n)\mid n\in
  S\}$.}; $\mu:E\mapsto T$ is a labeling function that labels arcs by
transitions; and $n_0\in N$ is the root of the tree.
For each edge $e$, we let $\effect(e)=\effect(\mu(e))$.  Let $E^+$ and
$E^*$ be respectively the transitive and the transitive reflexive
closure of $E$.  A \emph{stuttering path} is a finite sequence
$n_0,n_1,\ldots, n_\ell$ s.t. for all $1\leq i\leq \ell$:
\emph{either} $(n_{i-1},n_i)\in E$ \emph{or} $(n_i, n_{i-1})\in E^+$
and $\lambda(n_i)=\lambda(n_{i-1})$. A stuttering path
$n_0,n_1,\ldots, n_\ell$ is a \emph{(plain) path} iff
$(n_{i-1},n_i)\in E$ for all $1\leq i\leq \ell$.  Given two nodes $n$
and $n'$ s.t. $(n,n')\in E^*$, we denote by $n\leadsto n'$ the (unique
path) from $n$ to $n'$. Given a stuttering path $\pi=n_0,n_1,\ldots,
n_\ell$, we denote by $\mu(\pi)$ the sequence
$\mu(n_0,n_1)\mu(n_1,n_2)\cdots\mu(n_{\ell-1},n_\ell)$ assuming
$\mu(n_i, n_{i+1})=\varepsilon$ when $(n_i,n_{i+1})\not \in E$; and by
$\effect(\pi)=\sum_{i=1}^\ell\effect(n_{i-1},n_i)$, letting
$\effect(n_{i-1}, n_i)={\bf 0}$ when $(n_i,n_{i+1})\not \in E$.

\begin{figure}[t!]
\begin{description}
  \item[Input] an \wOPN $\Nn=\tuple{P,T}$ and an $\omega$-marking $m_0$
  \item[Output] the \treeshortname of $\Nn$, starting from $m_0$
\end{description}
{\treealgo($\Nn,m_0$):}
\begin{pseudo}[firstnumber=1]
$\Tt$ := $\tuple{N,E,\lambda,\mu,n_0}$ where $N=\{n_0\}$ with $\lambda(n_0)=m_0$
U := $\{n_0\}$

while U $\neq\emptyset$: @\label{forall:unkown}@
   select and remove $n$ from U @\label{lst:select}@
   if $\nexists \overline{n}$ st $(\overline{n},n)\in E^+$ and $\lambda(n)=\lambda(\overline{n})$: @\label{lst:if2}@
      forall $t$ in $T$ s.t. $\forall p\in P$: $I(t)(p)\neq\w$ implies $\lambda(n)(p)\geq I(t)(p)$: @\label{lst:forall_t}@
        $m'$ := $\Post$($\Nn$,$\lambda(n)$, $t$) @\label{lst:call_to_Post}@
        if $\nbw{m'} > \nbw{\lambda(n)}$:
          $\Tt'$ := $\SCT$($\Nn$,$m'$) @\label{lst:recursive_call}@
          add all edge and nodes of $\Tt'$ to $\Tt$
          let $n'$ be the root of $\Tt'$
        else
           $n'$ := new node with $\lambda(n')=m'$
           U := U $\cup$ $\{n'\}$
        E := $\mathtt{E} \cup (n,n')$ s.t. $\mu(n,n')=t$. @\label{lst:connect}@
return $\Tt$ @\label{lst:return}@
\end{pseudo}
\vspace{3ex}

\Post($\Nn$,$n$,$t$):\lstset{firstnumber=last}
\begin{pseudo}
$m'$ := $\lambda(n)-I(t)+O(t)$ @\label{lst:effect-trans}@
if  $\exists \overline{n}: \left( \overline{n},n)\in E^+\wedge \lambda(\overline{n})\ml\lambda(n)\right)$: @\label{lst:if-post}@
  $m_w(p):=\begin{cases}m'(p) & if \effect(\overline{n}\leadsto n\cdot t)(p)\leq 0\\\omega & otherwise\end{cases}$ @\label{lst:def-m-w}@
  return $m_w$
else:
  return $m'$
\end{pseudo}

\caption{The algorithm to build the \treeshortname of an
  \wPN.\label{fig:scg}}
\end{figure}

\treealgo follows the intuition given above. At all times, it
maintains a frontier $\mathtt{U}$ of tree nodes that are candidate for
development (initially, $\mathtt{U}=\{n_0\}$, with
$\lambda(n_0)=m_0$). Then, \treealgo iteratively picks up a node $n$
from $\mathtt{U}$ (see line~\ref{lst:select}), and develops it
(line~\ref{lst:forall_t} onwards) if $n$ has no ancestor $n'$ with the
same label (line~\ref{lst:if2}). \emph{Developing} a node $n$ amounts
to computing all the marking $m$ s.t. $\lambda(n)\rightarrow_\w m$
(line~\ref{lst:effect-trans}), performing accelerations
(line~\ref{lst:def-m-w}) if need be, and inserting the resulting
children in the tree. Remark that \treealgo is \emph{recursive} (see
line~\ref{lst:recursive_call}): every time a marking $m$ with an extra
$\w$ is created, it performs a recursive call to $\treealgo(\Nn, m)$,
using $m$ as initial marking\footnote{Although this differs from
  classical presentations of the Karp\& Miller technique, we have
  retained it because it simplifies the proofs of correctness.}.

The rest of the section is devoted to proving that this algorithm is
correct. We start by establishing termination, then soundness (every
stuttering path in the tree corresponds to an execution of the \wOPN)
and finally completeness (every execution of the \wOPN corresponds to
a stuttering path in the tree). To this end, we rely on the following
notions. Symmetrically to \emph{self-covering executions} we define
the notion of \emph{self-covering (stuttering) path} in a tree: a
(stuttering) path $\pi$ is \emph{self-covering} iff $\pi=\pi_1\pi_2$
with $\effect(\pi_2)\geq \0$. A self-covering stuttering path
$\pi=\pi_1\pi_2$ is \emph{$\omega$-maximal} iff for all nodes $n$,
$n'$ along $\pi_2$: $\nbw{n}=\nbw{n'}$.

\paragraph{Termination} Let us show that \treealgo always
terminates. First observe that the depth of recursive calls is at most
by $|P|+1$, as the number of places marked by $\w$ along a branch does
not decrease, and since we perform a recursive call only when a place
gets marked by $\w$ and was not before. Moreover, the branching degree
of the tree is bounded by the number $|T|$ of transitions. Thus, by
K\"onig's lemma, an infinite tree would contain an infinite branch. We
rule out this possibility by a classical wqo argument: if there were
an infinite branch in the tree computed by $\SCT(\Nn,m_0)$, then there
would be two nodes $n_1$ along the branch $n_2$ (where $n_1$ is an
ancestor of $n_2$) s.t. $\lambda(n_1)\mleq\lambda(n_2)$ and
$\effect(n_1\leadsto n_2)\mgeq 0$. Since the depth of recursive calls
is bounded, we can assume, wlog, that $n_1$ and $n_2$ have been built
during the same recursive call, hence $\lambda(n_1)\ml\lambda(n_2)$ is
not possible, because this would trigger an acceleration, create an
extra \w and start a new recursive call. Thus,
$\lambda(n_1)=\lambda(n_2)$, but in this case the algorithm stops
developing the branch (line \ref{lst:if2}). See the appendix for a
full proof.
\begin{proposition}\label{prop:termination}
  For all \wPN $\Nn$ and for all marking $m_0$, $\SCT(\Nn,m_0)$
  terminates.
\end{proposition}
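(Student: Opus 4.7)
The plan is to combine a bound on the nesting depth of recursive $\SCT$ calls with a K\"onig-plus-wqo argument at each individual invocation.

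A recursive call is issued at line~\ref{lst:recursive_call} only when $\Post$ returns a marking $m'$ with $\nbw{m'} > \nbw{\lambda(n)}$, so every nested call starts from a root whose number of $\omega$-coordinates is strictly larger than that of its caller. Since $\nbw{\cdot} \le |P|$, the nesting depth is bounded by $|P|+1$. It therefore suffices to show that each single invocation $\SCT(\Nn, m)$ builds only finitely many nodes of its own; the finitely many subtrees it delegates are then handled by induction on the recursion depth.

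Within a single invocation, no node is ever created with a larger $\nbw$ than the root, because any candidate marking produced by $\Post$ with a fresh $\omega$ is delegated to a deeper call. Hence the set $\pom{m}$ of $\omega$-coordinates is constant along all branches of the invocation, and the labels of these branches live in $\NN^{P\setminus\pom{m}}$, which is well-quasi-ordered by $\mleq$. Suppose, for contradiction, that the invocation builds an infinite tree; since the branching degree is bounded by $|T|$, K\"onig's lemma yields an infinite branch. By the wqo property, along this branch there is an ancestor $n_1$ and a descendant $n_2$ with $\lambda(n_1) \mleq \lambda(n_2)$. If $\lambda(n_1) = \lambda(n_2)$, the test at line~\ref{lst:if2} forbids the development of $n_2$, contradicting that the branch extends past $n_2$. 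Otherwise $\lambda(n_1) \ml \lambda(n_2)$, so when the next child of the branch is computed via some transition $t$, the acceleration at line~\ref{lst:if-post} fires with $n_1$ as witness; the strict increase of $\lambda(n_2)$ over $\lambda(n_1)$ on some place $p \notin \pom{m}$ translates, via Lemma~\ref{lemma:when-no-omega-the-effect-is-the-same-as-PN} and the invariant that no $\omega$-output arc acting outside $\pom{m}$ can be fired inside the invocation, into a strictly positive coordinate of $\effect(n_1 \leadsto n_2 \cdot t)$; thus $\Post$ introduces a fresh $\omega$, so the child triggers a deeper call and does not belong to the current invocation — again a contradiction.

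I expect the main technical obstacle to lie in the last step of the second case: one must ensure that the strict inequality $\lambda(n_1)(p) < \lambda(n_2)(p)$ translates into a strictly positive coordinate of the effect even after composition with the next transition $t$ on the branch. This requires exploiting the invariance of $\pom{\cdot}$ within an invocation to apply Lemma~\ref{lemma:when-no-omega-the-effect-is-the-same-as-PN} on the relevant coordinate, and possibly choosing the pair $(n_1,n_2)$ judiciously along the infinite branch (using that wqo yields not just one but infinitely many such pairs) so that positivity is preserved through the final transition.
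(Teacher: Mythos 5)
Your overall strategy is the same as the paper's: bound the recursion depth by $|P|+1$, use K\"onig's lemma to extract an infinite branch lying inside a single invocation, apply the wqo property to get an ancestor/descendant pair with $\lambda(n_1)\mleq\lambda(n_2)$, rule out equality via line~\ref{lst:if2}, and convert strict increase into a fresh $\omega$ and hence a deeper recursive call. The gap is exactly the step you flag yourself, and it is a genuine one, not a routine detail. The acceleration of line~\ref{lst:def-m-w} puts an $\omega$ on a place $p$ only if $\effect(\nbar\leadsto n_2\cdot t)(p)>0$, where $\nbar$ is the witness chosen by the algorithm (nothing forces it to be $n_1$) and $t$ is the next transition on the branch. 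Since labels evolve additively on the non-$\omega$ places within one invocation (Lemma~\ref{lemma:no-omega-implies-exact-effect}), this quantity equals $\lambda(n_2)(p)+\effect(t)(p)-\lambda(\nbar)(p)$: the strict gain accumulated along $n_1\leadsto n_2$ at a place can be wiped out by $\effect(t)(p)<0$, and it is perfectly possible that the child's label is $\mleq\lambda(\nbar)$ at every place, in which case the acceleration fires but is \emph{vacuous}: no fresh $\omega$, no recursive call, the branch continues inside the same invocation, and no contradiction is reached. A one-place net with transitions $t_+$, $t_-$ of effect $+1$, $-1$ already exhibits this: on the branch with labels $0$ then $1$ (via $t_+$), computing the next child via $t_-$ triggers the acceleration with witness $0$, but the relevant effect is $1-1=0\leq 0$, so the child simply gets label $0$ and no $\omega$ appears. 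So the implication ``strict increase over an ancestor implies a fresh $\omega$ at the next step'' is false as stated. (This is precisely the point that the paper's own proof compresses into the single assertion that the strictly increased place must be marked $\w$ in the descendant; you have correctly located the crux, but your proposal leaves it open.)

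Your suggested remedy --- choosing the wqo pair judiciously --- cannot work on its own: if an infinite branch did stay inside one invocation, then by the computation above every sufficiently deep node would have a vacuous acceleration, so \emph{no} choice of a single pair yields the desired fresh $\omega$; the contradiction must come from aggregating infinitely many steps. One way to finish is the following. All labels on the branch are pairwise distinct (two equal labels would stop development by line~\ref{lst:if2}), so the set $L$ of labels, viewed in $\NN^{P\setminus\pom{m}}$, is infinite; it has finitely many $\mleq$-minimal elements, all occurring by some position $i^*$. For every $i>i^*$, the label $\lambda(n_i)$ strictly dominates some minimal element of $L$, which is the label of an ancestor, so the condition of line~\ref{lst:if-post} holds when the child of $n_i$ is computed. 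Each such step then either creates a fresh $\omega$ --- contradicting that the branch never leaves the invocation, by line~\ref{lst:recursive_call} --- or is vacuous, which forces $\lambda(n_{i+1})\mleq\lambda(\nbar_i)\ml\lambda(n_i)$ for whichever witness $\nbar_i$ the algorithm used. Hence $\sum_{p}\lambda(n_i)(p)$ strictly decreases for all $i>i^*$, which is impossible for natural numbers. With this (or an equivalent) argument supplied, your proof structure goes through.
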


Then, following the intuition that we have sketched at the beginning
of the section, we show that \treeshortname is \emph{sound}
(Lemma~\ref{lemma:path-implies-execution}) and \emph{complete}
(Lemma~\ref{lemma:execution-implies-path}). \emph{Note that we first
  establish these results assuming that the \wPN $\Nn$ given as
  parameter is an \wOPN, then prove that the results extend to the
  general case of \wPN}.

\paragraph{Soundness} To establish \emph{soundness} of our algorithm,
we show that, for every path $n_0,\ldots, n_k$ in the tree returned by
$\treealgo(\Nn,m_0)$, and for every target marking
$m\in\gamma(\lambda(n_k))$, we can find an execution of $\Nn$ reaching
a marking $m'\in\gamma(n_k)$ that covers $m$. This implies that, if
$\lambda(n_k)(p)=\w$ for some $p$, then, we can find a family of
executions that reach a marking in $\gamma(n_k)$ with an arbitrary
number of tokens in $p$. For instance, consider the path $n_1,n_2,n_3$
in $\Tt_1'$ (Fig.~\ref{fig:example-tree}), and let $m=\tuple{0,
  2,4}$. Then, a corresponding execution is
$\tuple{1,0,0}\xrightarrow{t_1}\tuple{0,4,0}\xrightarrow{t_2}\tuple{0,3,2}\xrightarrow{t_2}\tuple{0,2,4}$. Remark
that the execution is not necessarily the sequence of transitions
labeling the path in the tree: in this case, we need to iterate $t_2$
to transfer tokens from $p_2$ to $p_3$, which is summarised in one
edge $(n_2,n_3)$ in $\Tt_1$, by the acceleration.

\begin{lemma}\label{lemma:path-implies-execution}
  Let $\Nn$ be an \wOPN, let $m_0$ be an \w-marking and let $\Tt$ be
  the tree returned by $\treealgo(\Nn,m_0)$. Let $\pi=n_0,\ldots, n_k$
  be a stuttering path in $\Tt$, and let $m$ be a marking in
  $\gamma(\lambda(n_k))$. Then, there exists an execution
  $\rho_\pi=m_0\xrightarrow{t_1}m_1\cdots\xrightarrow{t_\ell}m_\ell$
  of $\Nn$ s.t. $m_\ell\in\gamma(\lambda(n_k))$, $m_\ell\mgeq m$ and
  $m_0\in\gamma(\lambda(n_0))$.
  Moreover, when for all $0\leq i\leq j\leq k$: $\nbw{n_i}=\nbw{n_j}$,
  we have: $t_1\cdots t_\ell=\mu(\pi)$.
\end{lemma}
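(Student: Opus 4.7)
The plan is to prove the lemma by induction on the length $k$ of the stuttering path $\pi=n_0,\ldots,n_k$. The base case $k=0$ is immediate: take $\rho_\pi$ to be the empty execution with $m_0:=m$, which lies in $\gamma(\lambda(n_0))$ and trivially satisfies $m_\ell=m\mgeq m$. For the inductive step, I inspect the last step of $\pi$, which is either a backward-stuttering step $(n_k,n_{k-1})\in E^+$ with $\lambda(n_k)=\lambda(n_{k-1})$, or a forward tree edge $(n_{k-1},n_k)\in E$ labelled by $t=\mu(n_{k-1},n_k)$; in the forward case I further distinguish whether the call to \Post triggered an acceleration.

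The backward-stuttering case is immediate by the induction hypothesis applied to $n_0,\ldots,n_{k-1}$ with the same target $m$, since $\gamma(\lambda(n_k))=\gamma(\lambda(n_{k-1}))$. In the forward case without acceleration, I pre-compute a target marking $m^{*}\in\gamma(\lambda(n_{k-1}))$ that is tight on every non-$\omega$ place of $\lambda(n_{k-1})$ and large enough on every $\omega$-place so that firing $t$ in the concrete semantics from any $m'\in\gamma(\lambda(n_{k-1}))$ with $m'\mgeq m^{*}$ reaches some $m_\ell\in\gamma(\lambda(n_k))$ with $m_\ell\mgeq m$, using the non-determinism of $\omega$-output arcs when $O(t)(p)=\omega$ (recall that $I(t)$ is always finite in a \wOPN). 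Applying the induction hypothesis to $n_0,\ldots,n_{k-1}$ with target $m^{*}$ and extending the returned execution by one firing of $t$ with the appropriate non-deterministic choices finishes this case.

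The delicate case is the acceleration case: let $\bar n$ be the witnessing tree-ancestor of $n_{k-1}$ and let $\sigma$ be the transition sequence along the tree path $\bar n\leadsto n_{k-1}$, so that the newly $\omega$-marked places in $\lambda(n_k)$ are exactly those where $\effect(\sigma\cdot t)(p)>0$. Given the target $m$, I choose a pumping factor $N$ large enough that $N$ iterations of $\sigma\cdot t$ contribute at least $m(p)$ tokens on every newly-accelerated place, and I design a pre-image target $m^{*}\in\gamma(\lambda(n_{k-1}))$ whose $\omega$-place values are high enough to make $\sigma$ firable afterwards. The induction hypothesis applied to $n_0,\ldots,n_{k-1}$ with target $m^{*}$ yields a concrete execution reaching some $m''\mgeq m^{*}$; I then iterate $\sigma\cdot t$ from $m''$ the required number of times in the concrete semantics, appealing to Lemma~\ref{lemma:reachable-markings-is-upward-closed} to preserve firability as the marking grows, to Lemma~\ref{lemma:when-no-omega-the-effect-is-the-same-as-PN} to compute the cumulative deterministic effect on places where $\sigma\cdot t$ uses no $\omega$-arc, and to the non-determinism of remaining $\omega$-output arcs to land the final marking in $\gamma(\lambda(n_k))$ while overshooting $m(p)$ at every $\omega$-place.

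The main obstacle is this acceleration case, where one must simultaneously keep $\sigma\cdot t$ firable across many iterations starting from a marking whose $\omega$-place values can only be lower-bounded via the IH, hit the exact value $\lambda(n_k)(p)$ at every non-$\omega$ place of $\lambda(n_k)$, and exploit the non-determinism on $\omega$-output arcs to overshoot $m(p)$ at every $\omega$-place (distinguishing those that are $\omega$ because of the pumping from those that are $\omega$ because an $\omega$-output transition appears in $\sigma\cdot t$). For the ``moreover'' clause, the hypothesis $\nbw{n_i}=\nbw{n_j}$ for all $0\leq i\leq j\leq k$ precludes the acceleration case, which strictly increases the number of $\omega$s, so only the backward-stuttering and no-acceleration cases arise; in each the execution is extended by either $\varepsilon$ or the single transition $\mu(n_{k-1},n_k)$, and combining with the ``moreover'' clause of the induction hypothesis on $n_0,\ldots,n_{k-1}$ yields $t_1\cdots t_\ell=\mu(\pi)$.
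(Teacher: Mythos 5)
Your overall skeleton --- induction on the length of the stuttering path, with a case split into backward-stutter steps, plain tree edges, and accelerated edges, and a pumping sequence in the accelerated case --- is essentially the paper's own approach, up to an immaterial change of direction: the paper inducts \emph{backward} (it proves a single-edge lemma, applies the induction hypothesis to the suffix $n_1,\ldots,n_k$ with the original target $m$, and then prepends the first edge, re-firing the suffix execution by monotonicity), whereas you induct \emph{forward} and propagate a pre-image target $m^{*}$ through the prefix. Both directions can be made to work, and your base case, stutter case, plain-edge case and treatment of the ``moreover'' clause are all sound.

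The genuine gap is in the acceleration case, which is exactly where the paper's proof (the single-edge Lemma, Case B) spends nearly all of its effort. You fire $(\sigma\cdot t)^{N}$ with $N$ chosen solely from the demands of the newly accelerated places, and you justify that the iterations remain firable by ``appealing to Lemma~\ref{lemma:reachable-markings-is-upward-closed} to preserve firability as the marking grows.'' But the marking grows only on the accelerated places. On a place $p$ that is \emph{not} accelerated and not already $\omega$ in $\lambda(n_{k-1})$, the acceleration rule (line~\ref{lst:def-m-w}) gives precisely $\effect(\sigma\cdot t)(p)\leq 0$, so each round of your pump weakly \emph{decreases} $p$; nothing in your argument excludes $\effect(\sigma\cdot t)(p)<0$, in which case $(\sigma\cdot t)^{N}$ is simply not firable for large $N$ and the construction collapses. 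Lemma~\ref{lemma:reachable-markings-is-upward-closed} cannot close this hole: it only permits inflating the \emph{final} marking on places where the effect is $\omega$, and says nothing about firability of an iterated sequence on places whose per-round effect is non-positive. What is missing is the counterpart of the paper's key step: using the acceleration guard (the comparison with the ancestor $\bar n$'s label) together with Lemma~\ref{lemma:no-omega-implies-exact-effect} to derive $\lambda(n_{k-1})(p)+\effect(t)(p)\geq\lambda(\bar n)(p)$ on every such place, whence a full round of the pump has non-negative effect there and rounds can be chained --- the paper formalises this chaining with the $\af$ predicate. For the same reason the pump must begin with $t$ (the paper's sequence is $t\,(\sigma\cdot t)^{K}$, not $(\sigma\cdot t)^{N}$): it is only \emph{after} firing $t$ that the marking is guaranteed to dominate $\lambda(\bar n)$, which is what makes each subsequent copy of $\sigma$ firable. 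Without this analysis your acceleration case is an assertion, not a proof.
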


\paragraph{Completeness} Proving completeness amounts to showing that
every execution (starting from $m_0$) of an \wPN $\Nn$ is witnessed by
a stuttering path in $\treealgo(\Nn,m_0)$. It relies on the following
property:
\begin{lemma}\label{lemma:every-node-is-fully-developed-or-has-no-succ}
  Let $\Nn$ be an \wOPN, let $m_0$ be an \w-marking, and let $\Tt$ be
  the tree returned by $\treealgo(\Nn,m_0)$. Then, for all nodes $n$
  of $\SCT(\Nn,m_0)$:
  \begin{itemize}
  \item either $n$ has no successor in the tree and has an ancestor
    $\nbar$ s.t. $\lambda(\nbar)=\lambda(n)$.
  \item or the set of successors of $n$ corresponds to all the
    $\rightarrow_\omega$ possible successors of $\lambda(n)$, i.e.:
    $\{\mu(n,n')\mid (n,n')\in E\} = \{t\mid
    \lambda(n)\xrightarrow{t}_\omega\}$. Moreover, for each $n'$
    s.t. $(n,n')\in E$ and $\mu(n,n')=t$:
    $\lambda(n')\mgeq\lambda(n)+\effect(t)$.
  \end{itemize}
\end{lemma}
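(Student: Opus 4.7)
The statement is essentially a direct reading of the algorithm, so the plan is to make this reading precise and verify it case by case. First, I would note that by Proposition~\ref{prop:termination}, the algorithm terminates, so every node that was ever placed in $\mathtt{U}$ has been removed from it, and consequently every node $n$ of $\Tt$ has been examined by the loop starting at line~\ref{lst:select}. There are then exactly two possibilities, depending on the test at line~\ref{lst:if2}.

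If $n$ fails the test, then there exists $\overline{n}$ with $(\overline{n},n)\in E^+$ and $\lambda(\overline{n})=\lambda(n)$, and no successor is created for $n$ in the remainder of the iteration; since the only place where edges from $n$ could be added is the block guarded by this test, $n$ has no successor in $\Tt$, proving the first bullet. If $n$ passes the test, then the \textbf{forall} at line~\ref{lst:forall_t} iterates over exactly those transitions $t$ such that $I(t)(p)\neq\w \Rightarrow \lambda(n)(p)\geq I(t)(p)$ for every $p$. Since $\Nn$ is an \wOPN, no $I(t)(p)$ equals~$\w$, so this condition is literally the firability condition of the $\w$-semantics, namely $\lambda(n)\xrightarrow{t}_\w$. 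Hence the transitions labeling the edges out of $n$ are exactly $\{t\mid \lambda(n)\xrightarrow{t}_\w\}$, proving the first half of the second bullet.

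For the inequality $\lambda(n')\mgeq \lambda(n)+\effect(t)$, I would unfold $\Post$ (line~\ref{lst:effect-trans}). The node $n'$ is either a freshly created node with label $m'$, or the root of a tree returned by a recursive call $\SCT(\Nn,m')$, whose root carries label $m'$ by construction (line~1 of \treealgo); in both cases $\lambda(n')$ equals the value returned by $\Post$. If no ancestor $\overline{n}$ triggers the acceleration test at line~\ref{lst:if-post}, then $\lambda(n')=m'=\lambda(n)-I(t)+O(t)=\lambda(n)+\effect(t)$, which gives equality. Otherwise $\lambda(n')(p)$ is either $m'(p)=\lambda(n)(p)+\effect(t)(p)$ or $\w$, and using the arithmetic rules on $\ZZ\cup\{\w\}$ fixed at the start of Section~\ref{sec:omega-petri-nets} (in particular $\w\mgeq c$ for any $c\in\ZZ\cup\{\w\}$), both options satisfy $\lambda(n')(p)\mgeq \lambda(n)(p)+\effect(t)(p)$.

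The only subtle point, which I expect to be the main obstacle, is justifying that when line~\ref{lst:recursive_call} dispatches a recursive call, the edges added by that call do not secretly endow $n$ with additional successors or alter the label of the direct successor $n'$: I would resolve this by observing that the recursive call only installs its root as the new child of $n$ (lines following~\ref{lst:recursive_call}) and that the root of $\SCT(\Nn,m')$ is created with label $m'$, so the analysis above applies uniformly to the acceleration and non-acceleration cases.
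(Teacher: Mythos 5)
Your proposal is correct and follows essentially the same route as the paper's own proof: a direct case analysis on the test at line~\ref{lst:if2}, observing that every node is eventually examined, that the \textbf{forall} loop condition coincides with $\w$-firability, and that $\Post$ returns either $\lambda(n)+\effect(t)$ exactly or an accelerated ($\w$-augmented, hence $\mgeq$-larger) marking. Your version is in fact somewhat more detailed than the paper's terse argument, in particular in spelling out why recursive calls attach only their root as a child of $n$ with the label returned by $\Post$.
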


We can now state the completeness property:
\begin{lemma}\label{lemma:execution-implies-path}
  Let $\Nn$ be an \wOPN with set of transitions $T$, let $m_0$ be an
  initial marking and let
  $m_0\xrightarrow{t_1}m_1\xrightarrow{t_2}\cdots\xrightarrow{t_n}m_n$
  be an execution of $\Nn$. Then, there are a \emph{stuttering} path
  $\pi=n_0,n_1,\ldots,n_k$ in $\SCT(\Nn,m_0)$ and a monotonic
  increasing mapping $h:\{1,\ldots,n\}\mapsto\{0,\ldots,k\}$ s.t.:
  $\mu(\pi) = t_1t_2\cdots t_n$ and  $m_i\mleq\lambda(n_{h(i)})$ for
  all $0\leq i\leq n$.
\end{lemma}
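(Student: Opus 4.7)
The plan is to proceed by induction on the length $n$ of the execution, using Lemma~\ref{lemma:every-node-is-fully-developed-or-has-no-succ} to extend the stuttering path one transition at a time while preserving the invariant $m_i \mleq \lambda(n_{h(i)})$. The base case $n = 0$ is immediate: the one-node path $\pi = n_0$ with $h(0) = 0$ works since $\lambda(n_0) = m_0$.

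For the inductive step, assume a stuttering path $\pi = n_0, \ldots, n_k$ and a monotonic $h : \{0,\ldots,n-1\} \to \{0,\ldots,k\}$ have been built with $\mu(\pi) = t_1 \cdots t_{n-1}$ and $m_i \mleq \lambda(n_{h(i)})$ for all $i \leq n-1$. Let $n^\star := n_{h(n-1)}$. I would apply Lemma~\ref{lemma:every-node-is-fully-developed-or-has-no-succ} to $n^\star$ and split into two cases. If $n^\star$ is fully developed, then since $t_n$ is firable from $m_{n-1}$ in the concrete semantics, $m_{n-1} \mleq \lambda(n^\star)$, and $\Nn$ being an \wOPN means $t_n$ has no input \w-arcs, so $t_n$ is also firable from $\lambda(n^\star)$ in $\rightarrow_\omega$; hence there is an edge $(n^\star, n') \in E$ with $\mu(n^\star, n') = t_n$ and $\lambda(n') \mgeq \lambda(n^\star) + \effect(t_n)$. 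I extend $\pi$ by $n'$ and set $h(n) = k+1$. The required $m_n \mleq \lambda(n')$ is then checked place by place: on plain places the concrete and \w-effects agree, and on \w-output places the value $\omega$ in $\lambda(n')$ dominates any finite concrete quantity produced.

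If instead $n^\star$ has no successor, Lemma~\ref{lemma:every-node-is-fully-developed-or-has-no-succ} yields an ancestor $\nbar$ with $\lambda(\nbar) = \lambda(n^\star)$. Since $\nbar$ is a proper ancestor of $n^\star$, it already has at least one child in $\Tt$, so by the same lemma $\nbar$ must be fully developed. I extend $\pi$ first by the stuttering move $n^\star \to \nbar$ (legitimate because $(\nbar, n^\star) \in E^+$ and the labels coincide), which contributes $\varepsilon$ to $\mu(\pi)$, and then apply the Case A argument at $\nbar$ to obtain a child $n'$ with $\mu(\nbar, n') = t_n$ and $\lambda(n') \mgeq \lambda(\nbar) + \effect(t_n)$. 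Setting $h(n)$ to the index of $n'$ in the extended path preserves all invariants, since $\lambda(\nbar) = \lambda(n^\star) \mgeq m_{n-1}$.

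The main obstacle is the mismatch between the concrete semantics used by the execution and the $\rightarrow_\omega$ semantics unfolded by \SCT: a concrete firing on a place $p$ with $O(t_n)(p) = \omega$ yields some finite $o \in \NN$, while the tree records $\omega$ at $p$. The induction works only because \w dominates every concrete value, so $m_n \mleq \lambda(n^\star) + \effect(t_n)$ holds coordinate-wise. The \wOPN hypothesis is essential here: \w-input arcs would destroy the monotonicity step ``$t_n$ firable from $m_{n-1}$ implies firable from $\lambda(n^\star)$'' in the $\rightarrow_\omega$ semantics (where an \w-input arc actually requires tokens to consume rather than permitting any number), which is precisely why Lemmas~\ref{lemma:path-implies-execution} and \ref{lemma:execution-implies-path} are first proved under that restriction before being lifted to general \wPN.
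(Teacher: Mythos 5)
Your proof is correct and follows essentially the same route as the paper's: induction on the execution length, with the base case $h(0)=0$, and an inductive step that invokes Lemma~\ref{lemma:every-node-is-fully-developed-or-has-no-succ} to split into the fully-developed case (yielding a child $n'$ with $\lambda(n')\mgeq\lambda(n^\star)+\effect(t_n)\mgeq m_{n-1}+\effect(t_n)\mgeq m_n$) and the undeveloped case (stutter to the equally-labelled ancestor, which must be fully developed since it has a child, then repeat the first argument). One caveat on your closing aside only: in the paper's semantics an \w-input arc never blocks firability (the condition $m(p)\mgeq I(t)(p)$ is imposed only on places with $I(t)(p)\neq\w$, both in the concrete semantics and in line~\ref{lst:forall_t} of the algorithm), so your stated reason for the \wOPN restriction mischaracterizes the semantics, though this plays no role in the proof itself, which is stated for \wOPN anyway.
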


\paragraph{From \wOPN to \wPN} We have shown completeness and
soundness of the \treealgo algorithm for \wOPN. Let us show that each
\wPN $\Nn$ can be turned into an \wOPN $\remIw(\Nn)$ that $(i)$
terminates iff $\Nn$ terminates and $(ii)$ that has the same
coverability sets as $\Nn$. The \wOPN $\remIw(\Nn)$ is obtained from
$\Nn$ by replacing each transition $t\in T$ by a transition $t'\in T'$
s.t. $O(t')=O(t)$ and $I(t')=\{I(t)(p)\otimes p\mid
I(t)(p)\neq\omega\}$. Intuitively, $t'$ is obtained from $t$ by
deleting all $\w$ input arcs. Since $t'$ always consumes \emph{less
  tokens} than $t$ does, the following is easy to establish:

\begin{lemma}\label{lem-rem-I-w}
  Let $\Nn$ be an \wPN. For all executions $m_0,t_1',m_1,\ldots,
  t_n',m_n$ of $\remIw(\Nn)$: $m_0,t_1,m_1,\ldots, t_n,m_n$ is an
  execution of $\Nn$. For all finite (resp. infinite) executions $m_0,
  t_1, m_1,\ldots, t_n, m_n$ ($m_0, t_1, m_1,\ldots, t_j,m_j,\ldots$)
  of $\Nn$, there exists an execution $m_0, t_1', m_1',\ldots,
  t_n',m_n'$ ($m_0, t_1, m_1',\ldots, t_j,m_j',\ldots$) of
  $\remIw(\Nn)$, s.t. $m_i\mleq m_i'$ for all $i$.
\end{lemma}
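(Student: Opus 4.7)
}

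The statement is two independent simulation claims, and I would prove each by a straightforward induction on the length of the execution. The core observation is that for every transition $t=(I(t),O(t))$ of $\Nn$ and its counterpart $t'=(I(t'),O(t'))$ in $\remIw(\Nn)$, we have $O(t')=O(t)$ and $I(t')(p)=I(t)(p)$ whenever $I(t)(p)\neq\omega$, while $I(t')(p)=0$ whenever $I(t)(p)=\omega$. Thus $t'$ differs from $t$ only by being non-consuming (instead of arbitrarily consuming) on the $\omega$-input places.

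For the first direction (executions of $\remIw(\Nn)$ lift to executions of $\Nn$), I would induct on the step index and show that each single step $m_{j-1}\step{t_j'} m_j$ in $\remIw(\Nn)$ is also a valid step $m_{j-1}\step{t_j} m_j$ in $\Nn$. Firability transfers because the firability condition for $t_j$ in $\Nn$ only constrains places $p$ with $I(t_j)(p)\neq\omega$, where $I(t_j')(p)=I(t_j)(p)$; and the step $t_j'$ already witnesses $m_{j-1}(p)\geq I(t_j')(p)$. For the effect, on $\omega$-input places we simulate $t_j$'s non-deterministic input choice by taking $i=0$; on the other places the computation of $m_j$ is identical; and on $\omega$-output places we pick the same produced value in $\Nn$ as chosen in $\remIw(\Nn)$. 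So the same sequence of markings is witnessed.

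For the second direction I would again induct, maintaining the invariant $m_i\mleq m_i'$, starting from $m_0'=m_0$. Given $m_{i-1}\mleq m_{i-1}'$, firability of $t_i'$ from $m_{i-1}'$ follows from monotonicity: for each $p$ with $I(t_i)(p)\neq\omega$ one has $m_{i-1}'(p)\geq m_{i-1}(p)\geq I(t_i)(p)=I(t_i')(p)$. I then construct $m_i'$ by a place-wise case analysis on whether $I(t_i)(p)$ and $O(t_i)(p)$ are $\omega$ or not. The two deterministic cases follow from $m_{i-1}\mleq m_{i-1}'$ combined with the fact that $t_i'$ consumes $0\leq I(t_i)(p)$ tokens when $I(t_i)(p)=\omega$. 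In the two $\omega$-output cases, I use the non-determinism on $O(t_i')(p)=\omega$ in $\remIw(\Nn)$ to produce the same number of tokens as $t_i$ produced on that step in $\Nn$, which together with $m_{i-1}\mleq m_{i-1}'$ yields $m_i(p)\leq m_i'(p)$. The argument for infinite executions is identical: the inductive construction produces $m_i'$ from $m_{i-1}'$ at each index, giving an infinite execution of $\remIw(\Nn)$ that covers the given one pointwise.

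The proof is essentially mechanical; the only mildly delicate point is the case split in the inductive step of the second direction, where one must be careful that $t_i'$'s weaker consumption combined with pointwise coverage of $m_{i-1}'$ over $m_{i-1}$ is enough to match the non-deterministic choices made by $t_i$ in $\Nn$. Since $\remIw$ only weakens consumption and preserves production capability, no real obstacle arises.
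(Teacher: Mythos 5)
Your proof is correct and follows essentially the same route as the paper's (much terser) argument: the first direction by observing that consuming $0$ tokens on each $\omega$-input place is a valid instantiation of $t_i$'s non-determinism in $\Nn$, and the second by induction maintaining $m_i\mleq m_i'$, matching $t_i$'s productions on $\omega$-output places and exploiting that $t_i'$ consumes no more than $t_i$ on every place. No gap; your place-wise case analysis just spells out what the paper leaves implicit.
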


Intuitively, this means that, when solving coverability, (place)
boundedness or termination on an \wPN $\Nn$, we can analyse
$\remIw(\Nn)$ instead, because $\Nn$ terminates iff $\remIw(\Nn)$
terminates, and removing the \w-labeled input arcs from $\Nn$ does not
allow to reach higher markings. Finally, we observe that, for all \wPN
$\Nn$, and all initial marking $m_0$: the trees returned by
$\treealgo(\Nn,m_0)$ and $\treealgo\left(\remIw(\Nn,m_0)\right)$
respectively are isomorphic\footnote{That is, if $\treealgo(\Nn,m_0)$
  returns $\tuple{N,E,\lambda, \mu, n_0}$ and
  $\treealgo\left(\remIw(\Nn,m_0)\right)$
  returns$\tuple{N',E',\lambda', \mu', n_0'}$, then, there is a
  bijection $h:N\mapsto N'$ s.t. $(i)$ $h(n_0)=n_0'$, $(ii)$ for all
  $n\in N$: $\lambda(n)=\lambda(h(n))$, $(iii)$ for all $n_1$, $n_2$
  in $N$: $(n_1, n_2)\in E$ iff $(h(n_1), h(n_2))\in E'$, $(iv)$ for
  all $(n_1, n_2)\in E$: $\mu(n_1, n_2)=\mu'(h(n_1), h(n_2))$.}.  This
is because we have defined $c-\w$ to be equal to $c$: applying this
rule when computing the effect of a transition $t$
(line~\ref{lst:effect-trans}), is equivalent to computing the effect
of the corresponding $t'$ in $\remIw(\Nn)$, i.e. letting $I(t')(p)=0$
for all $p$ s.t. $I(t)(p)=\w$. Thus, we can lift
Lemma~\ref{lemma:path-implies-execution} and
Lemma~\ref{lemma:execution-implies-path} to \wPN. This establish
correctness of the algorithm for the general \wPN case.


\paragraph{Applications of the \treename tree}
These results allow us to conclude that the \treename can be used to
compute a coverability set and to decide termination of any \wPN.
\begin{theorem}\label{theo:correctness-km}
  Let $\Nn$ be an \wPN with initial marking $m_0$, and let $\Tt$ be
  the tree returned by
  $\tuple{N,E,\lambda,\mu,n_0}=\treealgo(\Nn,m_0)$. Then: $(i)$
  $\lambda(N)$ is a coverability set of $\Nn$ and
  $(ii)$~$\Nn$~terminates iff $\Tt$ contains an $\omega$-maximal
  self-covering stuttering path.
\end{theorem}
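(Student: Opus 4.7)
The plan is to establish (i) and (ii) separately, relying on the soundness and completeness lemmas (Lemmas~\ref{lemma:path-implies-execution} and~\ref{lemma:execution-implies-path}) already proved for \wOPN and lifted to \wPN via Lemma~\ref{lem-rem-I-w}.

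For part (i), I show $\dc{\lambda(N)} = \dc{\reach(\Nn)}$ by double inclusion; finiteness of $\lambda(N)$ is immediate from Proposition~\ref{prop:termination}. The inclusion $\dc{\reach(\Nn)} \subseteq \dc{\lambda(N)}$ follows from completeness: any reachable marking $m$ corresponds to an execution of $\Nn$ which, by Lemma~\ref{lemma:execution-implies-path}, is witnessed by a stuttering path whose endpoint $n_{h(n)}$ satisfies $m \mleq \lambda(n_{h(n)})$. The reverse inclusion uses soundness: for each $n \in N$ and each $m \in \gamma(\lambda(n))$, Lemma~\ref{lemma:path-implies-execution} applied to the path $n_0 \leadsto n$ yields an execution of $\Nn$ reaching a marking $m' \mgeq m$, whence $m \in \dc{\reach(\Nn)}$.

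For part (ii), my plan is to correlate $\omega$-maximal self-covering stuttering paths in $\Tt$ with the existence of self-covering executions of $\Nn$, as characterised by Lemma~\ref{lemma:terminates-iff-no-s-c-e}. Given an $\omega$-maximal self-covering stuttering path $\pi = \pi_1 \pi_2$, the $\omega$-maximality assumption that $\nbw{n}$ is constant along $\pi_2$ is precisely the hypothesis of the moreover clause of Lemma~\ref{lemma:path-implies-execution}. This lets me extract a concrete transition sequence along $\mu(\pi_2)$; combined with $\effect(\pi_2) \mgeq \0$ and Lemma~\ref{lemma:reachable-markings-is-upward-closed}, this sequence can be replayed from a covering marking to obtain iterable executions of $\Nn$. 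Conversely, from a witness execution of $\Nn$ I apply Lemma~\ref{lemma:execution-implies-path} to obtain a stuttering path in $\Tt$, and then use a wqo pigeonhole argument on the labels restricted to a fixed $\omega$-support to extract an $\omega$-maximal self-covering portion.

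The main obstacle I expect is aligning the $\omega$-maximality condition with the recursive structure of \treealgo: a self-covering path that is \emph{not} $\omega$-maximal visits nodes that trigger recursive calls and accelerations, so the correspondence between the concrete semantics and the $\omega$-semantics along the path is not immediate. I will have to argue that the maximal $\omega$-support stabilises along any sufficiently long branch, so that the relevant self-covering loop can be extracted within a single recursive layer where the moreover clause of Lemma~\ref{lemma:path-implies-execution} applies. Once this alignment is in place, the equivalence in (ii) follows by chaining Lemmas~\ref{lemma:terminates-iff-no-s-c-e},~\ref{lemma:path-implies-execution}, and~\ref{lemma:execution-implies-path}.
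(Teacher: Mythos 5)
Your part $(i)$ and the ``tree implies execution'' direction of part $(ii)$ follow the paper's own route: the double inclusion via Lemmas~\ref{lemma:path-implies-execution} and~\ref{lemma:execution-implies-path}, and the use of $\omega$-maximality to trigger the ``moreover'' clause of Lemma~\ref{lemma:path-implies-execution} so that $\mu(\pi_2)$ itself is a firable sequence that can then be iterated (the paper handles the replay by monotonicity plus a footnote on the places where $\effect(\mu(\pi_2))(p)=\omega$; your appeal to Lemma~\ref{lemma:reachable-markings-is-upward-closed} serves the same purpose).

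The gap is in the converse direction of $(ii)$. From a self-covering execution $m_0\xrightarrow{\sigma_1}m_k\xrightarrow{\sigma_2}m_n$ with $m_k\mleq m_n$, you propose to apply Lemma~\ref{lemma:execution-implies-path} to ``a witness execution'' and then extract an $\omega$-maximal self-covering portion by ``a wqo pigeonhole argument on the labels restricted to a fixed $\omega$-support''. Two concrete problems. First, if the witness is the given execution itself, its image in the tree traverses the segment corresponding to $\sigma_2$ only \emph{once}; if accelerations create new $\omega$'s inside that segment, no sub-segment of it need be self-covering, because a proper infix of $\sigma_2$ need not have non-negative effect --- the self-covering property is not inherited by sub-paths, so there is nothing to extract. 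Second, comparability of labels does not certify the self-covering property: the definition requires $\effect(\pi_2)\mgeq\0$, i.e.\ the \emph{formal} effect of the edge labels must be non-negative on \emph{every} place, and on places marked $\omega$ the node labels carry no information about that effect (transitions may strictly consume from an $\omega$-marked place while all labels along the segment stay $\omega$ and hence remain $\mleq$-comparable); so two positions with equal $\omega$-support and comparable labels do not in general bound a self-covering sub-path. The paper's proof manufactures the needed length on the \emph{execution} side, not the tree side: since the execution is self-covering, $\sigma_1(\sigma_2)^{|P|+1}$, replaying each iteration of $\sigma_2$ with the same token choices, is again an execution of $\Nn$; applying Lemma~\ref{lemma:execution-implies-path} to \emph{this} execution yields $|P|+1$ consecutive segments whose edge labels are each exactly $\sigma_2$, hence each is self-covering because $\effect(\sigma_2)\mgeq\0$ (the formal effect dominates the actual one, which is $m_n-m_k\mgeq\0$); and since $\nbw{\cdot}$ is non-decreasing along a stuttering path and bounded by $|P|$, a counting (not wqo) pigeonhole yields one segment along which the number of $\omega$'s is constant, i.e.\ an $\omega$-maximal self-covering stuttering path. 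Your closing remark that the $\omega$-support ``stabilises along any sufficiently long branch'' cannot be realised inside the fixed finite tree; it must be realised by pumping the concrete execution before mapping it into the tree.
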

\begin{proof}
  Point $(i)$ follows from Lemma~\ref{lemma:path-implies-execution}
  (lifted to \wPN). Let us now prove both directions of point $(ii)$.

  First, we show that if $\treealgo(\Nn,m_0)$ contains an
  $\omega$-maximal self-covering stuttering path, then $\Nn$ admits a
  self-covering execution from $m_0$. Let $n_0,\ldots, n_k,\break
  n_{k+1},\ldots,n_\ell$ be an $\omega$-maximal self-covering
  stuttering path, and assume\break
  $\effect(n_{k+1},\ldots,n_\ell)\geq \mathbf{0}$. Let us apply
  Lemma~\ref{lemma:path-implies-execution} (lifted to \wPN), by
  letting $m=\mathbf{0}$ and $\pi=\pi_2$, and let $m_1$ and $m_2$ be
  markings s.t. $m_1\xrightarrow{\mu(\pi_2)}m_2$. The existence of
  $m_1$ and $m_2$ is guaranteed by
  Lemma~\ref{lemma:path-implies-execution} (lifted to \wPN), because
  all the nodes along $\pi_2$ have the same number of $\omega$'s as we
  are considering an \emph{$\omega$-maximal} self-covering stuttering
  path. Since $\effect(\pi_2)$ is positive, so is
  $\effect(\mu(\pi_2))$. Thus, there exists\footnote{Remark that,
    although $\effect(\mu(\pi_2))\mgeq\mathbf{0}$, we have no
    guarantee that $m_2\mgeq m_1$, as we could have
    $\effect(\mu(\pi_2))=\omega$ for some $p$, and maybe the amount of
    tokens that has been produced in $p$ by $\mu(\pi_2)$ to yield
    $m_2$ does not allow to have $m_2(p)\geq m_1(p)$. However, in this
    case, it is always possible to reach a marking with enough tokens
    in $p$ to cover $m_1(p)$, since $\effect(\mu(\pi_2))=\omega$. }
  $m_2'$ s.t. $m_1\xrightarrow{\mu(\pi_2)}m_2'$ and $m_2'\mgeq
  m_1$. By invoking Lemma~\ref{lemma:path-implies-execution} (lifted
  to \wPN) again, letting $\pi=\pi_1$ and $m=m_1$, we conclude to the
  existence of a sequence of transitions $\sigma$, a marking $m_0$ and
  a marking $m_1'\mgeq m_1$ s.t. $m_0\xrightarrow{\sigma}m_1'$. Since
  $m_1'\mgeq m_1$, $\mu(\pi_2)$ is again firable from $m_1'$. Let
  $\mbar_2=m_2+m_1'-m_1$. Clearly,
  $m_1'\xrightarrow{\mu(\pi_2)}\mbar_2$, with $\mbar_2\mgeq
  m_1'$. Hence,
  $m_0\xrightarrow{\sigma}m_1'\xrightarrow{\mu(\pi_2)}\mbar_2$ is a
  self-covering execution of $\Nn$.

  Second, let us show that, if $\Nn$ admits a self-covering execution
  from $m_0$, then $\treealgo(\Nn,m_0)$ contains an $\omega$-maximal
  self-covering stuttering path. Let
  $\rho=m_0\xrightarrow{t_1}m_1\cdots \xrightarrow{t_n}m_n$ be a
  self-covering execution and assume $0\leq k<n$ is a position
  s.t. $m_k\mleq m_n$. Let $\sigma_1$ denote $t_1,\ldots t_k$ and
  $\sigma_2$ denote $t_{k+1},\ldots t_n$. Let us consider the
  execution $\rho'$, defined as follows
  \begin{align*}
    \rho' &= m_0\xrightarrow{\sigma_1}m_k
    \underbrace{\xrightarrow{t_{k+1}}m_{k+1}\cdots\xrightarrow{t_n}m_n}_{\sigma_2}
    \underbrace{\xrightarrow{t_{k+1}}m_{n+1}\cdots\xrightarrow{t_n}m_{2n-k}}_{\sigma_2}
    \cdots\\
    &\phantom{=}
    \cdots\underbrace{\xrightarrow{t_{k+1}}m_{(|P|+1)n-|P|k+1}\cdots\xrightarrow{t_n}m_{(|P|+2)n-(|P|+1)k}}_{\sigma_2}
  \end{align*}
  where for all $n+1\leq j\leq (|P|+2)n-(|P|+1)k$: $m_j - m_{j-1} =
  m_{f(j)}-m_{f(j-1)}$ with $f$ the function defined as $f(x)=
  \big((x-k)\mod(n-k)\big)+k$ for all $x$. Intuitively, $\rho'$
  amounts to firing $\sigma_1(\sigma_2)^{|P|+1}$ (where $P$ is the set
  of places of $\Nn$) from $m_0$, by using, each time we fire
  $\sigma_2$, the same effect as the one that was used to obtain
  $\rho$ (remember that the effect of $\sigma_2$ is non-deterministic
  when $\omega$'s are produced). It is easy to check that $\rho'$ is
  indeed an execution of $\Nn$, because $\rho$ is a self-covering
  execution.

  Let $n_0,n_1,\ldots n_\ell$ and $h$ be the stuttering path in
  $\treealgo(\Nn,m_0)$ and the mapping corresponding to $\rho'$ (and
  whose existence is established by
  Lemma~\ref{lemma:execution-implies-path}). Since, $m_k\mleq m_n$,
  $\effect(t_{k+1}\cdots t_n)\geq \mathbf{0}$ and by
  Lemma~\ref{lemma:execution-implies-path} (lifted to \wPN), all the
  following stuttering paths are self-covering:
  \begin{align*}
    &n_0,\ldots,n_{h(k)},\ldots, n_{h(n)}\\
    &n_0,\ldots,n_{h(k)},\ldots, n_{h(n)},\ldots,n_{h(2n-k)}\\
    &n_0,\ldots,n_{h(k)},\ldots, n_{h(n)},\ldots,n_{h(2n-k)},\ldots,n_{h(3n-2k)}\\
    &\vdots\\
    &n_0,\ldots,n_{h(k)},\ldots, n_{h(n)},\ldots,n_{h(2n-k)},\ldots,n_{h(3n-2k)},\ldots,n_{h((|P|+2)n-(|P|+1)k)}\\\
  \end{align*}
  Let us show that one of them is $\omega$-maximal, i.e. that there is
  $1\leq j\leq |P|+1$
  s.t. $\nbw{n_{h(jn-(j-1)k)}}=\nbw{n_{h((j+1)n-jk)}}$. Assume it is
  not the case. Since the number of $\omega$'s can only increase along
  a stuttering path, this means that 
  \begin{align*}
    &0\leq \nbw{n_{h(n)}}<\nbw{n_{h(2n-k)}}<\nbw{n_{h(3n-2k)}}<\nbw{n_{h((|P|+2)n-(|P|+1)k)}}
  \end{align*}
  However, this implies that $\nbw{n_{h((|P|+2)n-(|P|+1)k)}}>|P|$,
  which is not possible as $P$ is the set of places of $\Nn$. Hence,
  we conclude that there exists an $\omega$-maximal self-covering
  stuttering path in  $\treealgo(\Nn,m_0)$.\qed
\end{proof}


\section{From \wPN to plain PN\label{sec:solv-cover-place}\label{sec:reach}}
Let us show that we can, from any \wPN $\Nn$, build a plain PN $\Nn'$
whose set of reachable markings allows to recover the reachability set
of $\Nn$. This construction allows to solve reachability, coverability
and (place) boundednes. The idea of the construction is depicted in
Fig.~\ref{fig:wPN2PN}. More precisely, we turn the \wPN
$\Nn=\tuple{P,T,m_0}$ into a plain PN $\Nn'=\tuple{P', T', m_0'}$
using the following procedure. Assume that $T=T_{plain}\uplus
T_\omega$, where $T_\omega$ is the set of $\omega$-transitions of
$\Nn$. Then:
\begin{enumerate}
\item We add to the net one place (called the \emph{global lock})
  $\mathsf{lock_g}$, and for each $\omega$-transition~$t$, one place
  $\mathsf{lock}_t$. That is,
  $P'=P\cup\{\mathsf{lock_g}\}\cup\{\mathsf{lock}_t\mid t\in
  T_\omega\}$.
\item Each transition $t$ in $\Nn$ is replaced by a set of transitions
  $T_t$ in $\Nn'$. In the case where $t$ is a plain transition, $T_t$
  contains a single transition that has the same effect as $t$, except
  that it also tests for the presence of a token in
  $\mathsf{lock_g}$. In the case where $t$ is an $\omega$-transition,
  $T_t$ is a set of plain transitions that simulate the effect of $t$,
  as in Fig.~\ref{fig:wPN2PN}. Formally, $T'=\cup_{t\in T} T_t$, where
  the $T_t$ sets are defined as follows:
  \begin{itemize}
  \item If $t$ is a plain transition, then $T_t=\{t'\}$, where,
    $I(t')=I(t)\cup\{\mathsf{lock_g}\}$ and
    $O(t')=O(t)\cup\{\mathsf{lock_g}\}$.
  \item If $t$ is an $\omega$-transition, then:
    \begin{align*}
      T_t &= \left\{t', t_{end}\right\}\cup \{t^p_{-\omega}\mid
      I(t)(p)=\omega\}\cup \{t^p_{+\omega}\mid O(t)(p)=\omega\}
    \end{align*}
    where $I(t')=I(t)+\{\mathsf{lock_g}\}$;
    $O(t')=I(t_{end})=\{\mathsf{lock}_t\}$;
    $O(t_{end})=\{\mathsf{lock_g}\}+ O(t)$. Furthermore, for all $p$
    s.t. $I(t)(p)=\omega$: $I(t^p_{-\omega})=\{p, \mathsf{lock}_t\}$
    and $O(t^p_{-\omega})=\{\mathsf{lock}_t\}$. Finally, for all $p$
    s.t. $O(t)(p)=\omega$: $I(t^p_{+\omega})=\{ \mathsf{lock}_t\}$ and
    $O(t^p_{-\omega})=\{p,\mathsf{lock}_t\}$.
  \end{itemize}
\item We let $f$ be the function that associates each marking $m$ of
  $\Nn$ to the marking $f(m)$ of $\Nn'$ s.t. $m'(\mathsf{lock_g})=1$;
  for all $p\in P$: $m'(p)=m(p)$; and for all $p\not\in
  P\cup\{\mathsf{lock_g}\}$: $m'(p)=0$. Then, the initial marking of
  $\Nn'$ is $f(m_0)$.
\end{enumerate}

\begin{figure}
  \centering
  \begin{tikzpicture}
    \begin{scope}

      \node[transition] (t) {$t$} ;

      \node[place] (q) [above left= of t]{$q$} ; 
      \node[place] (q1) [left= of t] {$q_1$} ; 
      \node[place] (q2) [below left= of t] {$q_2$} ;

      \node[place] (p) [above right= of t] {$p$} ; 
      \node[place] (p1) [right= of t] {$p_1$} ; 
      \node[place] (p2) [below right = of t] {$p_2$} ;
    
      \path 
      (t) edge [pre] node {} (q) 
          edge [pre] node [above] {$\omega$} (q1) 
          edge [pre] node [above=.1cm] {$\omega$} (q2) 
          edge [post] node {} (p) 
          edge [post] node [above] {$\omega$} (p1) 
          edge [post] node [above=.1cm] {$\omega$} (p2) 
    ;
    \end{scope}
    \begin{scope}[xshift=6.5cm]
      \node[place] (lockt) {$\mathsf{lock}_t$} ;
  
      \node[transition] (tp) [above left=of lockt]{$t'$} ;
      \node[transition] (tend) [above right= of lockt] {$t_{end}$} ;

      \node[place] (q) [left = of tp] {$q$} ; 
      \node[place] (lockg) [above= of lockt] {$\mathsf{lock_g}$} ; 
      \node[place] (p) [right = of tend] {$p$} ;

      \path (tp) edge [pre] (q) edge [pre] (lockg) edge [post]
      (lockt);

      \path (tend) edge [pre] (lockt) edge [post] (lockg) edge [post]
      (p);

      \node[transition] (to1) [ right= of lockt] {$t_{+\omega}^{p_1}$} ; 
      \node[transition] (to2) [below right= of lockt] {$t_{+\omega}^{p_2}$} ; 
      \node[transition] (toc1) [ left= of lockt] {$t_{-\omega}^{q_1}$} ; 
      \node[transition] (toc2) [below left= of lockt] {$t_{-\omega}^{q_2}$} ;
  
      \foreach \x in {1,2} { 
        \node[place] (p\x) [right=of to\x] {$p_\x$} ; 
        \node[place] (q\x) [left=of toc\x] {$q_\x$} ;
 
        \path (to\x) edge [pre, bend left=15] (lockt) 
                     edge [post, bend right=15] (lockt) 
                     edge [post] (p\x) ;
        
        \path (toc\x) edge [pre, bend left=15] (lockt) 
                      edge [post, bend right=15] (lockt) 
                      edge [pre] (q\x) ; 
      }
 
    \end{scope}

  \end{tikzpicture}
  \caption{Transforming an \wPN into a plain PN.}
  \label{fig:wPN2PN}
\end{figure}
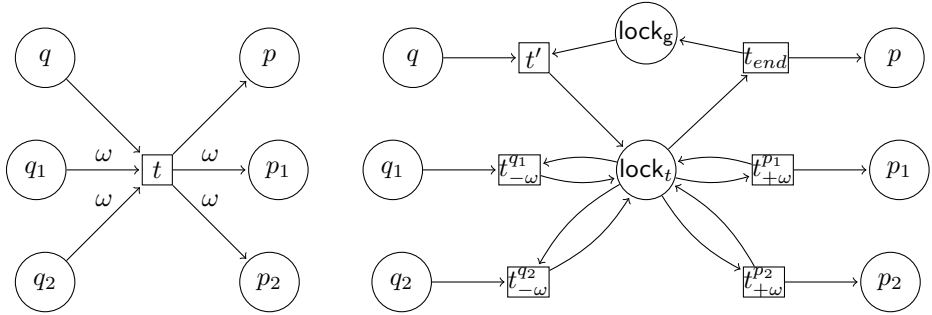

It is easy to check that:
\begin{lemma}\label{lem:encoding-in-plain-pn}
  Let $\Nn$ be an \wPN and let $\Nn'$ be its corresponding PN. Then
  $m\in\reach(\Nn)$ iff $f(m)\in\reach(\Nn')$.
\end{lemma}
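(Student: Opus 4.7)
The plan is to exploit a simple invariant of $\Nn'$: at every reachable marking $m'$ of $\Nn'$, the sum $m'(\mathsf{lock_g}) + \sum_{t \in T_\omega} m'(\mathsf{lock}_t)$ equals $1$. This follows by induction on the length of any firing sequence, since $f(m_0)$ satisfies it and every transition in $\Nn'$ consumes exactly one lock token and produces exactly one (plain transitions and $t'$ trade with $\mathsf{lock_g}$; $t_{end}$ trades $\mathsf{lock}_t$ back for $\mathsf{lock_g}$; and $t^p_{-\omega}$, $t^p_{+\omega}$ self-loop on $\mathsf{lock}_t$). A direct consequence is that at every reachable marking exactly one lock place carries a token, and hence any firing sequence of $\Nn'$ that ends with $\mathsf{lock_g}=1$ must complete every gadget it starts.

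For the forward direction ($\Rightarrow$), I would proceed by induction on the length of an execution $m_0 \xrightarrow{\sigma} m$ in $\Nn$. For each step $m_{i-1} \xrightarrow{t} m_i$, I construct a firing sequence from $f(m_{i-1})$ to $f(m_i)$ in $\Nn'$. If $t$ is plain, the unique transition of $T_t$ suffices. If $t$ is an $\omega$-transition, then the concrete semantics of $\Nn$ witnesses the step with values $i_p \in \{0, \ldots, m_{i-1}(p)\}$ for each place $p$ with $I(t)(p) = \omega$ and $o_q \geq 0$ for each place $q$ with $O(t)(q) = \omega$. One fires $t'$, then $t^p_{-\omega}$ exactly $i_p$ times for each $\omega$-input place $p$, then $t^q_{+\omega}$ exactly $o_q$ times for each $\omega$-output place $q$, and finally $t_{end}$. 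A direct computation shows this sequence is firable from $f(m_{i-1})$ (the consumption counts respect the tokens available in the $\omega$-input places) and reaches exactly $f(m_i)$.

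For the backward direction ($\Leftarrow$), let $\sigma'$ witness $f(m_0) \xrightarrow{\sigma'} f(m)$ in $\Nn'$. Using the lock invariant and the fact that both endpoints satisfy $\mathsf{lock_g} = 1$, I factor $\sigma'$ uniquely as a concatenation $\sigma'_1 \cdots \sigma'_k$ of maximal blocks, each starting and ending with $\mathsf{lock_g}=1$ but not passing through such a marking strictly in between. Each block is either a single plain transition from some $T_t$ (simulating the plain $\Nn$-transition $t$) or a complete gadget run $t' \cdot w \cdot t_{end}$ for some $\omega$-transition $t$, where $w$ is a sequence of middle transitions $t^p_{-\omega}$ and $t^q_{+\omega}$: no other transition can fire while $\mathsf{lock}_t$ holds the token, and $t_{end}$ must eventually be fired to restore $\mathsf{lock_g}$. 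This block simulates the firing of $t$ in $\Nn$, with $\omega$-input consumption and $\omega$-output production given by the counts of the corresponding middle transitions in $w$; by firability of $w$ in $\Nn'$, these counts lie in the ranges mandated by the concrete semantics of $\Nn$. Concatenating the induced $\Nn$-transitions yields an execution of $\Nn$ from $m_0$ ending precisely at $m$.

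The main (minor) obstacle is purely bookkeeping in the backward direction: checking that during a gadget phase the only firable transitions belong to $T_t$ for the specific $t$ whose lock is held, and that the tokens consumed by $t^p_{-\omega}$ never exceed $m_{i-1}(p)$ at the start of the phase. Both facts are immediate from the lock invariant and the definition of $T_t$, so the proof reduces to routine verification.
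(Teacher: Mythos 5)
The paper itself offers no proof of this lemma --- it is introduced with ``It is easy to check that'' and no argument appears in the appendix --- so there is no paper proof to compare against; your proposal supplies the missing argument, and its architecture (the lock invariant, step-by-step simulation for the forward direction, block decomposition at markings with $\mathsf{lock_g}=1$ for the backward direction) is exactly the argument the construction is designed to support. The forward direction is correct as written: firing $t'$, then the middle transitions with multiplicities $i_p$ and $o_q$ taken from the concrete step in $\Nn$, then $t_{end}$, is firable and lands on $f(m_i)$.

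One claim in your backward direction is not literally true and needs a small repair. You assert that firability of the middle segment $w$ in $\Nn'$ forces the counts to ``lie in the ranges mandated by the concrete semantics of $\Nn$.'' This fails when a single place $p$ has \emph{both} $I(t)(p)=\omega$ and $O(t)(p)=\omega$ (a combination the paper explicitly allows): inside the block one can alternate $t^p_{+\omega}$ and $t^p_{-\omega}$ starting from $m(p)=0$, yielding counts such as $i_p=o_p=5$, whereas the concrete semantics of $\Nn$ requires the consumption to satisfy $i_p\le m(p)=0$. The fix is to match the \emph{net} effect rather than the raw counts: for such places set $i_p'=\max(i_p-o_p,0)$ and $o_p'=\max(o_p-i_p,0)$; then $o_p'-i_p'=o_p-i_p$, and $i_p'\le m(p)$ holds because the marking at the end of the block is non-negative, so $m(p)+o_p-i_p\ge 0$. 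With the adjusted choices the block still corresponds to a legal concrete firing of $t$ in $\Nn$ reaching the same marking, and the rest of your argument (the lock invariant, the impossibility of interleaving gadgets, and the forced completion of every gadget because the final marking has $\mathsf{lock_g}=1$) goes through unchanged.
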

The above construction can be carried out in polynomial time. Thus,
\wPN generalise Petri nets, the known complexities for reachability
\cite{lipton63,May84}, (place) boundedness and coverability
\cite{Rackoff78} carry on to \wPN:
\begin{corollary}
  Reachability for \wPN is decidable and
  \textsc{ExpSpace}-hard. Coverability, boundedness and place
  boundedness for \wPN are \textsc{ExpSpace}-c.
\end{corollary}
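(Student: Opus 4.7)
The plan is to exploit the polynomial-time reduction from \wPN to plain PN established in the preceding Lemma, and to complement the resulting upper bounds with lower bounds obtained by observing that every plain PN is already a (degenerate) \wPN.

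The construction of $\Nn'$ from $\Nn$ can be carried out in time polynomial in $|\Nn|$: only one global lock plus one lock per $\omega$-transition is added, and each $\omega$-transition $t$ is replaced by $2+\nbw{I(t)}+\nbw{O(t)}$ plain transitions; the translation $m\mapsto f(m)$ of markings is likewise polynomial. Combined with the preceding Lemma, this immediately gives decidability of \wPN-reachability via the Mayr/Kosaraju result \cite{May84}, and \textsc{ExpSpace}-hardness follows from \cite{lipton63} applied to the plain PN, which is already captured as a syntactic special case of \wPN.

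For coverability and (place) boundedness, the main technical point is to transfer these properties between $\Nn$ and $\Nn'$, since the preceding Lemma only speaks about exact reachability. The key observation, verified by inspecting the transitions in Fig.~\ref{fig:wPN2PN}, is that $\Nn'$ admits the invariant $\mathsf{lock}_g + \sum_{t\in T_\omega}\mathsf{lock}_t = 1$ at every reachable marking. Hence every reachable marking of $\Nn'$ is either \emph{stable}, with $\mathsf{lock}_g=1$ (and then of the form $f(\widetilde m)$ for some $\widetilde m\in\reach(\Nn)$ by the Lemma), or \emph{transient}, with some $\mathsf{lock}_t=1$ (and then reached after firing $t'$ from a stable predecessor and pumping the $t^p_{\pm\omega}$ transitions). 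From this decomposition, $m$ is coverable in $\Nn$ iff $f(m)$ is coverable in $\Nn'$ (the invariant forces any covering witness of $f(m)$ to be stable), and a place $p\in P$ is unbounded in $\Nn$ iff it is unbounded in $\Nn'$: unboundedness via stable witnesses transfers both ways, while unboundedness via transient witnesses in $\Nn'$ implies that some $\omega$-transition $t$ with $O(t)(p)=\omega$ was enabled in $\Nn$ at the preceding stable marking, which already makes $p$ unbounded in $\Nn$ by firing $t$. Boundedness of $\Nn$ then reduces to the conjunction of place-boundedness over all $p\in P$, the lock places being bounded by $1$.

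Together with Rackoff's \textsc{ExpSpace} algorithm \cite{Rackoff78} applied to $\Nn'$, this yields the claimed upper bounds, while the matching lower bounds are inherited from plain PN. The main subtlety I anticipate is precisely this transfer argument for transient markings in the place-boundedness direction: one must rule out that unboundedness in $\Nn'$ is an artefact of the internal simulation rather than genuine unboundedness in $\Nn$, which is exactly what the enabling observation above takes care of.
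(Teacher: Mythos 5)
Your proposal is correct and takes essentially the same route as the paper: the paper derives this corollary directly from the polynomial-time construction of the preceding Lemma, inheriting lower bounds because plain PN are a syntactic special case of \wPN and upper bounds by running the known PN algorithms (Mayr, Lipton, Rackoff) on $\Nn'$. The coverability and (place-)boundedness transfer that you work out explicitly via the lock invariant and the stable/transient decomposition is exactly the part the paper leaves implicit under ``it is easy to check,'' so your write-up is a more detailed rendering of the same argument rather than a different one.
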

This justifies the result given in Table~\ref{tab:complexity} for
reachability, coverability and (place) boundedness, for \wPN.

However, the above construction fails for deciding termination. For
instance, assume that the leftmost part of Fig.~\ref{fig:wPN2PN} is an
\wPN $\Nn=\tuple{P,T,m_0}$ with $m_0(q)=1$. Clearly, all executions of
$\Nn$ are finite, while $t'(t^{p_1}_{+\omega})^\omega$ is an infinite
transition sequence that is firable in $\Nn'$. Termination, however is
decidable, by the \treeshortname technique of
Section~\ref{sec:solv-term}, and \textsc{ExpSpace}-hard, as \wPN
generalise Petri nets. In the next section, we show that the Rackoff
technique \cite{Rackoff78} can be generalised to \wPN, and prove that
termination is \textsc{ExpSpace}-c for \wPN.


\section{Extending the Rackoff technique for \wPN\label{sec:extend-rack-techn}}
In this section, we extend the Rackoff technique to \wPN to prove the
existence of short self-covering sequences. For applications of
interest, such as the termination problem, it is sufficient to
consider \wOPN, as proved in Lemma~\ref{lem-rem-I-w}. Hence, we only
consider \wOPN in this section.

As observed in \cite{Rackoff78}, beyond some large values, it is not
necessary to track the exact value of markings to solve some problems.
We use threshold functions $\threshold: \{0, \ldots, |P|\} \to \NN$ to
specify such large values. Let $\nbwb{m} = |\{p \in P \mid m(p) \in
\NN\}|$.
\begin{definition}
  \label{def:thresholdMarkings}
  Let $\threshold: \{0, \ldots, |P|\} \to \NN$ be a threshold
  function. Given an $\omega$-marking $m$, the markings
  $\ceilthr{m}{\threshold}$ and $\floorthr{m}{\threshold}$ are defined
  as follows:
  \begin{align*}
    (\ceilthr{m}{\threshold}) ( p) & =
    \begin{cases}
      m( p) & \text{if } m(p) < \threshold( \nbwb{m}),\\
      \omega & \text{otherwise.}
    \end{cases}
    \\
    ( \floorthr{m}{\threshold}) (p) & =
    \begin{cases}
      m(p) & \text{if } m(p) \in \NN,\\
      \threshold( \nbwb{m} + 1) & \text{otherwise.}
    \end{cases}
  \end{align*}
\end{definition}
In $\ceilthr{m}{\threshold}$, values that are too high are abstracted
by $\omega$. In $\floorthr{m}{\threshold}$, $\omega$ is replaced by
the corresponding natural number. This kind of abstraction is
formalized in the following threshold semantics.
\begin{definition}
  \label{def:thresholdPN}
  Given an \wPN $\Nn$, a transition $t$, an $\omega$-marking $m$ that
  enables $t$ and a threshold function $\threshold$, we define the
  transition relation $\tstep{t}{\threshold}$ as $m
  \tstep{t}{\threshold} \ceilthr{m + \effect(t)}{\threshold}$.
\end{definition}
The transition relation $\tstep{t}{\threshold}$ is extended to
sequences of transitions in the usual way. 
Note that if $m \tstep{t}{\threshold} m'$, then
$\cbasis{m} \subseteq \cbasis{m'}$. In words, a place marked $\omega$
will stay that way along any transition in threshold semantics.

Let $\maxred = \max\{|\effect(t)(p)| \mid t\in T, p \in P,
\effect(t)(p) < \omega\}$. The following proposition says that
$\omega$ can be replaced by natural numbers that are large enough so
that sequences are not disabled. The proof is by a routine induction
on the length of sequences, using the fact that in an \wOPN, any
transition can reduce at most $\maxred$ tokens from any place.
\begin{proposition}
  \label{prop:thSemToNatSem}
  For some $\omega$-markings $m_{1}$ and $m_{2}$, suppose $m_{1}
  \tstep{\seq}{\threshold} m_{2}$ and $\cbasis{m_{2}} =
  \cbasis{m_{1}}$. If $m_{1}'$ is a marking such that $m_{1}'
  \mleqp{\cbasis{m_{1}}} m_{1}$ and $m_{1}' ( p) \ge \maxred |\seq|$
  for all $p \in \cbasis{m_{1}}$, then $m_{1}' \step{\seq} m_{2}'$
  such that $m_{2}' \mleqp{\cbasis{m_{2}}} m_{2}$ and $m_{2}' ( p) \ge
  m_{1}'(p) - \maxred |\seq|$.
\end{proposition}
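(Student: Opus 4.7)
The plan is to proceed by induction on $|\seq|$. The base case $|\seq|=0$ forces $m_1 = m_2$, and we simply take $m_2' = m_1'$.

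For the inductive step, I would write $\seq = \seq_0 t$ and let $m$ be the intermediate $\omega$-marking, so that $m_1 \tstep{\seq_0}{\threshold} m \tstep{t}{\threshold} m_2$. The key preliminary is that $\cbasis{\cdot}$ is non-decreasing along a threshold execution (as noted after Definition~\ref{def:thresholdPN}); combined with $\cbasis{m_1} = \cbasis{m_2}$, this forces $\cbasis{m_1} = \cbasis{m} = \cbasis{m_2}$. Applying the induction hypothesis to the prefix (noting $m_1'(p) \ge \maxred |\seq| \ge \maxred |\seq_0|$) yields a concrete execution $m_1' \step{\seq_0} m'$ with $m' \mleqp{\cbasis{m}} m$ and $m'(p) \ge m_1'(p) - \maxred |\seq_0|$ for all $p \in \cbasis{m_1}$.

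Next I would show that $t$ is concretely firable from $m'$. For $p \notin \cbasis{m}$, one has $m'(p) = m(p) \ge I(t)(p)$ because $m$ enables $t$ in the threshold semantics and $I(t)(p) < \omega$ (we are in an \wOPN). For $p \in \cbasis{m}$, the inductive invariant gives $m'(p) \ge \maxred |\seq| - \maxred(|\seq|-1) = \maxred$, which by the interpretation of $\maxred$ as the maximum per-step token decrease at any place suffices to fire $t$. I would then fire $t$, choosing (at each place with $O(t)(p) = \omega$) to produce at least one token, and let $m_2'$ denote the resulting marking.

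Finally, I would check the two conclusions. For $m_2' \mleqp{\cbasis{m_2}} m_2$: on $p \in \cbasis{m_2}$, $m_2(p) = \omega$ so the condition is vacuous; on $p \notin \cbasis{m_2} = \cbasis{m}$, both $m'$ and $m$ coincide, $\effect(t)(p)$ is finite, and $m(p) + \effect(t)(p)$ stays below $\threshold(\nbwb{m_2})$ (otherwise $p$ would have entered $\cbasis{m_2}$), so $m_2'(p) = m'(p) + \effect(t)(p) = m(p) + \effect(t)(p) = m_2(p)$. For the lower bound on $\cbasis{m_1}$: the concrete firing of $t$ decreases $m'(p)$ by at most $\maxred$, either because $|\effect(t)(p)| \le \maxred$ when $\effect(t)(p) < \omega$, or because $O(t)(p) = \omega$ implies the net change is at least $-I(t)(p) \ge -\maxred$. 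Hence $m_2'(p) \ge m'(p) - \maxred \ge m_1'(p) - \maxred |\seq|$, closing the induction. The main obstacle is the firability step at $\omega$-places of $m$: one must ensure that the surplus guaranteed by the invariant actually covers $I(t)(p)$, which is exactly what the factor $\maxred$ in the hypothesis $m_1'(p) \ge \maxred |\seq|$ is calibrated to provide.
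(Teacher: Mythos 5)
Your proof takes exactly the route the paper intends: the paper's entire proof of this proposition is the remark that it follows by ``a routine induction on the length of sequences, using the fact that in an \wOPN, any transition can reduce at most $\maxred$ tokens from any place'', and your induction on $|\seq|$ --- peeling off the last transition, using the monotonicity of $\cbasis{\cdot}$ along threshold runs to get $\cbasis{m_1}=\cbasis{m}=\cbasis{m_2}$, and checking the two conclusions separately on $\cbasis{m_2}$ and on its complement --- is precisely that argument written out in full.

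However, the step you yourself flag as ``the main obstacle'' is a genuine gap, and it is a gap in the paper's statement as much as in your proof. To fire $t$ concretely from $m'$ at a place $p\in\cbasis{m}$ you need $m'(p)\ge I(t)(p)$, and you close this by claiming that $m'(p)\ge\maxred$ ``suffices to fire $t$'', i.e.\ that $I(t)(p)\le\maxred$. That does not follow from the definition $\maxred=\max\{|\effect(t)(p)| \mid t\in T,\ p\in P,\ \effect(t)(p)<\omega\}$, which bounds \emph{net} effects, not input weights: a transition with $I(t)(p)=O(t)(p)=5$ has $\effect(t)(p)=0$ yet needs five tokens to fire. Indeed, the proposition as literally stated is false: take $P=\{p\}$, a transition $t$ with $I(t)(p)=O(t)(p)=5$, a second transition $u$ with $\effect(u)(p)=1$ (so $\maxred=1$), $m_1(p)=\omega$, $\seq=t$, and $m_1'(p)=1=\maxred|\seq|$; every hypothesis holds, yet $t$ is not firable from $m_1'$. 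The same objection applies to your decrease bound at $\omega$-output places (``the net change is at least $-I(t)(p)\ge-\maxred$''), though there it is repairable within your own proof by choosing to produce at least $I(t)(p)$ tokens, making the net change non-negative. The firability step is not repairable without amending the definitions: one must strengthen $\maxred$ so that it also dominates all input weights $I(t)(p)$ (this changes none of the paper's complexity bounds, since such an $\maxred$ remains polynomial in the size of the net). With that amendment, your induction --- and the paper's sketch --- goes through verbatim.
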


\begin{definition}
  \label{def:thPumpSeq}
  Given an $\omega$-marking $m_{1}$ and a threshold function
  $\threshold$, an \emph{$\omega$-maximal threshold
  pumping sequence} ($\threshold$-PS) enabled at $m_{1}$ is a
  sequence $\seq$ of transitions such that $m_{1}
  \tstep{\seq}{\threshold} m_{2}$, $\eff( \seq) \ge \0$ and
  $\cbasis{m_{2}} = \cbasis{m_{1}}$.
\end{definition}
In the above definition, note that we require $\effect( \seq) ( p) \ge
0$ for \emph{any} place $p$, irrespective of whether $m_{1} (p) = \omega$
or not.
\begin{definition}
  \label{def:loops}
  Suppose $\seq$ is an $\omega$-maximal $\threshold$-PS enabled at
  $m_{1}$ and $\seq = \seq_{1} \seq_{2} \seq_{3}$ such that $m_{1}
  \tstep{\seq_{1}}{\threshold} m_{3} \tstep{\seq_{2}}{\threshold}
  m_{3} \tstep{\seq_{3}}{\threshold} m_{2}$. We call $\seq_{2}$ a
  \emph{simple loop} if all intermediate $\omega$-markings obtained
  while firing $\seq_{2}$ from $m_{3}$ (except the last one, which is
  $m_{3}$ again) are distinct from one another.
\end{definition}
In the above definition, since $m_{3} \tstep{\seq_{2}}{\threshold}
m_{3}$ and $m_{1} \tstep{\seq_{1} \seq_{3}}{ \threshold} m_{2}$, one
might be tempted to think that $\seq_{1}\seq_{3}$ is also an
$\omega$-maximal $\threshold$-PS enabled at $m_{1}$. This is however
not true in general, since there might be some $p \in \cbasis{m_{1}}$
such that $\effect( \seq_{1}\seq_{3}) ( p) < 0$ (which is compensated
by $\seq_{2}$ with $\effect( \seq_{2}) (p) >0$). The presence of the
simple loop $\seq_{2}$ is required due to its compensating effect. The
idea of the proof of the following lemma is that if there are a large
number of loops, it enough to retain a few to get a shorter
$\omega$-maximal $\threshold$-PS.

\begin{lemma}
  \label{lem:shortThPumpSeq}
  There is a constant $d$ such that for any \wPN{} $\Nn$, any
  threshold function $\threshold$ and any $\omega$-maximal
  $\threshold$-PS $\seq$ enabled at some $\omega$-marking $m_{1}$,
  there is an $\omega$-maximal $\threshold$-PS $\seq'$ enabled at
  $m_{1}$, whose length is at most $(\threshold(\nbwb{m_{1}}) 2
  \maxred)^{d |P|^{3}}$.
\end{lemma}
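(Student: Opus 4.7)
I follow Rackoff's shortening argument \cite{Rackoff78}, inducting on $j = \nbwb{m_{1}}$, the number of non-$\omega$ places of $m_{1}$. Define $g(j)$ as the worst-case length needed for a shortened $\omega$-maximal $\threshold$-PS enabled at any $m_{1}$ with $\nbwb{m_{1}} \le j$; the goal is a recurrence of the form $g(j) \le g(j-1) + (\threshold(j) \cdot 2\maxred)^{c |P|^{2}}$ for a fixed constant $c$, which on unrolling yields the announced bound with $d$ chosen so that $d|P|^{3}$ dominates the telescoped exponents. The base case $j=0$ reduces to the pure-effect problem: by monotonicity of $\cbasis{\cdot}$ along $\tstep{\cdot}{\threshold}$ every intermediate threshold marking along $\seq$ equals $m_{1}$, so every transition is trivially enabled and $\cbasis{m_{2}} = \cbasis{m_{1}}$ is automatic; the only substantive requirement is $\effect(\seq') \mgeq \0$ (together with whatever positivity pattern $\seq$ was witnessing). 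This is a system of $|P|$ linear inequalities on the Parikh image of $\seq'$ with coefficients of magnitude at most $\maxred$, and a standard integer Carath\'eodory / ILP bound produces a solution supported on $O(|P|)$ transitions with entries polynomial in $|P|$ and $\maxred$.

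For the inductive step, I split $\seq = \seq_{1} \cdot t \cdot \seq_{2}$, where $\seq_{1}$ is the longest prefix along which $\cbasis{\cdot}$ stays equal to $\cbasis{m_{1}}$ and $t$ is the first transition (if any) whose firing strictly enlarges $\cbasis{\cdot}$; if no such $t$ exists, take $\seq_{1} = \seq$. Along $\seq_{1}$, the non-$\omega$ coordinates take values in $\{0, \ldots, \threshold(j) - 1\}$, so the threshold-semantics markings reachable from $m_{1}$ form a finite set of size at most $\threshold(j)^{j}$. I view $\seq_{1}$ as a walk in this finite graph and decompose it into a simple backbone path together with simple loops in the sense of Definition~\ref{def:loops}. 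A bounded-coefficient ILP on the Parikh image captures the local enabling constraints at every step and, jointly with $\seq_{2}$, the global non-negativity constraint $\effect(\seq_{1} \cdot t \cdot \seq_{2}) \mgeq \0$; its minimal solution is supported on $O(|P|)$ loops with multiplicities at most $(\threshold(j) \cdot 2\maxred)^{O(j)}$, bounding the shortened $\seq_{1}$ by $(\threshold(j) \cdot 2\maxred)^{c|P|^{2}}$. The suffix $\seq_{2}$ starts at an $\omega$-marking $m'$ with $\nbwb{m'} \le j-1$, so the inductive hypothesis, combined with Proposition~\ref{prop:thSemToNatSem} to transfer modifications between the threshold and the natural semantics, shortens it to length at most $g(j-1)$.

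\textbf{Main obstacle.} The subtle point is that the shortenings of $\seq_{1}$ and $\seq_{2}$ are not independent: the global constraint $\effect(\seq) \mgeq \0$ couples them through places where one sub-sequence contributes negatively and the other must compensate. I resolve this by phrasing both shortenings as a single joint ILP on the combined Parikh image, enforcing enabling locally (via the threshold invariant preserved thanks to Proposition~\ref{prop:thSemToNatSem}) but enforcing non-negativity globally over $\seq_{1} \cdot t \cdot \seq_{2}$. This coupling is precisely what pushes the overall exponent from $|P|^{2}$ to $|P|^{3}$: each of the $|P|$ inductive levels contributes a factor whose exponent is itself quadratic in $|P|$, and proving that the resulting minimal ILP solution does indeed fit inside the announced bound is the main technical step.
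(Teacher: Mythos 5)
Your proposal has a structural flaw before any technical details are reached: the induction on $j=\nbwb{m_{1}}$ is vacuous, because the case your inductive step handles cannot occur. By the remark following Definition~\ref{def:thresholdPN}, $\cbasis{\cdot}$ is non-decreasing along any run in threshold semantics, and Definition~\ref{def:thPumpSeq} requires $\cbasis{m_{2}}=\cbasis{m_{1}}$; hence \emph{every} intermediate $\omega$-marking of an $\omega$-maximal $\threshold$-PS has exactly the $\omega$-set $\cbasis{m_{1}}$, and the ``first transition $t$ whose firing strictly enlarges $\cbasis{\cdot}$'' never exists. Consequently all of your inductive machinery --- the split $\seq_{1}\cdot t\cdot\seq_{2}$, the joint ILP coupling the two parts, the appeal to Proposition~\ref{prop:thSemToNatSem} (which concerns the transfer from threshold to natural semantics and plays no role in this purely threshold-semantics statement), and the recurrence $g(j)\le g(j-1)+(\threshold(j)\,2\maxred)^{c|P|^{2}}$ --- is dead code, and what remains is your treatment of $\seq_{1}$ alone: backbone plus simple loops plus a small-solution bound, which is indeed the spirit of the paper's proof (its Steps 2--4, via the Borosh--Treybig theorem \cite{BT76}). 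Note also that your exponent accounting is internally inconsistent: an \emph{additive} recurrence over $|P|$ levels, each costing $(\cdot)^{c|P|^{2}}$, unrolls to roughly $|P|\cdot(\cdot)^{c|P|^{2}}$, not $(\cdot)^{d|P|^{3}}$; in the paper the cube comes from a single application of the small-solution theorem (a matrix with entries exponential in $|P|$ and at most $|P|$ rows, combined with simple-loop lengths up to $\threshold(\nbwb{m_{1}})^{|P|}$), not from stacking inductive levels.

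The more serious gap is that you never handle $\omega$-valued effects, which is precisely where this lemma departs from Rackoff. The constraint $\effect(\seq')\mgeq\0$ is \emph{not} a linear constraint on the Parikh image when some transition $t$ has $\effect(t)(p)=\omega$: the effect of a sequence on such a place $p$ is $\omega$ if at least one such transition occurs in it, and an integer sum otherwise. The paper's Step~1 (explicitly flagged as the step that differs most from Rackoff, following \cite{BJK2010}) introduces the abstraction $\effabs{P_{\omega}}{\cdot}$, where $P_{\omega}$ is the set of places on which some transition of $\seq$ has effect $\omega$: on $P_{\omega}$ it records a $0/1$ indicator of whether an $\omega$-effect transition is present, and the Diophantine system forces the shortened sequence to retain, for each $p\in P_{\omega}$, at least one transition with effect $\omega$ on $p$. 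Without this device your ILP ``with coefficients of magnitude at most $\maxred$'' is not even expressible, and a naive shortening could drop all $\omega$-effect transitions on a place whose integer-summed effect is negative, destroying $\effect(\seq')\mgeq\0$. Relatedly, a Parikh-image ILP does not by itself guarantee that the shortened sequence is a run: the paper obtains firability for free by keeping a backbone $\seq''$ that still contains every intermediate $\omega$-marking of the original run and re-inserting the chosen simple loops at the $\omega$-markings where they were originally fired (threshold semantics being deterministic); your proposal would need to make this anchoring argument explicit rather than delegating enabling to ``local'' ILP constraints.
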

\begin{proof}[Sketch]
  This proof is similar to that of \cite[Lemma 4.5]{Rackoff78}, with
  some modifications to handle $\omega$-transitions. It is organized
  into the following steps.
  \begin{list}{Step}{}
    \item[Step 1:] We first associate a vector with a sequence of
      transitions to measure the effect of the sequence. This is the
      step that differs most from that of \cite[Lemma 4.5]{Rackoff78}.
      The idea in this step is similar to the one used in \cite[Lemma
      7]{BJK2010}.
    \item[Step 2:] Next we remove some simple loops from $\seq$ to
      obtain $\seq''$ such that for every intermediate
      $\omega$-marking $m$ in the run $m_{1} \tstep{\seq}{\threshold}
      m_{2}$, $m$ also occurs in the run $m_{1}
      \tstep{\seq''}{\threshold} m_{2}$.
    \item[Step 3:] The sequence $\seq''$ obtained above need not be a
      $\threshold$-PS. With the help of the vectors defined in step 1,
      we formulate a set of linear Diophantine equations that encode
      the fact that the effects of $\seq''$ and the simple loops that
      were removed in step 2 combine to give the effect of a
      $\threshold$-PS.
    \item[Step 4:] Then we use the result about existence of small solutions to
      linear Diophantine equations to construct a sequence $\seq'$
      that meets the length constraint of the lemma.
    \item[Step 5:] Finally, we prove that $\seq'$ is a $\threshold$-PS
      enabled at $m_{1}$.
  \end{list}
  
  \emph{Step 1}: Let $P_{\omega} \subseteq \cbasis{m_{1}}$ be the set
  of places $p$ such that some transition $t$ in $\seq$ has
  $\effect(t) ( p) = \omega$. If we ensure that for each place $p \in
  P_{\omega}$, some transition $t$ with $\effect (t) (p) = \omega$ is
  fired, we can ignore the effect of other transitions on $p$. This is
  formalized in the following definition of the effect of any sequence
  of transitions $\seq_{1} = t_{1} \cdots t_{r}$. We define the
  function $\effabs{P_{\omega}}{\seq_{1}}: \cbasis{m_{1}} \to \ZZ$ as
  follows.
  \begin{align*}
    \effabs{P_{\omega}}{\seq_{1}}(p) =
    \begin{cases}
      1 & p\in P_{\omega}, \exists i\in \{1, \ldots, r\}: \effect(t_{i})(p) = \omega\\
      0 & p\in P_{\omega}, \forall i\in \{1, \ldots, r\}: \effect(t_{i})(p) \ne \omega\\
      \sum_{1 \le i \le r} \effect ( t_{i})(p) & \text{otherwise}
    \end{cases}
  \end{align*}
  Applying the above definition to simple loops, it is possible to
  remove some of them to get shorter pumping sequences. Details about
  how to do it are in the remaining steps of the proof, which are
  moved to the appendix.\qed
\end{proof}

\begin{definition}
  \label{def:ThFns}
  Let $c = 2 d$. The functions $ \threshold_{1}, \threshold_{2},
  \lenfn: \NN \to \NN$ are as follows:
  \begin{align*}
    \threshold_{1}(0) & =  1 &
    \lenfn(0) & = (2 \maxred)^{c |P|^{3}} & \threshold_{2} ( 0 ) & =
    \maxred\\
    \threshold_{1} (i+1) & = 2 \maxred \lenfn(i) & \lenfn(i + 1) & =
    (\threshold_{1} (i + 1)  2 \maxred)^{c |P|^{3}} &
    \threshold_{2} ( i + 1) & = \maxred \lenfn(i)
  \end{align*}
\end{definition}
All the above functions are non-decreasing. Due to the selection of
the constant $c$ above, we have $(2 x \maxred)^{c |P|^{3}} \ge x^{|P|}
+ ( 2 x \maxred)^{d |P|^{3}}$ for all $x \in \NN$.

\newlength{\myl}
\setlength{\myl}{1cm}

\tikzstyle{connect}=[->,rounded corners=0.2\ml]

\tikzstyle{place}=[shape=circle,draw=black,inner sep=0,
minimum size=0.6\myl,transform shape]

\usetikzlibrary{decorations}
\usepgflibrary{decorations.pathreplacing}

The goal is to prove that if there is a self-covering execution, there
is one whose length is at most $\lenfn(|P|)$. That proof uses the
result of Lemma~\ref{lem:shortThPumpSeq} and the definition of
$\lenfn$ above reflects it. For the intuition behind the definition of
$\threshold_{1}$ and $\threshold_{2}$, suppose that the proof of the
length upper bound of $\lenfn(|P|)$ is by induction on $|P|$ and we
have proved the result for $|P| = i$. For the case of $i+1$, we want
to decide the value beyond which it is safe to abstract by replacing
numbers by $\omega$.
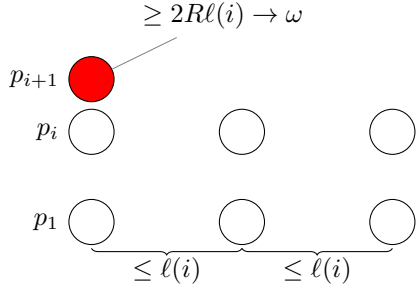
\begin{figure}[!t]
 \centerline{%
    \begin{tikzpicture}[>=stealth]
  \node[place, label=180:$p_{1}$] (p11) at (0\myl,0\myl) {};
  \node[place, label=180:$p_{i}$] (p1i) at ([yshift=1.2\myl]p11) {};
  \node[place, fill=red, label=180:$p_{i+1}$, pin=45:$\ge 2 \maxred \lenfn(i) \to \omega$] (p1i1) at ([yshift=0.7\myl]p1i) {};

  \node[place] (p21) at ([xshift=2\myl]p11) {};
  \node[place] (p2i) at (p21 |- p1i) {};

  \node[place] (p31) at ([xshift=2\myl]p21) {};
  \node[place] (p3i) at (p31 |- p1i) {};

  \draw[decorate, decoration=brace, auto=left] ([yshift=-0.35\myl]p21.center) to node {$\le \lenfn(i)$} ([yshift=-0.35\myl]p11.center);
  \draw[decorate, decoration=brace, auto=left] ([yshift=-0.35\myl]p31.center) to node {$\le \lenfn(i)$} ([yshift=-0.35\myl]p21.center);
\end{tikzpicture}

}
  \caption{Intuition for the threshold functions}
  \label{fig:ThFnIntuition}
\end{figure}
As shown in Fig.~\ref{fig:ThFnIntuition}, suppose the initial prefix
of a self-covering execution for $i$ places is of length at most
$\lenfn(i)$. Also suppose the pumping portion of the self-covering
execution is of length at most $\lenfn(i)$. The total length is at
most $2 \lenfn(i)$. Since each transition can reduce at most
$\maxred$ tokens from any place, it is enough to have $2 \maxred
\lenfn(i)$ tokens in $p_{i+1}$ to safely replace numbers by $\omega$.

The following lemma shows that if some $\omega$-marking can be reached
in threshold semantics, a corresponding marking can be reached in the
natural semantics where $\omega$ is replaced by a value large enough
to solve the termination problem.
\begin{lemma}
  \label{lem:thCovToNatCov}
  For some $\omega$-markings $m_{3}$ and $m_{4}$, suppose $m_{3}
  \tstep{\seq}{\threshold_{1}} m_{4}$. Then there is a sequence
  $\seq'$ such that $\floorthr{m_{3}}{\threshold_{1}} \step{\seq'}
  m_{4}'$, $m_{4}' \mgeqp{\cbasis{m_{4}}}
  \floorthr{m_{4}}{\threshold_{2}}$ and $|\seq'| \le
  \threshold_{1}(\nbwb{m_{3}})^{|P|}$.
\end{lemma}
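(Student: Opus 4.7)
The plan is to induct on $\nbwb{m_3}$. In the base case $\nbwb{m_3}=0$, every place of $m_3$ is $\omega$ and, because threshold semantics only ever promotes naturals to $\omega$, the same holds of $m_4$; so $\cbasis{m_4}=P$ and I would take $\seq'=\varepsilon$. Then $\floorthr{m_3}{\threshold_1}$ puts $\threshold_1(1)=2\maxred\lenfn(0)$ tokens on every place, which exceeds the $\threshold_2(1)=\maxred\lenfn(0)$ required by $\floorthr{m_4}{\threshold_2}$, and the length bound $\threshold_1(0)^{|P|}=1$ is vacuous.

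For the inductive step I would split into two subcases, both proceeding by first shortening the given $\seq$ in threshold semantics and then simulating it in the natural semantics via Proposition~\ref{prop:thSemToNatSem}. \emph{Case A} is when $\cbasis{m_3}=\cbasis{m_4}$, so no new $\omega$-place is created. Every intermediate threshold marking then has $\omega$ exactly on $\cbasis{m_3}$ and natural values in $\{0,\ldots,\threshold_1(\nbwb{m_3})-1\}$ on the other $\nbwb{m_3}$ places, giving at most $\threshold_1(\nbwb{m_3})^{\nbwb{m_3}}$ distinct threshold markings. Cutting out any repeating sub-segment, which is safe because the $\omega$-places are unaffected in threshold semantics, produces a shortened $\seq$ of that length. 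The starting natural marking has $\threshold_1(\nbwb{m_3}+1)=2\maxred\lenfn(\nbwb{m_3})$ tokens on each place of $\cbasis{m_3}$, comfortably dominating $\maxred\cdot|\seq|$, so Proposition~\ref{prop:thSemToNatSem} yields a natural simulation that reaches an $m_4'$ matching $m_4$ off $\cbasis{m_4}$ and retaining at least $\maxred\lenfn(\nbwb{m_3})=\threshold_2(\nbwb{m_4}+1)$ tokens on $\cbasis{m_4}$.

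\emph{Case B} is when $\cbasis{m_3}\subsetneq\cbasis{m_4}$. I would decompose $\seq=\seq_1\cdot t\cdot\seq_2$ where $t$ is the first transition that introduces a new $\omega$-place, so $m_3\tstep{\seq_1}{\threshold_1}m\tstep{t}{\threshold_1}m'\tstep{\seq_2}{\threshold_1}m_4$ with $\cbasis{m}=\cbasis{m_3}\subsetneq\cbasis{m'}$. Applying Case~A-style shortening and simulation to $\seq_1$ from $\floorthr{m_3}{\threshold_1}$ yields a natural marking $m^\circ$ agreeing with $m$ off $\cbasis{m}$ and with many tokens on $\cbasis{m}$. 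I then fire $t$ in natural semantics to obtain $m^\bullet$: on $\omega$-output places of $t$ I freely produce $\threshold_1(\nbwb{m'}+1)$ tokens, while places that became $\omega$ via threshold overflow already receive at least $\threshold_1(\nbwb{m_3})\geq\threshold_1(\nbwb{m'}+1)$ naturally, so $m^\bullet\geq\floorthr{m'}{\threshold_1}$ pointwise, with equality off $\cbasis{m'}$. Since $\nbwb{m'}<\nbwb{m_3}$, the induction hypothesis applied to $m'\tstep{\seq_2}{\threshold_1}m_4$ produces a natural $\seq''$ from $\floorthr{m'}{\threshold_1}$; by monotonicity of Petri-net firings, reusing the same $\omega$-output choices made by the IH, $\seq''$ is firable from $m^\bullet$ and reaches some $m_4'$ that dominates the IH's witness pointwise while coinciding with $m_4$ off $\cbasis{m_4}$. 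Setting $\seq'=\seq_1\cdot t\cdot\seq''$ then completes the construction.

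The hard part will be the arithmetic verification of the length bound $|\seq_1|+1+|\seq''|\leq\threshold_1(\nbwb{m_3})^{|P|}$ together with all the token-preservation inequalities along the simulation. This is precisely where the tower-exponential growth of $\threshold_1$ from Definition~\ref{def:ThFns} is essential: the ratio $\threshold_1(\nbwb{m_3})/\threshold_1(\nbwb{m_3}-1)$ is large enough that $\threshold_1(\nbwb{m_3})^{|P|}$ swamps $\threshold_1(\nbwb{m_3}-1)^{|P|}+\threshold_1(\nbwb{m_3})^{\nbwb{m_3}}$, while the choice $\threshold_1(\nbwb{m_3}+1)=2\maxred\lenfn(\nbwb{m_3})$ is calibrated so that losing up to $\maxred$ tokens per step along any prefix of the prescribed length still leaves at least $\maxred\lenfn(\nbwb{m_3})=\threshold_2(\nbwb{m_3}+1)$ tokens on each $\omega$-place. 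The remaining bookkeeping is routine unfolding of these definitions.
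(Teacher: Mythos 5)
Your simulation machinery is sound and is essentially the paper's: you cut the run at the first transition $t$ that creates a new $\omega$-place, simulate the $\omega$-free prefix with Proposition~\ref{prop:thSemToNatSem}, instantiate the $\omega$-outputs of $t$ with $\threshold_{1}(\nbwb{m'}+1)$ tokens, check the overflow places, recurse on the suffix, and glue with strong monotonicity (the paper inducts on $\nbw{m_{4}}-\nbw{m_{3}}$ rather than on $\nbwb{m_{3}}$, an immaterial difference). The genuine gap is the length bound, i.e.\ exactly the part you defer as ``routine bookkeeping''. Because you remove loops \emph{piecewise} --- once in the prefix, and once implicitly inside the induction hypothesis --- your final length is a \emph{sum} of per-segment budgets, $|\seq'|\le \threshold_{1}(\nbwb{m_{3}})^{\nbwb{m_{3}}}+1+\threshold_{1}(\nbwb{m'})^{|P|}$, and your claim that $\threshold_{1}(\nbwb{m_{3}})^{|P|}$ swamps this sum is \emph{false} when $\nbwb{m_{3}}=|P|$: in that case $\threshold_{1}(\nbwb{m_{3}})^{\nbwb{m_{3}}}=\threshold_{1}(|P|)^{|P|}$ already exhausts the entire budget, and no growth property of $\threshold_{1}$ can rescue an inequality of the form $A+B\le A$ with $B>0$. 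This is not a negligible corner case: it is precisely the case used downstream, since Lemma~\ref{lem:shortSCS} invokes the present lemma with $m_{3}=m_{0}$, a plain marking with no $\omega$'s at all.

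The repair is what the paper does: perform the loop removal \emph{once, globally}, on the whole run $m_{3}\tstep{\seq}{\threshold_{1}}m_{4}$ before any induction. In the resulting run all intermediate $\omega$-markings are pairwise distinct --- across your two segments as well, since prefix markings have $\omega$-set exactly $\cbasis{m_{3}}$ while suffix markings have strictly larger $\omega$-sets --- so a single counting argument bounds the total length by the number of possible $\omega$-markings, and the induction is then responsible only for firability and the token lower bounds of this fixed loop-free sequence; it never adds lengths. (Alternative fixes within your scheme: prove a stronger inductive invariant that counts markings stratified by $\omega$-set, or settle for a weaker bound such as $(|P|+1)\threshold_{1}(\nbwb{m_{3}})^{|P|}$, which the slack in Definition~\ref{def:ThFns} still tolerates in Lemma~\ref{lem:shortSCS}; but as written, your composition does not yield the stated bound.) A second, minor, inaccuracy: an overflow place receives at least $\threshold_{1}(\nbwb{m+\effect(t)})$ tokens, not $\threshold_{1}(\nbwb{m_{3}})$, since the abstraction of Definition~\ref{def:thresholdMarkings} is evaluated on the pre-abstraction marking $m+\effect(t)$, whose number of non-$\omega$ places can drop below $\nbwb{m_{3}}$ when $t$ also has $\omega$-outputs; the inequality you actually need, namely at least $\threshold_{1}(\nbwb{m'}+1)$, still holds because the overflow of $p$ forces $\nbwb{m'}+1\le\nbwb{m+\effect(t)}$.
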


\begin{lemma}
  \label{lem:shortSCS}
  If an \wPN $\Nn$ admits a self-covering execution, then it admits
  one whose sequence of transitions is of length at most $\lenfn( |P|
  )$.
\end{lemma}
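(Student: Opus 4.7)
The plan is to prove by induction on $i \in \{0, 1, \ldots, |P|\}$ the following strengthened statement: for every $\omega$-marking $m$ with $\nbwb{m} = i$ and every natural marking $m_1 \in \gamma(m)$ satisfying $m_1(p) \geq \floorthr{m}{\threshold_2}(p)$ for all $p \in \cbasis{m}$, if $m_1$ admits a self-covering execution in $\Nn$, then it admits one of length at most $\lenfn(i)$. The lemma then follows by specialising to $m = m_0$, a natural initial marking, so $\cbasis{m} = \emptyset$, $i = |P|$, and the threshold condition is vacuous.

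For the inductive step, given a self-covering execution $m_1 \xrightarrow{t_1} \cdots \xrightarrow{t_n} m_n$ with $m_n \mgeq m_k$, I simulate it in the threshold semantics: $m \tstep{t_1}{\threshold_1} m'_1 \tstep{t_2}{\threshold_1} \cdots \tstep{t_n}{\threshold_1} m'_n$, firability being underwritten by Proposition~\ref{prop:thSemToNatSem} thanks to the $\floorthr{\cdot}{\threshold_2}$ lower bound on $m_1$. I then split into two cases. In Case~A, no new place crosses the threshold, i.e.\ $\cbasis{m'_j} = \cbasis{m}$ for all $j$. The prefix $t_1 \cdots t_k$ visits at most $\threshold_1(i)^{|P|}$ distinct threshold markings; dropping the transitions between any two repetitions shortens the prefix to length at most $\threshold_1(i)^{|P|}$ while leading to some threshold marking $\widetilde{m}_k$ from which $t_{k+1} \cdots t_n$ remains an $\omega$-maximal $\threshold_1$-PS. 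Lemma~\ref{lem:shortThPumpSeq} replaces this pumping suffix by one of length at most $(\threshold_1(i) \cdot 2\maxred)^{d|P|^3}$, and the inequality $(2x\maxred)^{c|P|^3} \geq x^{|P|} + (2x\maxred)^{d|P|^3}$ with $x = \threshold_1(i)$ bounds the total threshold length by $\lenfn(i)$. Lemma~\ref{lem:thCovToNatCov} then lifts this shortened threshold run back to a natural self-covering execution from $m_1$ of the same length.

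In Case~B, let $t_1 \cdots t_j$ be the shortest prefix after which a new place becomes $\omega$ in threshold semantics, yielding $m \tstep{t_1 \cdots t_j}{\threshold_1} \widehat{m}$ with $\nbwb{\widehat{m}} < i$. Shortcutting as above bounds $j$ by $\threshold_1(i)^{|P|}$. Lemma~\ref{lem:thCovToNatCov} provides a corresponding natural execution from $m_1$ reaching some $\widehat{m}_1 \mgeqp{\cbasis{\widehat{m}}} \floorthr{\widehat{m}}{\threshold_2}$. The tail of the original execution witnesses that \emph{some} concretisation of $\widehat{m}$ admits a self-covering execution; by monotonicity (Lemma~\ref{lemma:reachable-markings-is-upward-closed}) so does $\widehat{m}_1$, and the induction hypothesis supplies such an execution of length at most $\lenfn(i-1)$. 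Concatenation yields total length at most $\threshold_1(i)^{|P|} + \lenfn(i-1) \leq \lenfn(i)$.

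The main obstacle is the threshold bookkeeping in Case~B: one must check that $\widehat{m}_1$ actually satisfies the hypothesis of the induction at level $i-1$. The definitions are engineered precisely for this, since any place newly abstracted at level $i$ has accumulated at least $\threshold_1(i) = 2\maxred\lenfn(i-1)$ tokens in the natural semantics, which dominates the required $\threshold_2(i) = \maxred\lenfn(i-1)$; a parallel check is needed for the places that were already $\omega$ in $m$, using Proposition~\ref{prop:thSemToNatSem} to track how many tokens each such place can lose along the at most $\threshold_1(i)^{|P|}$ steps of the prefix.
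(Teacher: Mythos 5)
Your toolkit is the right one, and your induction on $\nbwb{m}$ is a genuinely different organisation from the paper's proof: the paper's main argument avoids any induction on the number of $\omega$'s, and instead iterates the pumping part $\seq_{2}$ \emph{in the threshold semantics} until the set of $\omega$-places stabilises (after at most $|P|+1$ firings), applies Lemma~\ref{lem:shortThPumpSeq} at that stabilisation point, and invokes Lemma~\ref{lem:thCovToNatCov} exactly once, from the initial marking $m_{0}$ --- where $\floorthr{m_{0}}{\threshold_{1}}=m_{0}$, so no hypothesis on token counts is needed at the start. Your plan essentially inlines the induction that the paper hides inside Lemma~\ref{lem:thCovToNatCov}. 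Unfortunately, as written, Case~B contains a genuine gap: the step ``the tail of the original execution witnesses that some concretisation of $\widehat{m}$ admits a self-covering execution; by monotonicity so does $\widehat{m}_{1}$'' does not hold. The concretisation you actually have is the marking $m_{j}$ reached in the \emph{original} execution after its unshortened prefix $t_{1}\cdots t_{j}$, whereas $\widehat{m}_{1}$ is reached from $m_{1}$ along the \emph{shortcut} prefix. The two agree outside $\cbasis{\widehat{m}}$, but on a place $p$ already marked $\omega$ in $m$ they are in general incomparable: the loops removed by shortcutting may have pumped arbitrarily many tokens into $p$ (so $m_{j}(p)\gg\widehat{m}_{1}(p)$ is possible), and those loops need not contain any $\omega$-output on $p$, so neither plain monotonicity nor Lemma~\ref{lemma:reachable-markings-is-upward-closed} (which only lets you raise the target on places where $\effect$ of the fired sequence equals $\omega$) transfers the self-covering execution from $m_{j}$ to $\widehat{m}_{1}$. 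Since your induction hypothesis demands of one and the same marking both the $\threshold_{2}$ token bound \emph{and} the existence of a self-covering execution, the recursive call is not justified.

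The repair is to decouple the two hypotheses in the strengthened statement: prove that if \emph{some} concretisation of $m$ admits a self-covering execution, then \emph{every} $m_{1}\in\gamma(m)$ with $m_{1}(p)\geq\floorthr{m}{\threshold_{2}}(p)$ on $\cbasis{m}$ admits one of length at most $\lenfn(\nbwb{m})$. Then in Case~B the witness for the recursive hypothesis is $m_{j}$ while the conclusion is applied to $\widehat{m}_{1}$, and no transfer between the two is needed; the covering property of the execution you finally build comes, as in the paper and in your Case~A, from $\effect(\seq')\mgeq\0$ for the shortened pumping sequence, never from the original execution. Two smaller points also need fixing. First, your appeals to Lemma~\ref{lem:thCovToNatCov} for executions ``from $m_{1}$'' are off: that lemma starts from $\floorthr{m}{\threshold_{1}}$, whose $\omega$-places carry $\threshold_{1}(\nbwb{m}+1)=2\maxred\lenfn(\nbwb{m})$ tokens, twice what your $\threshold_{2}$ hypothesis guarantees for $m_{1}$; you must instead apply Proposition~\ref{prop:thSemToNatSem} directly (as your closing paragraph in fact suggests), which suffices because it only requires $\maxred\cdot|\seq|$ tokens on the $\omega$-places. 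Second, when the covering pair $(k,n)$ of the original execution straddles the cut position $j$ (i.e.\ $k<j$), the claim that the tail yields a self-covering execution from $m_{j}$ needs the extra observation that, since $m_{n}\mgeq m_{k}$, the block $t_{k+1}\cdots t_{n}$ can be fired once more from $m_{n}$, producing a covering pair located entirely after position $j$.
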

\begin{proof}
  Suppose $\seq = \seq_{1}\seq_{2}$ is the sequence of transitions in
  the given self-covering execution such that $m_{0}
  \step{\seq_{1}} m_{1} \step{\seq_{2}} m_{2}$ and $m_{2} \mgeq
  m_{1}$. A routine
  induction on the length of any sequence of transitions $\seq$ shows
  that if $m_{3} \step{\seq} m_{4}$, we have $m_{3}
  \tstep{\seq}{\threshold_{1}} m_{4}'$ with $m_{4}' - m_{3} \mgeq
  m_{4} - m_{3}$. Hence, we have $m_{0}
  \tstep{\seq_{1}}{\threshold_{1}} m_{1}'
  \tstep{\seq_{2}}{\threshold_{1}} m_{2}'$ with $m_{2}' \mgeq m_{1}'$.
  By monotonicity, we infer that for any $i \in \NNP$, $m_{i}'
  \tstep{\seq_{2}}{\threshold_{1}} m_{i+1}'$ with $m_{i+1}' \mgeq
  m_{i}'$. Let $j \in \NNP$ be the first number such that
  $\cbasis{ m_{j}'} = \cbasis{m_{j+1}'}$. We have $m_{0}
  \tstep{\seq_{1} \seq_{2}^{j-1}}{\threshold_{1}} m_{j}'
  \tstep{\seq_{2}}{\threshold_{1}} m_{j+1}'$ and $\seq_{2}$ is an
  $\omega$-maximal $\threshold_{1}$-PS enabled at $m_{j}'$.

  By Lemma~\ref{lem:shortThPumpSeq}, there is a $\threshold_{1}$-PS
  $\seq_{2}'$ enabled at $m_{j}'$ whose length is at most\break
  $(\threshold_{1}( \nbwb{m_{j}'}) 2 \maxred)^{d |P|^{3}}$. By
  Lemma~\ref{lem:thCovToNatCov}, there is a sequence $\seq_{1}'$ such
  that $m_{0} \step{\seq_{1}'} m_{j}''$, $m_{j}''
  \mgeqp{\cbasis{m_{j}'}} \floorthr{m_{j}'}{\threshold_{2}}$ and
  $|\seq_{1}'| \le (\threshold_{1}( |P|))^{|P|}$. By
  Definition~\ref{def:ThFns} and
  Definition~\ref{def:thresholdMarkings}, we infer that $m_{j}'' ( p)
  = \maxred \lenfn( \nbwb{m_{j}'}) = \maxred ( \threshold_{1}(
  \nbwb{m_{j}'}) 2 \maxred)^{c |P|^{3}} \ge \maxred |\seq_{2}'|$ for
  all $p \in \cbasis{m_{j}'}$. Hence, we infer from
  Proposition~\ref{prop:thSemToNatSem} that $m_{0} \step{\seq_{1}'}
  m_{j}'' \step{\seq_{2}'} m_{j+1}''$. Since $\seq_{2}'$ is a
  $\threshold_{1}$-PS, $\effect(\seq_{2}') \mgeq \0$, and so
  $m_{j+1}'' \mgeq m_{j}''$. Therefore, firing $\seq_{1}' \seq_{2}'$
  at $m_{0}$ results in a self-covering execution. The length of
  $\seq_{1}' \seq_{2}'$ is at most $(\threshold_{1}( |P|))^{|P|} +
  (\threshold_{1}( \nbwb{m_{j}'}) 2 \maxred)^{d |P|^{3}} \le \lenfn(
  |P|)$.\qed
\end{proof}

\begin{lemma}
  \label{lem:GrowthFnUpBnd}
  Let $k = 3 c$. Then $\lenfn( i) \le (2
  \maxred)^{k^{i + 1} |P|^{3 ( i + 1 ) }}$ for all $i
  \in \NN$.
\end{lemma}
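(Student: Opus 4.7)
The plan is to prove this by straightforward induction on $i$, keeping everything in terms of the single base $2\maxred$ so that the bound reduces to an inequality on exponents.

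For the base case $i = 0$, the definition gives $\lenfn(0) = (2\maxred)^{c|P|^{3}}$, and the claimed bound is $(2\maxred)^{k |P|^{3}}$. Since $k = 3c \ge c$ (assuming $\maxred \ge 1$, which is the interesting case, otherwise the \wPN has no nontrivial effects and the result is trivial), the base case follows.

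For the inductive step, assume $\lenfn(i) \le (2\maxred)^{k^{i+1} |P|^{3(i+1)}}$. From Definition~\ref{def:ThFns}, I first bound $\threshold_{1}(i+1) \cdot 2\maxred = 4 \maxred^{2} \lenfn(i)$, which by induction is at most $(2\maxred)^{k^{i+1} |P|^{3(i+1)} + 2}$. Plugging this into $\lenfn(i+1) = (\threshold_{1}(i+1) \cdot 2 \maxred)^{c|P|^{3}}$ yields
\begin{equation*}
  \lenfn(i+1) \le (2\maxred)^{c|P|^{3} \bigl(k^{i+1} |P|^{3(i+1)} + 2\bigr)}.
\end{equation*}
It therefore suffices to verify the exponent inequality
\begin{equation*}
  c|P|^{3} k^{i+1} |P|^{3(i+1)} + 2c|P|^{3} \le k^{i+2} |P|^{3(i+2)}.
\end{equation*}
Since $k = 3c$, the right-hand side is $3c \cdot k^{i+1} |P|^{3(i+2)}$, whereas the dominant term on the left is $c \cdot k^{i+1} |P|^{3(i+2)}$. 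The remaining slack $2c \cdot k^{i+1} |P|^{3(i+2)} - 2c|P|^{3} \ge 0$ follows because $k \ge 1$ and $|P| \ge 1$, which closes the induction.

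There is no real conceptual obstacle here; the step that requires the most care is simply the exponent bookkeeping, in particular checking that the choice $k = 3c$ is large enough to absorb both the multiplicative factor $c|P|^{3}$ coming from $\lenfn(i+1)$'s definition and the additive $+2$ introduced by the $4\maxred^{2}$ factor. The factor of $3$ (rather than, say, $2$) is essentially what makes the calculation go through with room to spare.
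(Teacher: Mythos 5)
Your proof is correct and takes essentially the same approach as the paper's: induction on $i$ with identical exponent bookkeeping, since your key inequality $ck^{i+1}|P|^{3(i+2)} + 2c|P|^{3} \le k^{i+2}|P|^{3(i+2)}$ is exactly the comparison the paper makes when it bounds $(2\maxred)^{2c|P|^{3}}\bigl(\lenfn(i)\bigr)^{c|P|^{3}}$ by $(2\maxred)^{3ck^{i+1}|P|^{3(i+2)}}$. The only difference is presentational (you pass to exponent arithmetic immediately, the paper manipulates the powers of $2\maxred$ first), so there is nothing to add.
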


\begin{theorem}
  \label{thm:terminationExpsp}
  The termination problem for \wPN is \textsc{ExpSpace}-c.
\end{theorem}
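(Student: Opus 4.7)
The plan is to derive the theorem as a corollary of the work already done in the section, combined with a standard nondeterministic search and a known hardness result. First I would establish the upper bound, then the lower bound.

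For the upper bound, I invoke Lemma~\ref{lem:shortSCS} together with Lemma~\ref{lem:GrowthFnUpBnd}: if $\Nn$ does not terminate, then by Lemma~\ref{lemma:terminates-iff-no-s-c-e} there is a self-covering execution, and by Lemma~\ref{lem:shortSCS} there is one whose length is at most $\lenfn(|P|) \le (2\maxred)^{k^{|P|+1}|P|^{3(|P|+1)}}$, which is doubly exponential in the size of $\Nn$ (since $|P|$ and $\log \maxred$ are linear in the size of the input). Any marking appearing along such a witness execution has coordinates bounded by $\infnorm{m_0} + \maxred \cdot \lenfn(|P|)$, so each coordinate fits in exponentially many bits. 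I then describe a nondeterministic procedure that checks non-termination in exponential space: maintain the current marking and a step counter $c$ (both in exponential space), starting from $m_0$ with $c=0$; at each step, nondeterministically select a firable transition $t$ (checking enabledness cell-wise), nondeterministically guess the numbers of tokens consumed/produced for $\omega$-input/output arcs (bounded by the current marking and by $\maxred \cdot \lenfn(|P|)$ respectively, hence guessable in exponential space), update the marking, and increment $c$; at some nondeterministic point freeze a copy $\widehat m$ of the current marking; continue simulating; accept as soon as the current marking $\mgeq \widehat m$; reject if $c$ exceeds $\lenfn(|P|)$. This procedure accepts iff a self-covering execution of the bounded length exists, i.e. iff $\Nn$ does not terminate. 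It uses nondeterministic exponential space, and by Savitch's theorem $\mathrm{NEXPSPACE} = \mathrm{EXPSPACE}$, so non-termination is in \textsc{ExpSpace}; since \textsc{ExpSpace}{} is closed under complement, so is termination.

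For the lower bound, I observe that plain Petri nets are exactly \wPN{} without $\w$-labeled arcs, so any reduction showing \textsc{ExpSpace}-hardness of termination for plain PN transfers verbatim. The termination problem for plain PN is known to be \textsc{ExpSpace}-hard by Lipton's construction (the same weak computation of doubly-exponential counters used for coverability yields hardness of termination), which I would cite rather than reprove. Combining both bounds gives \textsc{ExpSpace}-completeness.

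The only real subtlety is bookkeeping in the upper bound: I must be careful that every quantity stored by the nondeterministic machine (the marking, the step counter $c$, the checkpoint $\widehat m$, and the guessed values for $\omega$-arcs) admits an a priori exponential bit-length bound, and that comparisons and arithmetic on these can be performed in exponential space. This is where the explicit bound of Lemma~\ref{lem:GrowthFnUpBnd} is essential: without it, one could not a priori bound $c$ or the magnitudes of the $\omega$-guesses, and the argument would break down. The rest is routine once the short-witness property has been established.
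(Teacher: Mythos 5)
Your overall route coincides with the paper's: characterise non-termination via self-covering executions (Lemma~\ref{lemma:terminates-iff-no-s-c-e}), get a short witness from Lemma~\ref{lem:shortSCS} with the numerical bound of Lemma~\ref{lem:GrowthFnUpBnd}, decide its existence by a nondeterministic guess-and-check procedure running in exponential space, determinise with Savitch, and inherit the lower bound from plain Petri nets (your citation of Lipton versus the paper's citation of Rackoff is immaterial). The one point where you genuinely deviate is the design of the guessing machine, and that deviation is exactly where the remaining proof obligation sits. The paper's machine never guesses concrete values for $\omega$-output arcs during the search: it stores the \emph{symbol} $\omega$ in the affected places and concretises it only once, at the checkpoint, to the value $\maxred\,\lenfn(|P|)$. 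With that design, completeness is immediate (the symbolic run mirrors any concrete run, and $\omega$ dominates every comparison), and soundness is precisely what the already-proved concretisation machinery --- Proposition~\ref{prop:thSemToNatSem}, Lemma~\ref{lem:thCovToNatCov} and the threshold functions $\threshold_{1},\threshold_{2}$ --- was built to deliver.

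Your machine instead guesses \emph{concrete} token counts for the $\omega$-arcs, capped at $\maxred\cdot\lenfn(|P|)$. This makes soundness trivial, but it turns completeness into a claim you assert without proof: that capping every $\omega$-output guess at $\maxred\cdot\lenfn(|P|)$ never destroys all short witnesses. A short self-covering execution produced by Lemma~\ref{lem:shortSCS} may a priori rely on astronomically large $\omega$-productions, so your ``accepts iff a self-covering execution of the bounded length exists'' needs an argument. Moreover, you attribute this bound to Lemma~\ref{lem:GrowthFnUpBnd}, which cannot supply it: that lemma only bounds $\lenfn$ numerically for the space accounting. What is actually needed is a capping argument --- replace any guess exceeding $B=\maxred\,\lenfn(|P|)$ by $B$; enabledness of the remaining suffix survives because later transitions remove at most $\maxred$ tokens per step from the place over at most $\lenfn(|P|)$ steps, and the final covering test survives because the non-$\omega$ drift of the pumping segment on that place is at least $-\maxred\cdot\lenfn(|P|)\ge -B$ --- or, equivalently, an appeal to the proofs (not just the statements) of Proposition~\ref{prop:thSemToNatSem} and Lemma~\ref{lem:thCovToNatCov}, where the adequacy of exactly this magnitude is established. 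With that paragraph added, your proof is correct; without it, the key ``iff'' is unjustified, and it is precisely the step the paper's symbolic-$\omega$ algorithm was designed to avoid.
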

The idea of the proof of the above theorem is to construct a
non-deterministic Turing machine that guesses and verifies a
self-covering sequence. By Lemma~\ref{lem:shortSCS}, the length of
such a sequence can be limited and hence made to work in
\textsc{ExpSpace}. Full proof can be found in the appendix.



\section{Extensions with transfer or reset arcs\label{sec:extensions}}
In this section, we consider two extensions of \wPN, namely: \wPN with
\emph{transfer arcs} (\wPNT) and \wPN with \emph{reset arcs}
(\wPNR). These extensions have been considered in the case of plain
Petri nets: Petri nets with transfer arcs (\PNT) and Petri nets with
reset arcs (\PNR) have been extensively studied in the literature
\cite{DFS98,ACJT96,DJS99,S10}. Intuitively, a \emph{transfer arc}
allows, when the corresponding transition is fired to \emph{transfer
  all the tokens} from a designated place $p$ to a given place $q$,
while a \emph{reset arc} \emph{consumes all tokens} from a designated
place $p$.

Formally, an \emph{extended} \wPN is a tuple $\tuple{P,T}$, where $P$
is a finite set of places and $T$ is finite set of transitions. Each
transition is a pair $t=(I,O)$ where $I:P\mapsto \mathbb{N}\cup\{\w,
\mathsf{T}, \mathsf{R}\}$; $O:P\mapsto \mathbb{N}\cup\{\w,
\mathsf{T}\}$; $|\{p\mid I(p)\in\{\mathsf{T},\mathsf{R}\}\}|\leq 1$;
$|\{p\mid O(p)\in\{\mathsf{T}\}\}|\leq 1$; there is $p$
s.t. $I(p)=\mathsf{T}$ \emph{iff} there is $q$ s.t. $O(q)=\mathsf{T}$;
and \emph{if} there is $p$ s.t. $I(p)=\mathsf{R}$, \emph{then},
$O(p)\in\mathbb{N}\{\w\}$ for all $p$. A transition $(I,O)$
s.t. $I(p)=\mathsf{T}$ (resp. $I(p)=\mathsf{R}$) for some $p$ is
called a \emph{transfer} (\emph{reset}). An \wPN with \emph{transfer
  arcs} (resp. \emph{with reset arcs}), \wPNT (\wPNR) for short, is an
extended \wPN that contains no reset (transfer). An \wPNT
s.t. $I(t)(p)\neq \w$ for all transitions $t$ and places $p$ is an
\wOPNT . The class \wIPNT is defined symmetrically. An \wPNT which is
both an \wOPNT \emph{and} an \wIPNT is a (plain) \PNT. The classes
\wOPNR, \wIPNR and \PNR are defined accordingly.

Let $t=(I,O)$ be a transfer or a reset. $t$ is \emph{enabled} in a
marking $m$ iff for all $p$: $I(p)\not\in\{\w,\mathsf{T},\mathsf{R}\}$
implies $m(p)\geq I(p)$. In this case firing $t$ yields a marking
$m'=m-m_I+m_O$ (denoted $m\xrightarrow{t}m'$) where for all $p$:
$m_I(p)=m(p)$ if $I(p)\in\{\mathsf{T},\mathsf{R}\}$; $0\leq m_I(p)\leq
m(p)$ if $I(p)=\w$; $m_I(p)=I(p)$ if $I(p)\not\in \{\mathsf{T},
\mathsf{R}, \w\}$; $m_O(p)=m(p')$ if $O(p)=I(p')=\mathsf{T}$ ;
$m_O(p)\geq 0$ if $O(p)=\w$; and $m_O(p)=O(p)$ if $O(p)\not\in
\{\mathsf{T}, \w\}$. The semantics of transitions that are neither
transfers nor resets is as defined for \wPN.

Let us now investigate the status of the problems listed in
Section~\ref{sec:omega-petri-nets}, in the case of \wPNT and
\wPNR. First, since \wPNT (\wPNR) extend \PNT (\PNR), the lower bounds
for the latters carry on: reachability and place-boundedness are
undecidable \cite{Duf98} for \wPNT and \wPNR; boundedness is
undecidable for \wPNR \cite{DJS99}; and coverability is Ackerman-hard
for \wPNT and \wPNR \cite{S10}. On the other hand, the construction
given in Section~\ref{sec:solv-cover-place} can be adapted to turn an
\wPNT (resp. \wPNR) $\Nn$ into a \PNT (\PNR) $\Nn'$ satisfying
Lemma~\ref{lem:encoding-in-plain-pn} (i.e., projecting
$\reach(\Nn',m_0)$ on the set of places of $\Nn$ yields $\reach(\Nn,
m_0)$). Hence, boundedness for \wPNT \cite{DJS99}, and coverability
for both \wPNT and \wPNR are decidable \cite{ACJT96}.

As far as \emph{termination} is concerned, it is decidable
\cite{DFS98} and Ackerman-hard \cite{S10} for \PNR and
\PNT. Unfortunately, the construction presented in
Section~\ref{sec:solv-cover-place} does not preserve termination, so
we cannot reduce termination of \wPNT (resp. \wPNR) to termination of
\PNT (\PNR). Actually, termination becomes undecidable when
considering \wOPNR or \wOPNT:
\begin{theorem}\label{theo:termination-undecidable-for-wPNT-wPNR}
  Termination is undecidable for \wOPNT and \wOPNR with one
  $\w$-output-arc
\end{theorem}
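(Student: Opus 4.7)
The plan is to reduce from an undecidable problem about $2$-counter Minsky machines (such as the halting problem, which is $\Sigma^0_1$-complete), exploiting the fact that although termination is \emph{decidable} for plain \PNR and \PNT, adding a single $\omega$-output arc provides enough extra power to encode an unbounded amount of computation. Throughout, I focus on \wOPNR; the construction for \wOPNT is symmetric, using transfer arcs into a dedicated ``sink'' place in place of each reset arc.

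Given a $2$-counter Minsky machine $M$ with states $q_0,\ldots,q_n,q_H$ and two counters $c_1,c_2$, I would construct a \wOPNR $\Nn_M$ whose places encode $M$'s control state and counters, plus auxiliary places: a budget place $b$ and a loop place $\ell$. The only $\omega$-output arc sits on a distinguished initialization transition $t_\omega$ that consumes a unique initial token and deposits $\omega$ tokens in $b$ together with one token in the place representing $q_0$. Each increment/decrement instruction of $M$ is simulated by a plain transition consuming one token from $b$; each zero-test ``if $c_j=0$ goto $q_k$ else goto $q_l$'' is encoded by two transitions, the ``non-zero'' branch ordinarily moving one token from $c_j$ back to $c_j$ while stepping to $q_l$, and the ``zero'' branch using a reset arc on $c_j$ before stepping to $q_k$ (thereby only weakly testing for zero). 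Finally, at $q_H$, a transition deposits a token in $\ell$ and a self-loop on $\ell$ (which consumes no budget) produces the only source of non-termination.

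The forward direction is easy: if $M$ halts in $K$ steps, then firing $t_\omega$ with enough tokens in $b$ allows a faithful simulation of $M$'s halting run, reaching $q_H$, entering $\ell$, and looping forever. The technically delicate backward direction requires showing that spurious, lossy simulations arising from cheating zero-tests cannot themselves yield an infinite execution. The key idea I would use is to strengthen the ``zero'' branch gadget so that each cheating step deposits a witness token in an auxiliary place $w$, and to insert, just before entering $\ell$, a gadget that forces the token count of $w$ to be moved into a separate place via a transfer/reset in a way that can only be cleared by a faithful simulation; concretely, the final transition into $\ell$ is guarded so that, combined with the budget depletion, any run reaching $\ell$ corresponds to a genuine halting computation of $M$. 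Thus $\Nn_M$ admits an infinite execution iff $M$ halts, and termination of \wOPNR is undecidable.

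The principal obstacle is exactly this backward direction: ruling out ``cheating'' infinite executions. Without inhibitor arcs one cannot directly test $w=0$, so the guarding gadget must use a combination of reset/transfer with the structure of the budget place to propagate any cheating into an inevitable deadlock before $\ell$ is reached. Once this lemma about cheating is in place, the equivalence with halting of $M$ is immediate, and the fact that only one $\omega$-output arc (the one on $t_\omega$) is used throughout yields the strengthening asserted in the theorem statement. The adaptation to \wOPNT replaces the single reset on each $c_j$ with a transfer into a dedicated sink; the analysis carries over verbatim.
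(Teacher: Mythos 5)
Your backward direction is not merely ``technically delicate'': it is provably impossible to repair within the architecture you chose, and this is the genuine gap. In your net, by your own design, the only source of non-termination is the budget-free self-loop on $\ell$: the place $b$ is filled exactly once by $t_\omega$ (with a finite, though arbitrarily large, number of tokens, since $\w$ means ``some positive integer''), and every simulation step consumes from $b$. Hence every infinite execution fires budget-consuming transitions only finitely often and must eventually put a token in $\ell$ and spin there; conversely, any execution covering $\ell$ extends to an infinite one. So for your net --- whatever cheat-detection gadget you graft on, since that gadget is itself made of ordinary reset/transfer transitions of the same net --- non-termination is \emph{equivalent to coverability of $\ell$}. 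But coverability is decidable for \wPNR and \wPNT: they remain well-structured and the backward algorithm of \cite{ACJT96} applies (this is stated explicitly in Section~\ref{sec:extensions}). If your reduction were sound, deciding coverability of $\ell$ in $\Nn_M$ would decide halting of $M$, a contradiction; so no gadget of the kind you postulate can exist. The same obstruction shows up concretely in your sketch: with a reset arc, a transition cannot ``deposit a witness token in $w$ only when it cheats'', because its enabledness and its effect on all other places are independent of whether the reset place was empty; with a transfer you can divert the residual tokens of $c_j$ into $w$, but then guarding the entry to $\ell$ on $w$ being empty is exactly a zero-test, which reset/transfer nets cannot express --- and the coverability argument above shows that no indirect encoding can substitute for that test.

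The paper's proof takes a route that sidesteps this barrier entirely: it never attempts a faithful counter-machine simulation. It starts from the \emph{parameterised} termination problem for broadcast protocols, equivalently for \PNT --- given $\tuple{P,T}$ and an $\w$-marking $\mbar_0$, does $\tuple{P,T,m_0}$ terminate for \emph{all} $m_0\in\dc{\mbar_0}$? --- which is undecidable by \cite{EFM99}. The single $\w$-output arc is used only to internalise the universal quantifier over initial markings: a fresh transition consumes one non-renewable token from $p_{init}$ and produces $\w$ tokens precisely in the places where $\mbar_0$ is $\w$, so the resulting \wOPNT terminates iff the given \PNT terminates from every $m_0\in\dc{\mbar_0}$. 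Crucially, there an infinite run is an infinite run of the underlying \PNT itself --- there is no ``reach a flag and loop forever'' structure --- so non-termination does not collapse to coverability, and all the hardness is imported from \cite{EFM99}. The reset case is then obtained by exchanging resets with transfers into a fresh dead place $p_{trash}$. If you wish to salvage a counter-machine flavour, your construction would have to share this character: the infinite execution must be the unbounded (parametric) simulation itself, fuelled by the $\w$-arc, rather than a budget-bounded simulation followed by an absorbing loop.
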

\begin{proof}
  We first prove undecidability for \wOPNT. The proof is by reduction
  from the \emph{parameterised termination problem} for
  \emph{Broadcast protocols} (BP) \cite{EFM99}. It is well-known that
  \PNT generalise broadcast protocols, hence the following
  \emph{parameterised termination problem for \PNT} is
  \emph{undecidable}: `given a \PNT $\tuple{P,T}$ and an $\w$-marking
  $\mbar_0$, does $\tuple{P,T,m_0}$ terminate \emph{for all}
  $m_0\in\dc{\mbar_0}$ ?'  From a \PNT $\Nn=\tuple{P,T}$ and an
  \w-marking $\mbar_0$, we build the \wOPNT (with only one
  \w-output-arc) $\Nn'=\tuple{P',T',m_0'}$ where
  $P'=P\uplus\{p_{init}\}$, $T'=T\uplus\{(I,O)\}$, $I=\{p_{init}\}$,
  $O=\{\w\otimes p\mid \mbar_0(p)=\w\}$, and $m_0'=\{\mbar_0\otimes
  p\mid \mbar_0(p)\neq\w\}$. Clearly, $\Nn'$ terminates iff
  $\tuple{P,T,m_0}$ terminates for all $m_0\in\dc{\mbar_0}$. Hence,
  termination for \wOPNT is \emph{undecidable} too. Finally, we can
  transform an \wOPNR $\Nn=\tuple{P,T,m_0}$ into an \wOPNT
  $\Nn'=\tuple{P\uplus\{p_{trash}\}, T', m_0}$, where $t'\in T'$ iff
  either $(i)$ $t'\in T$ and $t'$ is not a reset, or $(ii)$ there is a
  reset $t\in T$ and a place $p\in P$ s.t. $I(t)(p)=\mathsf{R}$,
  $I(t')(p)=\mathsf{T}$, $O(t')(p_{trash})=\mathsf{T}$, for all
  $p'\neq p$: $I(t')(p')=I(t)(p')$ and for all $p''\neq p_{trash}$:
  $O(t')(p'')=O(t)(p'')$. Intuitively, the construction replaces each
  reset (resetting place $p$) in $\Nn$ by a transfer from $p$ to
  $p_{trash}$ in $\Nn'$, where $p_{trash}$ is a fresh place from which
  no transition consume. Since $\Nn'$ terminates iff $\Nn$ terminates,
  termination is undecidable for \wPNR too. \qed
\end{proof}
However, the construction of Section~\ref{sec:solv-cover-place} can be
applied to \wIPNT and \wIPNR to yield a corresponding \PNT
(resp. \PNR) that preserves termination. Hence, termination is
decidable and Ackerman-hard for those models. This justifies the
results on \wPNT and \wPNR given in Table~\ref{tab:complexity}.



\clearpage
\appendix
\section{Proof of Lemma~\ref{lemma:terminates-iff-no-s-c-e}}
\noindent\textit{ An \wPN terminates iff it admits no self-covering
  execution.}
\begin{proof}
  Assume $\Nn=\tuple{P,T,m_0}$ admits an infinite execution
  $m_0\xrightarrow{t_1}m_1\rightarrow{t_2}\cdots\xrightarrow{t_j}
  m_j\xrightarrow{t_{j+1}}\cdots$. Since $\mleq$ is a well-quasi
  ordering on the markings, there are two positions $\alpha$ and
  $\beta$ in the execution s.t. $\alpha\leq\beta$ and $m_\alpha\mleq
  m_\beta$. Hence,
  $m_0\xrightarrow{t_1}m_1\xrightarrow{t_2}\cdots\xrightarrow{t_\beta}m_\beta$
  is a self-covering execution.

  For the reverse implication, assume $\Nn=\tuple{P,T,m_0}$ admits a
  self-covering execution
  $m_0\xrightarrow{t_1}m_1\rightarrow{t_2}\cdots\xrightarrow{t_n}m_n$
  and assume $0\leq k<n$ is a position s.t. $m_k\mleq m_n$. Then, by
  monotonicity, it is possible to fire infinitely often the
  $t_{k+1}\cdots t_n$ sequence from $m_k$. More precisely, one can
  check that the following is infinite execution of $\Nn$:
  $$
  \begin{array}{c}
    m_0
    \xrightarrow{t_1}m_1
    \cdots
    \xrightarrow{t_k}m_k
    \xrightarrow{t_{k+1}}m^0_{k+1}
    \cdots
    \xrightarrow{t_n}m^0_n
    \xrightarrow{t_{k+1}}m^1_{k+1}
    \cdots
    \xrightarrow{t_n}m^1_n\\
    \xrightarrow{t_{k+1}}m^2_{k+1}
    \cdots
    \xrightarrow{t_n}m^2_{n}
    \cdots
    \xrightarrow{t_{k+1}}m^j_{k+1}
    \cdots
    \xrightarrow{t_n}m^j_{n}
    \cdots
  \end{array}
  $$
  where for all $1\leq i\leq n-k$: $m^0_{k+i}=m_{k+i}$, for all $j\geq
  1$, $m^j_{k+1}=m^{j-1}_{n}+(m_{k+1}-m_k)$ and for all $2\leq i\leq
  n-k$: $m^j_i=m^j_{i-1}+(m_{k+i}-m_{k+i-1})$.\qed
\end{proof}
\section{Proof of Proposition~\ref{prop:termination} (Termination)}
 \noindent\textit{For all \wPN $\Nn$ and for all initial marking $m_0$,
  $\SCT(\Nn,m_0)$ terminates.}
\begin{proof}
  The proof is by contradiction. Assume $\SCT(\Nn,m_0)$ does not
  terminate. First observe that the recursion depth is always bounded:
  since a recursive call is performed only when a new $\omega$ has
  been created, the recursion depth is, at any time, at most equal to
  $|P|+1$, where $P$ is the set of places of $\Nn$

  Thus, if $\SCT(\Nn,m_0)$ does not terminate, it is necessarily
  because the main \textbf{while} loop does not terminate (the other
  loop of the algorithm is the \textbf{forall} starting in
  line~\ref{lst:forall_t}, which always execute at most $|T|$
  iterations, where $T$ is the set of transitions of $\Nn$). In this
  loop, one node is removed from \texttt{U} at each iteration. Since
  the algorithm builds a tree, a node that has been removed from
  \texttt{U} will never be inserted again in \texttt{U}. Hence, the
  tree $\Tt$ built by $\SCT(\Nn,m_0)$ is infinite.

  By K\"onig's lemma, and since $\Tt$ is finitely branching, it
  contains an infinite path $\pi$. Since the recursion depth is
  bounded, $\pi$ can be split into a finite prefix $\pi_1$ and an
  infinite suffix $\pi_2$ s.t. all the nodes in $\pi_2$ have been
  built during the same recursive call.

  Let us assume $\pi_2=n_0,n_1,\ldots,n_m,\ldots$  Since $\mleq$ is a
  well-quasi-ordering on $\omega$-markings, there are $k$ and $\ell$
  s.t. $0\leq k<\ell$ and $\lambda(n_k)\mleq\lambda(n_\ell)$. Clearly,
  $\lambda(n_k)=\lambda(n_\ell)$ is not possible because of the test
  of line~\ref{lst:if2} that prevents the development of $n_\ell$ in
  this case. Thus, $\lambda(n_k)\ml\lambda(n_\ell)$. This means that,
  for all $p\in P$: $\lambda(n_k)(p)\leq\lambda(n_\ell)(p)$, and that
  there exists $p$ s.t. $\lambda(n_k)(p)<\lambda(n_\ell)(p)$. Let
  $p^<$ be such a place. By definition of the {\tt Post} function, and
  of the acceleration (line~\ref{lst:def-m-w}), the only possibility
  is that $\lambda(n_\ell)(p^<)=\w\neq\lambda(n_k)(p^<)$. However, in
  this case, when $\lambda(n_\ell)$ is returned by \texttt{Post}, a
  new recursive call is triggered, which contradicts the hypothesis
  that $n_\ell$ and $n_k$ have been built during the same recursive
  call. Contradiction.  \qed
\end{proof}

\section{Proof of Lemma~\ref{lemma:path-implies-execution}
  (soundness)}
Recall that, in the present section, we prove the soundness of
\treealgo, \emph{when applied to \wOPN only}. Hence, throughout the
section $I(t)(p)\neq\omega$ for all places $p$ and transitions $t$. To
prove Lemma~\ref{lemma:path-implies-execution}, we need ancillary
results and definitions. First, we state the \emph{place monotonicity}
property of \wPN.  Let $m_1$ and $m_2$ be two markings, and let
$P'\subseteq P$ be a set of places s.t. $m_2\mgeqp{P'}m_1$. Let
$\sigma$ be a sequence of transitions and let $m_3$ be a
marking\footnote{Remark that, due to the $\omega$'s, the effect of
  $\sigma$ is now non-deterministic, and there can be several such
  $m_3$.}  s.t. $m_1\xrightarrow{\sigma}m_3$. Then, there exists a
marking $m_4$ s.t. $m_2\xrightarrow{\sigma}m_4$ and
$m_4\mgeqp{P'}m_3$.

Then, we observe, that, when no $\omega$'s are introduced in the
labels of the nodes, the sequence of labels along a branch coincides
with the effect of the transitions labelling this branch. Formally:
\begin{lemma}\label{lemma:no-omega-implies-exact-effect}
  Let $\Nn$ be an \wOPN, let $m_0$ be an \w-marking and let $\Tt$ be
  the tree returned by $\SCT(\Nn,m_0)$. Let $n_1$, $n_2$ be two nodes
  of $\Tt$ s.t. $(n_1,n_2)\in E^+$. 
  Then, for all $p$ s.t. $\lambda(n_1)(p)\neq\omega$ and
  $\lambda(n_2)(p)\neq\omega$, we have:
  $\lambda(n_2)(p)=\lambda(n_1)(p)+\effect(\sigma)(p)$.
\end{lemma}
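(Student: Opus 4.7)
The plan is to proceed by induction on the length $k$ of the path $n_1 \leadsto n_2$, after first establishing a simple but crucial invariant of \SCT: along every edge of $\Tt$, the set of $\omega$-places is non-decreasing, i.e.\ $\cbasis{\lambda(n)} \subseteq \cbasis{\lambda(n')}$ whenever $(n,n') \in E$. This invariant follows by inspecting the \texttt{Post} routine: the raw value $m'(p) = \lambda(n)(p) - I(t)(p) + O(t)(p)$ already equals $\omega$ whenever $\lambda(n)(p) = \omega$ (since in an \wOPN we have $I(t)(p) \in \NN$ and $\omega \pm c = \omega$), and the acceleration rule on line~\ref{lst:def-m-w} can only introduce further $\omega$'s. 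When a new $\omega$ triggers a recursive call, the root of the returned subtree is labelled by exactly this $m_w$, so the inclusion is preserved across recursion as well. An immediate consequence is that if $\lambda(n_2)(p) \neq \omega$, then $\lambda(n)(p) \neq \omega$ for every intermediate node $n$ on the path $n_1 \leadsto n_2$.

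For the base case $k=1$, let $t = \mu(n_1,n_2)$ and examine how $\lambda(n_2)$ was produced. Either \texttt{Post} returned $m' = \lambda(n_1) - I(t) + O(t)$ directly (no ancestor triggering acceleration), or it returned $m_w$, which by its definition on line~\ref{lst:def-m-w} satisfies $m_w(p) = m'(p)$ unless $m_w(p) = \omega$. Since $\lambda(n_2)(p) \neq \omega$ by hypothesis, the second alternative cannot occur on $p$, so in both cases $\lambda(n_2)(p) = \lambda(n_1)(p) - I(t)(p) + O(t)(p)$. As $\lambda(n_1)(p)$, $\lambda(n_2)(p)$, and $I(t)(p)$ all lie in $\NN$, necessarily $O(t)(p) \in \NN$ as well, and hence $\lambda(n_2)(p) = \lambda(n_1)(p) + \effect(t)(p)$.

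For the inductive step, decompose $n_1 \leadsto n_2$ as $n_1 \leadsto n' \to n_2$ with $(n',n_2) \in E$ the last edge, and let $\sigma', t$ be the corresponding sequences, so $\sigma = \sigma' t$. By the $\omega$-monotonicity invariant above, $\lambda(n')(p) \neq \omega$ too. Applying the induction hypothesis to the prefix $n_1 \leadsto n'$ gives $\lambda(n')(p) = \lambda(n_1)(p) + \effect(\sigma')(p)$, and the base case applied to $(n',n_2)$ gives $\lambda(n_2)(p) = \lambda(n')(p) + \effect(t)(p)$. Summing the two equalities yields $\lambda(n_2)(p) = \lambda(n_1)(p) + \effect(\sigma)(p)$, as desired.

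The main technical obstacle is giving a clean statement and justification of the $\omega$-monotonicity invariant across recursive calls of \SCT; once that invariant is in hand, the rest of the argument is a routine unwinding of the definitions of \texttt{Post}, the acceleration, and $\effect$.
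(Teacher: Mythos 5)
Your proof is correct. Note that the paper itself states this lemma without giving any proof (it appears in the appendix, as an auxiliary fact used inside the soundness proof, with its justification omitted as routine), so there is no official argument to compare against; your write-up fills that gap. The structure you chose is the natural one and is exactly what the paper's other proofs implicitly rely on: the persistence of $\omega$'s along edges (which the soundness and completeness proofs invoke when they say ``the number of $\omega$'s increases along a path''), the observation that for every edge $(n,n')$ the label $\lambda(n')$ is precisely the value returned by $\Post(\Nn,\lambda(n),\mu(n,n'))$ --- whether or not a recursive call intervened --- and then induction on the path length. Two details you handle correctly and that are worth keeping explicit: first, in the base case, the hypothesis $\lambda(n_2)(p)\neq\omega$ rules out both the acceleration having set $p$ to $\omega$ and $O(t)(p)=\omega$, so $\effect(t)(p)$ is a genuine integer; second, in the inductive step, the two equalities being summed involve only integer effects (any transition with $\omega$-effect on $p$ along the path would, by your persistence invariant, force $\lambda(n_2)(p)=\omega$), so the $\omega$-arithmetic of the paper ($c+\omega=\omega$) never interferes with the cancellation. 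You also correctly resolved an ambiguity in the statement itself: $\sigma$ is left undefined there and must be read as the sequence of transitions labelling the path from $n_1$ to $n_2$.
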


The next technical definitions allows to characterise when a sequence
of transition is firable from a given marking. Let $\sigma=t_1\cdots
t_n$ be a sequence of transitions of an \wOPN, s.t. for all $1\leq
i\leq n-1$, for all $p\in P$: $O(t_i)(p)\neq \omega$. Let $m$ be a
marking and let $p$ be a place.  Then, we let $\af$ be the predicate
s.t. $\af(\sigma, m, p)$ is true iff:
\begin{align*}
  \forall 1\leq i\leq n &: m(p)+\effect(t_1\cdots t_{i-1})(p) \geq I(t_i)(p)
\end{align*}
Remark that $\sigma$ is firable from $m$ iff for all $p\in P$:
$\af(\sigma, m, p)$.  We extend the definition of $\af$ to sequences
of transitions containing one $\omega$-output-transition. Let
$\sigma=t_1\cdots t_n$ be a sequence of transitions, let $p$ be a
place, and let $1\leq j\leq n$ be the least position
s.t. $O(t_j)(p)=\omega$. Then $\af(\sigma, m, p)$ holds iff
$\af(t_1\cdots t_j,m,p)$ holds. Again, $\sigma$ is firable from $m$
iff for all $p\in P$: $\af(\sigma, m, p)$. Indeed, $\af(t_1\cdots
t_j,m,p)$ ensures that, when firing $\sigma$ from $m$, $p$ will never
be negative along $t_1\cdots t_j$. Moreover, $t_j$ can create an
arbitrary large number of tokens in $p$, since $O(t_j)(p)=\omega$,
which allows to ensure that $p$ will never be negative along
$t_{j+1}\cdots t_n$. Given this definition of $\af$ it is easy to
observe that:
\begin{enumerate}
\item $m(p)\geq I(\sigma)(p)$ implies that $\af(\sigma, m, p)$,
\item if $\af(\sigma,m,p)$ holds and $\effect(\sigma)(p)\geq 0$, then
  $\af(\sigma^K,m,p)$ holds too for all $K\geq 1$.
\end{enumerate}

\begin{lemma}\label{lemma:soundness}
  Let $\Nn$ be an \wOPN, let $m_0$ be an \w-marking, and let $\Tt$ be
  the tree returned by $\treealgo(\Nn, m_0)$, let $e=(n_1,n_2)$ be an
  edge of $\Tt$ and let $m$ be a marking in
  $\gamma(\lambda(n_2))$. Then, there are
  $m_1\in\gamma(\lambda(n_1))$, $m_2\in\gamma(\lambda(n_2))$ and a
  sequence of transitions $\sigma_\pi$ of $\Nn$
  s.t. $m_1\xrightarrow{\sigma_\pi} m_2$ and $m_2\mgeq m$. Moreover,
  when $\nbw{\lambda(n_1)}=\nbw{\lambda(n_2)}$, $\sigma_\pi=\mu(e)$ is
  a sequence of transitions meeting these properties.
\end{lemma}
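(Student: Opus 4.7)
My plan is to do a case analysis on how $\treealgo$ produced the child $n_2$ of $n_1$, writing $t=\mu(e)$. By the pseudocode of $\Post$, $\lambda(n_2)$ is either $\lambda(n_1)-I(t)+O(t)$ (the $\omega$-semantics successor) or an accelerated marking $m_w$ derived from an ancestor $\overline{n}$ with $\lambda(\overline{n})\ml\lambda(n_1)$. I rely on the following preliminary remark: every edge that strictly increases $\nbw{\cdot}$ triggers a fresh recursive call of $\SCT$; inside a single recursive call every edge preserves $\nbw{\cdot}$, and any ancestor $\overline{n}$ returned by $\Post$ for $n_1$ lies in the same recursive call, so that $\nbw{\lambda(\overline{n})}=\nbw{\lambda(n_1)}$.

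If $\nbw{\lambda(n_1)}=\nbw{\lambda(n_2)}$ (Case 1) then $\lambda(n_1)$ and $\lambda(n_2)$ share the same $\omega$-places and $\lambda(n_2)=\lambda(n_1)-I(t)+O(t)$ in $\omega$-arithmetic; I take $\sigma_\pi=t=\mu(e)$, force $m_1(p)=\lambda(n_1)(p)$ and $m_2(p)=m(p)=\lambda(n_2)(p)$ on finite places, and choose $m_1(p)$ large enough on the $\omega$-places (and adjust the non-deterministic $\omega$-output of $t$ whenever $O(t)(p)=\omega$) so that $m_2(p)\ge m(p)$. Firability of $t$ at $m_1$ on finite places is already granted by line~\ref{lst:forall_t} of $\treealgo$. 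If $\nbw{\lambda(n_2)}>\nbw{\lambda(n_1)}$ but $\Post$ did not accelerate (Case 2), the fresh $\omega$'s in $\lambda(n_2)$ are exactly those places where $O(t)(p)=\omega$, and the same construction works with $\sigma_\pi=t$, exploiting the $\omega$-output to hit $m(p)$ on each such place.

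If $\Post$ did accelerate (Case 3), there is an ancestor $\overline{n}$ with $\lambda(\overline{n})\ml\lambda(n_1)$ and $\lambda(n_2)(p)=\omega$ exactly when $\effect(\overline{n}\leadsto n_1\cdot t)(p)>0$. Let $\sigma=\mu(\overline{n}\leadsto n_1)$. By the preliminary remark every edge on $\overline{n}\leadsto n_1$ preserves $\nbw{\cdot}$ and hence falls under Case 1; chaining those Case 1 constructions gives a concrete firing of $\sigma$ from a marking in $\gamma(\lambda(\overline{n}))$ to one in $\gamma(\lambda(n_1))$. By Lemma~\ref{lemma:no-omega-implies-exact-effect}, $\effect(\sigma)(p)=\lambda(n_1)(p)-\lambda(\overline{n})(p)\ge 0$ on every finite place, with strict positivity somewhere. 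I take $\sigma_\pi=\sigma^K\cdot t$ for a suitably large $K$: place monotonicity of \wOPN lifts the previous firing to one from any $m_1\in\gamma(\lambda(n_1))$ with enough tokens on the $\omega$-places, and iterating $\sigma$ non-decreasingly accumulates $\effect(\sigma)$. For $K$ large enough, every place with strictly positive finite effect along $\sigma\cdot t$ is pumped above $m(p)$, every place fed by some transition with $O(\cdot)=\omega$ along $\sigma\cdot t$ can be brought to $m(p)$ by a single non-deterministic choice (using Lemma~\ref{lemma:reachable-markings-is-upward-closed} to fine-tune), and a final firing of $t$ yields $m_2\in\gamma(\lambda(n_2))$ with $m_2\mgeq m$.

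The main obstacle I foresee is the bookkeeping in Case 3: on each newly $\omega$-tagged place of $\lambda(n_2)$ we must simultaneously overshoot $m$ while exactly matching the prescribed finite values on all other places, and the two possible sources of freshness — an iterated positive finite effect along $\sigma$ versus a single $\omega$-output somewhere along $\sigma\cdot t$ — must be handled consistently and for all such places at once. A secondary but essential obstacle is the formal justification, from the control flow of $\SCT$, that $\overline{n}\leadsto n_1$ sits inside a single recursive call and thus has uniform $\nbw{\cdot}$.
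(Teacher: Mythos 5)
Your preliminary remark and your Cases 1--2 coincide with the paper's own proof (its ``Case A''), but Case 3 contains a genuine gap: you iterate the wrong loop. The acceleration at line~\ref{lst:def-m-w} marks $p$ with $\omega$ whenever $\effect(\nbar\leadsto n_1\cdot t)(p)>0$, i.e.\ the positivity test \emph{includes the final transition} $t$. Your pumping sequence $\sigma^K\cdot t$ (with $\sigma=\mu(\nbar\leadsto n_1)$) fires $t$ only once, so on a place $p$ that is finite throughout, its effect is $K\cdot\effect(\sigma)(p)+\effect(t)(p)$; when $\effect(\sigma)(p)=0$ but $\effect(t)(p)>0$, the place \emph{is} accelerated, yet this quantity equals $\effect(t)(p)$ independently of $K$, so $m_2\mgeq m$ fails for large $m(p)$. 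Concretely: take places $p,q$ and transitions $t_b$ (no input, produces one token in $q$) and $t_c$ (no input, produces one token in $p$), with $m_0=\tuple{0,0}$ (writing markings as $\tuple{m(p),m(q)}$). The tree contains $n_0$ labelled $\tuple{0,0}$, its $t_b$-child $n_1$ labelled $\tuple{0,1}$, and, since $\tuple{0,0}\ml\tuple{0,1}$, the $t_c$-child $n_2$ of $n_1$ is accelerated to $\tuple{\w,\w}$ because $\effect(t_bt_c)=\tuple{1,1}>\0$. For $m=\tuple{100,0}$, your sequence $t_b^Kt_c$ fired from the unique $m_1\in\gamma(\tuple{0,1})$ yields $\tuple{1,K+1}$, which covers $m$ for no $K$. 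This is exactly why the paper's sequence is $\sigma_\pi=\mu(e)\big(\sigmabar\cdot\mu(e)\big)^K$ for a large enough $K$: the iterated block contains $\mu(e)=t$ itself, so its effect on every accelerated place grows at least linearly in $K$.

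A secondary gap appears once the sequence is corrected: your enabledness argument (chaining the Case-1 constructions along $\nbar\leadsto n_1$, then monotonicity) shows that $\sigma^K$ is firable from suitable markings in $\gamma(\lambda(n_1))$, but it does not show that the interleaved sequence $\mu(e)\big(\sigmabar\cdot\mu(e)\big)^K$ is firable: after firing $t$, the current marking need not dominate $\lambda(\nbar)$ on the non-accelerated places (there $\effect(\sigmabar\cdot t)(p)\leq 0$ by line~\ref{lst:def-m-w}), so monotonicity from $\gamma(\lambda(\nbar))$ does not apply to the next occurrence of $\sigmabar$. The paper handles this with the place-wise predicate $\af$ and its two closure properties (iterability under non-negative per-place effect, and the cut-off at the first $\w$-output), verifying enabledness place by place and splitting into its cases 2, 3 and 4. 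Any repair of your Case 3 needs that finer, per-place analysis rather than global monotonicity.
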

\begin{proof}
  Edges are created by \treealgo in line~\ref{lst:connect} only. Thus,
  by the test of the \texttt{forall} loop (line~\ref{lst:forall_t}), and
  since we are considering an \wOPN:
%
  \begin{eqnarray}
    \label{eq:3}
    \lambda(n_1)&\geq&I(\mu(e))
  \end{eqnarray}
  Moreover, when creating an edge $(n,n')$ (line~\ref{lst:connect}),
  $n'$ is either a fresh node s.t. $\lambda(n')$ is the \w-marking
  returned by $\Post(\Nn,\lambda(n), t)$, or $n'$ is the root of the
  subtree $\Tt'$ returned by the recursive call $\treealgo(\Nn, m')$,
  with $\mu(n,n')=t$ in both cases. However, in the latter case, the
  root of $\Tt'$ is $m'$, i.e., the marking returned by
  $\Post(\Nn,\lambda(n), t)$ too. Since this holds for all edges, we
  conclude that $\lambda(n_2)$ is the \w-marking $m'$ returned by
  $\Post(\Nn, \lambda(n_1), \mu(e))$. Considering the definition of
  the $\Post$ function, we see that $m'$ is either
  $\lambda(n_1)-I(t)+O(t)$ (when the condition of the \texttt{if} in
  line~\ref{lst:if-post} is not satisfied), or the result $m_\w$ of an
  acceleration (when the condition of the \texttt{if} in
  line~\ref{lst:if-post} is satisfied). We consider these two cases
  separately.

  \noindent {\bf {\sc Case A}: the condition of the $\mathtt{if}$ in
    line~\ref{lst:if-post} has not been satisfied (i.e., no
    acceleration has occurred).} Then, $\lambda(n_2)$ is the marking
  $m'$ computed in line~\ref{lst:effect-trans}:
  \begin{align}
    \lambda(n_2)&=\lambda(n_1)-I(\mu(e))+O(\mu(e))\label{eq:4}
  \end{align}
  We let $m_1$ be the marking s.t. for all places $p\in P$:
  \begin{align*}
    m_1(p) &=
    \begin{cases}
      \lambda(n_1)(p) &\textrm{if }\lambda(n_1)(p)\neq\omega\\
      I(\mu(e))(p)+m(p)&\textrm{otherwise}
    \end{cases}
  \end{align*}
  And we let $m_2$ be the marking s.t., for all places $p\in P$: 
  \begin{align*}
    m_2(p) &=
    \begin{cases}
      m_1(p)+O(\mu(e))(p)-I(\mu(e))(p) &\textrm{if }O(\mu(e))(p)\neq \omega\\
      m_1(p)-I(\mu(e))(p)+m(p)&\textrm{otherwise}
    \end{cases}
  \end{align*}
  Finally, we let:
  \begin{align*}
    \sigma_\pi &= \mu(e)
  \end{align*}
  Let us show that $m_1$, $m_2$ and $\sigma_\pi=\mu(e)$ satisfy the
  lemma. First, we observe that $m_1\in\gamma(\lambda(n_1))$, by
  definition. Then, we further observe that there are only four
  possibilities regarding the possible values of $\lambda(n_1)(p)$,
  $\lambda(n_2)(p)$ and $O(\mu(e))(p)$, as shown in the following
  table. Indeed, $n_2$ is a successor of $n_1$ in the tree, so
  $\pom{n_2}\supseteq \pom{n_1}$. Moreover,
  $\lambda(n_2)(p)=\omega\neq \lambda(n_1)(p)$ holds for some $p$ iff
  $O(\mu(e))(p)=\omega$, as we have assumed that the condition of the
  $\mathtt{if}$ in line~\ref{lst:if-post} has not been satisfied:
  $$
  \begin{array}{c||c|c|c}
    \textrm{Case}  &\lambda(n_1)(p)&\lambda(n_2)(p)&O(\mu(e))(p)\\
    \hline\hline
    1& =\omega& =\omega& =\omega\\\hline  
    2& =\omega& =\omega& \neq\omega\\\hline
    3& \neq\omega&=\omega&=\omega\\\hline
    4& \neq\omega&\neq\omega&\neq\omega\\
  \end{array}
  $$
  For these four different cases, we obtain the following values for
  $m_1(p)$ and $m_2(p)$, by definition:
  \begin{align}
    m_1(p) &=
    \begin{cases}
      I(\mu(e))(p)+m(p)&\textrm{cases 1 and 2}\\
      \lambda(n_1)(p)&\textrm{cases 3 and 4}
    \end{cases}\label{eq:5}
  \end{align}    
  
  \begin{align}
    m_2(p) &=
    \begin{cases}
      2\times m(p)&\textrm{case 1}\\
      m(p)+O(\mu(e))(p)&\textrm{case 2}\\
      \lambda(n_1)(p)-I(\mu(e))(p)+m(p)&\textrm{case 3}\\
      \lambda(n_1)(p)+O(\mu(e))(p)-I(\mu(e))(p)&\textrm{case 4}
    \end{cases}\label{eq:6}
  \end{align}    

  To prove that $m_2\in\gamma(\lambda(n_2))$, we must show that
  $m_2(p)=\lambda(n_2)(p)$ for all $p$ s.t. $\lambda(n_2)(p)\neq
  \omega$, which corresponds only to case 4, where we have: 
  \begin{align*}
    m_2(p)&= \lambda(n_1)(p)+O(\mu(e))(p)-I(\mu(e))(p)&\textrm{By~(\ref{eq:6})}\\
    &= \lambda(n_2)(p)&\textrm{By~(\ref{eq:4})}
  \end{align*}
 
  Then, it remains to show that $m_1\xrightarrow{\mu(e)} m_2$. First,
  we show that, $\mu(e)$ is firable from $m_1$, i.e. that for all
  $p\in P$: $m_1(p)\geq I(\mu(e))(p)$. In case 1 and 2, we have
  $m_1(p)=I(\mu(e))(p)+m(p)\geq I(\mu(e))(p)$. In cases 3 and 4, we
  have $m_1(p)=\lambda(n_1)(p)$, with $\lambda(n_1)(p)\geq
  I(\mu(e))(p)$ by (\ref{eq:3}). Thus, $\mu(e)$ is firable from
  $m_1$. Then, we must show that $m_2$ can be obtained as a successor
  of $m_1$ by $\mu(e)$. In cases 1 and 3, the effect of $\mu(e)$ is to
  remove $I(\mu(e))(p)$ tokens from $p$ and to produce an arbitrary
  number $K$ of tokens in $p$. Hence, in case~1, by firing $\mu(e)$
  from $m_1$, we obtain $I(\mu(e))(p)+m(p)-I(\mu(e))(p)+K=m(p)+K$
  tokens in $p$. In case~3, by firing $\mu(e)$ from $m_1$, we obtain
  $\lambda(n_1)(p)-I(\mu(e))(p)+K$ tokens in $p$. In both cases, by
  letting $K=m(p)$, we obtain $m_2(p)$. In cases 2 and 4, the effect
  of $\mu(e)$ on place $p$ is equal to
  $O(\mu(e))(p)-I(\mu(e))(p)$. Hence, in case 2, by firing $\mu(e)$
  from $m_1$, we obtain
  $I(\mu(e))(p)+m(p)-I(\mu(e))(p)+O(\mu(e))(p)=m(p)+O(\mu(e))(p)$
  tokens in $p$. In case 4, by firing $\mu(e)$ from $m_1$, we obtain
  $\lambda(n_1)(p)-I(\mu(e))(p)+O(\mu(e))(p)$ tokens in $p$. In both
  cases, these values correspond exactly to $m_2(p)$.

  We conclude this case by observing that
  $\nbw{\lambda(n_1)}=\nbw{\lambda(n_2)}$ implies that no acceleration
  has been performed, which is the present case. We have thus shown
  that when $\nbw{\lambda(n_1)}=\nbw{\lambda(n_2)}$,
  $\sigma_\pi=\mu(e)$ is a sequence of transitions that satisfies the
  lemma.  \medskip

  \noindent{\bf {\sc Case B}: the condition of the $\mathtt{if}$ in
    line~\ref{lst:if-post} has been satisfied (an acceleration has
    occurred)}. Remark that, in this case, $n_1$ is the node called $n$
  in the condition of the \textbf{if}, and $\mu(e)$ is the transition
  called $t$ in the same condition.  Let $\sigmabar$ be the sequence
  of transitions labelling the path from $\nbar$ to $n_1$. Let
  $P^{Acc}$ denote the set of places:
  \begin{align}
    P^{Acc} &= \{p\mid \effect(\sigmabar(p))>0\wedge
    \lambda(n_2)(p)\neq\omega\wedge O(\mu(e))(p)\neq\omega\}\label{eq:def-P-acc}
  \end{align}
  Then, let $K$ be the value defined as:
  \begin{align}
    K &= \max_{p\in P^{Acc}}\{m(p)\}\label{eq:def-K}
  \end{align}
  This value allows us to define the sequence of transitions $\sigma_\pi$:
  \begin{align}
    \sigma_\pi &= \mu(e)\big(\sigmabar\cdot\mu(e)\big)^K\label{eq:def-sigma-pi}
  \end{align}
  
  From those definitions of $\nbar$, $n_1$, $n_2$, $\sigmabar$ and
  $\mu(e)$, we conclude that only the following cases are possible,
  for all places $p$:
  $$
  \begin{array}{c||cccccc}
    \mathrm{case}& \lambda(\nbar)(p)& \lambda(n_1)(p)& \lambda(n_2)(p)& \effect(\sigmabar)(p)& \effect(\mu(e))(p)& \mathrm{Remark}\\
    \hline\hline
    1            & \omega           & \omega         & \omega         & \in\mathbb{Z}\cup\{\omega\}& \in\mathbb{Z}\cup\{\omega\}\\
    2            & \neq \omega      & \neq \omega    & \neq\omega     & \neq \omega                &\neq \omega\\
    3            & \neq \omega      & \neq \omega    & \omega         & \neq \omega                & \omega\\
    4            & \neq \omega      & \neq \omega    & \omega         & \neq \omega                & \neq\omega & \effect(\sigmabar\cdot\mu(e))(p)>0
  \end{array}
  $$
  Those cases are the only possible because $\nbar$ is an ancestor of
  $n_1$, which is itself an ancestor of $n_2$. Moreover, by
  construction, $\nbw{\nbar}=\nbw{n_1}$, since those two nodes have
  been computed during the same recursive call. Thus, the occurrence
  of a fresh $\omega$ can only appear between $n_1$ and $n_2$, either
  because $\effect(\mu(e))(p)=\omega$ (case 3), or because we have
  performed an acceleration (case 4). Remark that the latter only
  occurs when $\effect(\sigmabar\cdot\mu(e))(p)>0$.
   
  Let us next define the marking $m_1$, as:
  \begin{eqnarray}
    m_1(p)&=&
    \left\{
      \begin{array}{ll}
        \lambda(n_1)(p)&\textrm{if }\lambda(n_1)(p)\neq \omega\\
        I(\sigma_\pi)(p)+m(p)&\textrm{otherwise}
      \end{array}
    \right.\label{eq:def-of-m1}
  \end{eqnarray}
  where $I(\sigma_\pi)(p)$ denotes $\sum_{i=1}^nI(t_i)(p)$ for
  $\sigma_\pi=t_1,\ldots, t_n$.  Observe that, by definition: $m_1\in
  \gamma(\lambda(n_1))$. %
  Then, let us prove that $\sigma_\pi$ is firable from $m_1$. First
  observe that, if $p$ is a place s.t. $\lambda(n_1)(p)=\omega$, then
  $\af(\sigma_\pi, m_1, p)$ holds, because, in this case, $m_1(p)\geq
  I(\sigma_\pi)(p)$, by (\ref{eq:def-of-m1}). Then, assume $p$ is a
  place s.t. $\lambda(n_1)(p)\neq\omega$. In this case, by definition,
  $m_1(p)=\lambda(n_1)$. First observe that, by construction, and
  since we consider \wOPN (see line~\ref{lst:forall_t} of the
  algorithm):
  \begin{align}
    \forall p: \lambda(n_1)(p) &\geq I(\mu(e))(p)\label{eq:8}
  \end{align}
  Let us now consider all the possible cases, which are cases 2, 3 and
  4 from the table above (case 1 cannot occur since we have assumed
  that $\lambda(n_1)(p)\neq\omega$):
  \begin{itemize}
  \item \emph{In case 2}, since the condition of the \textbf{if}
    (line~\ref{lst:if-post}) is satisfied, we know that
    $\effect(\sigmabar\cdot \mu(e))(p)\geq 0$. Since
    $\lambda(\nbar)(p)\neq\omega$, and $\lambda(n_1)(p)\neq\omega$, we
    can apply Lemma~\ref{lemma:no-omega-implies-exact-effect}, and
    conclude that:
    \begin{align*}
      \lambda(n_2)(p)&=\lambda(\nbar)(p)+\effect(\sigmabar\cdot\mu(e))(p)\\
      &= \lambda(\nbar)(p)+\effect(\sigmabar)(p) + \effect(\mu(e))(p)\\
      &= \lambda(n_1)(p)+\effect(\mu(e))(p)
    \end{align*}
    Thus:
    \begin{align}
      \lambda(n_1)(p)+\effect(\mu(e))(p) &\geq \lambda(\nbar)(p)\label{eq:7}
    \end{align}
    since $\effect(\sigmabar\cdot \mu(e))(p)\geq 0$. By applying
    \textsc{Case A} (above) iteratively along the branch from $\nbar$
    to $n_1$, we deduce that $\af(\sigmabar,\lambda(\nbar),p)$
    holds. Hence,
    $\af(\sigmabar,\lambda(n_1)(p)+\effect(\mu(e))(p),p)$ holds too,
    by (\ref{eq:7}). Finally, by~(\ref{eq:8}), we conclude that
    $\af(\mu(e)\cdot\sigmabar,\lambda(n_1)(p),p)$ holds. However,
    $\effect(\mu(e)\cdot\sigmabar)(p)=\effect(\sigmabar\cdot
    \mu(e))(p)\geq 0$. Thus, since $\mu(e)\cdot\sigmabar$ has a
    positive effect on $p$, we conclude that
    $\af\left((\mu(e)\cdot\sigmabar)^K,\lambda(n_1)(p),p\right)$ holds
    too, for all $K\geq 1$. Finally, since
    $\effect\left((\mu(e)\cdot\sigmabar)^K\right)(p)\geq 0$, we
    conclude that
    $$
    \lambda(n_1)(p)+\effect\left((\mu(e)\cdot\sigmabar)^K\right)\geq
    \lambda(n_1)(p)
    $$
    Thus, by~(\ref{eq:8}), we have
    $$
    \lambda(n_1)(p)+\effect\left((\mu(e)\cdot\sigmabar)^K\right)\geq
    I(\mu(e))
    $$ 
    and we can thus fire $\mu(e)$ once again after firing
    $(\mu(e)\cdot\sigmabar)^K$. Hence,
    $$
    \af\left((\mu(e)\cdot\sigmabar)^K\cdot\mu(e),\lambda(n_1),p\right)
    $$
    holds, with $\sigma_\pi=(\mu(e)\cdot\sigmabar)^K\cdot\mu(e)$.
  \item \emph{In case 3}: by~(\ref{eq:8}), since
    $O(\mu(e))(p)=\omega$, and since $\mu(e)$ is the first transition
    of $\sigma_\pi$, we immediately conclude that $\af(\sigma_\pi,
    \lambda(n_1),p)$.
  \item \emph{In case 4}, we can adapt the reasoning of case 2 as
    follows. First remember, that, in case 4,
    $\effect(\sigmabar\cdot\mu(e))(p)>0$.  Since
    $\lambda(\nbar)(p)\neq\omega$, and $\lambda(n_1)(p)\neq\omega$, we
    can apply Lemma~\ref{lemma:no-omega-implies-exact-effect}, and
    conclude that
    $\lambda(n_1)(p)=\lambda(\nbar)(p)+\effect(\sigmabar)(p)$. Thus:
    \begin{align*}
      \lambda(n_1)(p)+\effect(\mu(e))(p) &=\lambda(\nbar)(p)+\effect(\sigmabar)(p)+\effect(\mu(e))(p)\\
      &=\lambda(\nbar)(p)+\effect(\sigmabar\cdot\mu(e))(p)
    \end{align*}
    with $\effect(\sigmabar\cdot\mu(e))(p)>0$. Hence:
    \begin{align*}
      \lambda(n_1)(p)+\effect(\mu(e))(p)&>\lambda(\nbar)(p)
    \end{align*}
    This implies~(\ref{eq:7}), and we can thus reuse the arguments of
    case 2 to conclude that
    $\af\left(\sigma_\pi,\lambda(n_1),p\right)$ holds in the present
    case too.
  \end{itemize}
  Thus, for all $p$ s.t. $\lambda(n_1)(p)\neq\omega$:
  $\af(\sigma_\pi,\lambda(n_1),p)$ holds. However,
  $\lambda(n_1)(p)\neq\omega$ implies that $m_1(p)=\lambda(n_1)(p)$,
  hence, $\af(\sigma_\pi, m_1,p)$ holds in those cases. Thus, we conclude
  that $\af(\sigma_\pi,m_1,p)$ holds for all places $p$, and thus, that
  $\sigma_\pi$ is firable from $m_1$.

  To conclude the proof let us build a marking $m_2$ that respects the
  conditions given in the statement of the lemma. Let $\mbar$ be a
  marking s.t. $m_1\xrightarrow{\sigma_\pi}\mbar$. We know that such a
  marking exists since $\sigma_\pi$ is firable from $m_1$. We first
  observe that, by
  Lemma~\ref{lemma:when-no-omega-the-effect-is-the-same-as-PN}:
  \begin{align}
    \forall p\textrm{ s.t. }\effect(\sigma_\pi)(p)\neq\omega &:
    \mbar(p)=m_1(p)+\effect(\sigma_\pi)(p)\label{eq:10}
  \end{align}
  From $\mbar$,
  we define $m_2$ as follows:
  \begin{align}\label{eq:9}
    m_2(p) &=
    \begin{cases}
      \mbar(p)&\textrm{if }\effect(\sigma_\pi)(p)\neq\w\\
      \max\left\{\mbar(p),m(p)\right\}&\textrm{otherwise}
    \end{cases}
  \end{align}
  Clearly, $m_2\mgeqp{P'}\mbar$, for
  $P'=\{p\mid\effect(\sigma_\pi)(p)=\w$. Hence, by
  Lemma~\ref{lemma:reachable-markings-is-upward-closed},
  $m_1\xrightarrow{\sigma_\pi}m_2$ holds. Let us conclude the proof by
  showing that $m_2\in\gamma(\lambda(n_2))$, and that $m_2\geq m$, as
  requested. Since $m$ has been assumed to be in
  $\gamma(\lambda(n_2))$ too, it is sufficient to show that for all
  place $p$: $(i)$ $\lambda(n_2)(p)=\omega$ implies $m_2(p)\geq m$,
  and $(ii)$ $\lambda(n_2)(p)\neq\omega$ implies
  $m_2(p)=\lambda(n_2)(p)$.

  Thus, we consider each place $p$ separately, by reviewing the four
  cases given in the table above:
  \begin{itemize}
  \item \emph{In case 1}, $m_1(p)=I(\sigma_\pi)(p)+m(p)$ and
    $\lambda(n_2)(p)=\omega$. Let us show that $m_2(p)\geq m(p)$. We
    consider two further cases:
    \begin{enumerate}
    \item either $\effect(\sigma_\pi)(p)\neq\omega$. In this case:
      \begin{align*}
        m_2(p) &= \mbar(p) &\textrm{By~(\ref{eq:9})}\\
        &= m_1(p)+\effect(\sigma_\pi)(p) &\textrm{By~(\ref{eq:10})}\\
        &= I(\sigma_\pi)(p)+\effect(\sigma_\pi)(p)+m(p) &\textrm{By~(\ref{eq:def-of-m1})}\\
        &\geq m(p)
      \end{align*}
    \item or $\effect(\sigma_\pi)(p)=\omega$. Then, $m_2(p)\geq m(p)$
      by~(\ref{eq:9})
    \end{enumerate}

  \item \emph{In case 2}, we know that $\effect(\mu(e))(p)\neq\omega$
    and $\effect(\sigmabar)(p)\neq\omega$, hence
    $\effect(\sigmabar\cdot\mu(e))\neq\omega$ and
    $\effect(\sigma_\pi)\neq\omega$ either. Then:
    \begin{align*}
      m_2(p) &= \mbar(p) &\textrm{By (\ref{eq:9})}\\
      &= m_1(p)+\effect(\sigma_\pi)(p) &\textrm{By~(\ref{eq:10})}\\
      &= \lambda(n_1)(p)+\effect(\sigma_\pi)(p) &\textrm{By~(\ref{eq:def-of-m1})}\\
      &=\lambda(n_2)(p) &\textrm{Lemma~\ref{lemma:no-omega-implies-exact-effect} and }\effect(\sigmabar\cdot\mu(e))\neq\omega
    \end{align*}
  \item \emph{In case 3}, $\lambda(n_2)(p)=\omega$ and
    $\effect(\sigma_\pi)(p)=\omega$ too. Hence, $m_2(p)\geq m(p)$ by
    (\ref{eq:9}).
  \item \emph{In case 4}, $\lambda(n_2)(p)=\omega$ again, and
    $m_1(p)=\lambda(n_1)(p)$, by~(\ref{eq:def-of-m1}). Moreover, we have
    $\effect(\sigma_\pi)(p)\neq\omega$, because
    $\effect(\sigmabar)(p)\neq\omega$ and
    $\effect(\mu(e))(p)\neq\omega$. Finally, since in case 4, we have
    $\effect(\sigmabar\cdot\mu(e))(p)>0$, and since
    $\sigma_\pi=\mu(e)\big(\sigmabar\cdot\mu(e)\big)^K$, we conclude that
    $\effect(\sigma_\pi)(p) \geq K-\effect(\mu(e))(p)$. Thus:
    \begin{align*}
      m_2(p) &= \mbar(p) &\textrm{By~(\ref{eq:9})}\\
      &=m_1(p)+\effect(\sigma_\pi)(p) &\textrm{By~(\ref{eq:10})}\\
      &\geq m_1(p) +K-\effect(\mu(e))(p)&\textrm{See above}\\
      &= m_1(p) + K-I(\mu(e))(p)+O(\mu(e))(p)&\textrm{Def. of }\effect\\
      &\geq K + m_1(p) - I(\mu(e))(p)\\
      &\geq K + \lambda(n_1)(p) - I(\mu(e))(p)&\textrm{By~(\ref{eq:def-of-m1})}\\
      &\geq K &\textrm{By~(\ref{eq:8})}\\
      &\geq m(p)&p\in P^{Acc}\textrm { and by~(\ref{eq:def-P-acc})
        and~(\ref{eq:def-sigma-pi})}
    \end{align*}\qed
  \end{itemize}
\end{proof}

We are now ready to prove
Lemma~\ref{lemma:path-implies-execution}:\bigskip

\noindent{\it
 Let $\Nn$ be an \wOPN, let $m_0$ be an \w-marking and let $\Tt$ be
  the tree returned by\break $\treealgo(\Nn,m_0)$. Let $\pi=n_0,\ldots, n_k$
  be a stuttering path in $\Tt$, and let $m$ be a marking in
  $\gamma(\lambda(n_k))$. Then, there exists an execution
  $\rho_\pi=m_0\xrightarrow{t_1}m_1\cdots\xrightarrow{t_\ell}m_\ell$
  of $\Nn$ s.t. $m_\ell\in\gamma(\lambda(n_k))$, $m_\ell\mgeq m$ and
  $m_0\in\gamma(\lambda(n_0))$.
  Moreover, when for all $0\leq i\leq j\leq k$: $\nbw{n_i}=\nbw{n_j}$,
  we have: $t_1\cdots t_\ell=\mu(\pi)$.
}

\begin{proof}
  We build, by induction on the length $k$ of the path in the tree, a
  corresponding execution of $\Nn$. The induction works backward,
  starting from the end of the path.
  
  \noindent\textbf{Base case, $k=0$}. Since $n_k=n_0$, we can take
  $m_0=m$, which clearly satisfies the Lemma since
  $m\in\lambda(n_k)=\lambda(n_0)$.

  \noindent\textbf{Inductive case, $k>0$}. The induction hypothesis is
  that there are a sequence of transitions $\sigma$ and two markings
  $m_1$ and $m_k$ s.t. $m_1\xrightarrow{\sigma}m_k$,
  $m_1\in\gamma(\lambda(n_1))$, $m_k\in\gamma(\lambda(n_k))$, and
  $m_k\geq m$. In the case where $(n_0,n_1)$ is not an edge of
  $\Tt$ (i.e., $n_1$ is an ancestor of $n_0$), we
  know that $\lambda(n_0)=\lambda(n_1)$ by definition of stuttering
  and let $\rho_{pi}= m_1\xrightarrow{\sigma}m_k$. Otherwise, we can
  apply Lemma~\ref{lemma:soundness}, and conclude that there are
  $\sigma'$, $m_0$ and $m_1'$ s.t. $m_0\xrightarrow{\sigma'}m_1'$,
  $m_0\in\gamma(\lambda(n_0))$, $m_1'\in\gamma(\lambda(n_1))$ and
  $m_1'\mgeq m_1$. Since $m_1'\mgeq m_1$, $\sigma$ is also firable
  from $m_1'$. Let $m_k'=m_1'+(m_k-m_1)$. Clearly,
  $m_0\xrightarrow{\sigma'}m_1'\xrightarrow{\sigma}m_k'$. Moreover,
  $m_k'\mgeq m_k\mgeq m$, by monotonicity. Let us show that
  $m_k'\in\gamma(\lambda(n_k))$. Since $m_1'$ and $m_1$ are both in
  $\gamma(\lambda(n_1))$: $m_1(p)=m_1'(p)$ for all $p$
  s.t. $\lambda(n_1)(p)\neq\omega$. Thus, by strong monotonicity, we
  conclude that $m_k(p)=m_k'(p)$ for all $p$
  s.t. $\lambda(n_1)(p)\neq\omega$. However, for all places $p$,
  $\lambda(n_k)(p)\neq\omega$ implies $\lambda(n_1)(p)\neq\omega$, as
  the number of $\omega$'s increase along a path in the tree. Thus we
  conclude that $m_k(p)=m_k'(p)$ for all $p$
  s.t. $\lambda(n_k)(p)\neq\omega$. Since $m_k(p)=\lambda(n_k)(p)$ for
  all $p$ s.t. $\lambda(n_k)(p)\neq\omega$ because
  $m_k\in\gamma(\lambda(n_k))$ by induction hypothesis, we conclude
  that $m_k'\in\gamma(\lambda(n_k))$ too. Thus, $m_0$, $m_k'$ and
  $\sigma'\cdot\sigma$ fulfill the statement of the lemma.

  Finally, observe that, when all the nodes along the path $\pi$ have
  the same number of $\omega$'s, Lemma~\ref{lemma:soundness}
  guarantees that $\mu(\pi)$ can be chosen for the sequence of
  transitions $\sigma$.\qed
\end{proof}
\section{Proof of
  Lemma~\ref{lemma:every-node-is-fully-developed-or-has-no-succ}}
\noindent\emph{Let $\Nn$ be an \wOPN, let $m_0$ be an \w-marking, and
  let $\Tt$ be the tree returned by $\treealgo(\Nn,m_0)$. Then, for
  all nodes $n$ of $\Tt$:
  \begin{itemize}
  \item either $n$ has no successor in the tree and has an ancestor
    $\nbar$ s.t. $\lambda(\nbar)=\lambda(n)$.
  \item or the set of successors of $n$ corresponds to all the
    $\rightarrow_\omega$ possible successors of $\lambda(n)$, i.e.:
    $\{\mu(n,n')\mid (n,n')\in E\} = \{t\mid
    \lambda(n)\xrightarrow{t}_\omega\}$. Moreover, for each $n'$
    s.t. $(n,n')\in E$ and $\mu(n,n')=t$:
    $\lambda(n')\mgeq\lambda(n)+\effect(t)$.
  \end{itemize}
}
\begin{proof}
  Observe that each time a node is created, it is inserted into
  $\mathtt{U}$, or a recursive call is performed on this node. In both
  cases, the node will eventually be considered in
  line~\ref{lst:if2}. If the condition of the \texttt{if} in
  line~\ref{lst:if2} is not satisfied, $n$ has an ancestor $\nbar$
  s.t. $\lambda(\nbar)=\lambda(n)$. Otherwise, all transitions $t$ that
  are firable from $\lambda(n)$ are considered in the loop in
  lines~\ref{lst:forall_t} onward, and a corresponding edge $(n,n')$
  with $\mu(n,n')=t$ is added to the tree in
  line~\ref{lst:connect}. The label $\lambda(n')$ of this node is
  either $\lambda(n)+\effect(t)$, or a $\mgeq$-larger marking, in the
  case where an acceleration has been performed during the
  \texttt{Post}, in line~\ref{lst:def-m-w}. Thus in both cases,
  $\lambda(n')\mgeq\lambda(n)+\effect(t)$. The algorithm terminates
  because \texttt{U} has become empty. Thus, all the nodes that have
  eventually been constructed by the algorithm fall into these two
  cases. Hence the Lemma.\qed
\end{proof}

\section{Proof of Lemma~\ref{lemma:execution-implies-path} (completeness)}
\noindent\textit{Let $\Nn$ be an \wOPN with set of transitions $T$,
  let $m_0$ be an initial marking, let $\Tt$ be the tree returned by
  $\treealgo(\Nn,m_0)$ and let
  $m_0\xrightarrow{t_1}m_1\xrightarrow{t_2}\cdots\xrightarrow{t_n}m_n$
  be an execution of $\Nn$. Then, there are a \emph{stuttering} path
  $\pi=n_0,n_1,\ldots,n_k$ in $\Tt$ and a monotonic increasing mapping
  $h:\{1,\ldots,n\}\mapsto\{0,\ldots,k\}$ s.t.: $\mu(\pi) =
  t_1t_2\cdots t_n$ and $m_i\mleq\lambda(n_{h(i)})$ for all $0\leq
  i\leq n$.}
\begin{proof}
  The proof is by induction on the length of the execution.

  \noindent\textbf{Base case: $n=0$} We let $h(0)=0$. By construction
  $\lambda(n_0)=m_0$, hence the lemma.

  \noindent\textbf{Inductive case: $n>0$} The induction hypothesis is
  that there are a path $\pi=n_0,\ldots n_\ell$ and a mapping
  $h:\{0,\ldots,n-1\}\mapsto \{0,\ldots,\ell\}$ satisfying the lemma
  for the execution prefix
  $m_0\xrightarrow{t_1}m_1\xrightarrow{t_2}\cdots\xrightarrow{t_{n-1}}m_{n-1}$.
  By Lemma~\ref{lemma:every-node-is-fully-developed-or-has-no-succ},
  we consider two cases for $n_\ell$:
  \begin{itemize}
  \item Either the set of successors of $n_\ell$ corresponds to the
    set of all transitions that are firable from
    $\lambda(n_\ell)$. Since, by induction hypothesis, $n_\ell\mgeq
    m_{n-1}$, and since $t_n$ is firable from $m_{n-1}$, we conclude
    that $t_n$ is firable from $\lambda(n_\ell)$ by
    monotonicity. Hence, $n_\ell$ has a successor $n$
    s.t. $\mu(n_\ell,n)=t_n$. Still by
    Lemma~\ref{lemma:every-node-is-fully-developed-or-has-no-succ},
    \begin{align*}
      \lambda(n) &\mgeq\lambda(n_\ell)+\effect(t_n)\\
      &\mgeq m_{n-1} +\effect(t_n)\\
      &\mgeq m_n\\
    \end{align*}
    Hence, we let $n_{\ell+1}=n$, and $h(n)=\ell+1$.
  \item Or the set of successors of $n_\ell$ is empty. In this case,
    by Lemma~\ref{lemma:every-node-is-fully-developed-or-has-no-succ},
    there exists an ancestor $n$ of $n_\ell$
    s.t. $\lambda(n)=\lambda(n_\ell)$. Let $n_{\ell+1}$ be such a
    node. Moreover, as $n_{\ell+1}\neq n_\ell$, and $n_{\ell+1}$ is an
    ancestor of $n_\ell$, $n_{\ell+1}$ must have at least one
    successor. Hence, by
    Lemma~\ref{lemma:every-node-is-fully-developed-or-has-no-succ},
    $n_{\ell+1}$ is fully developed, and we can apply the same
    reasoning as above to conclude that there is a successor $n'$ of
    $n_{\ell+1}$ s.t. $\lambda(n')\mgeq m_n$ and
    $\mu(n_{\ell+1},n')=t_n$. Let $n_{\ell+2}$ be such a node. We
    conclude by letting $h(n)=\ell+2$.\qed
  \end{itemize}
\end{proof}

\section{Proof of Lemma~\ref{lem-rem-I-w}}
\noindent{\it Let $\Nn$ be an \wPN. For all executions
  $m_0,t_1',m_1,\ldots, t_n',m_n$ of $\remIw(\Nn)$:
  $m_0,t_1,\break m_1,\ldots, t_n,m_n$ is an execution of $\Nn$. For all
  finite (resp. infinite) executions $m_0, t_1, m_1,\break\ldots, t_n, m_n$
  ($m_0, t_1, m_1,\ldots, t_j,m_j,\ldots$) of $\Nn$, there is an
  execution $m_0, t_1', m_1',\ldots,\break t_n',m_n'$ ($m_0, t_1,
  m_1',\ldots, t_j,m_j',\ldots$) of $\remIw(\Nn)$, s.t. $m_i\mleq
  m_i'$ for all $i$. }
\begin{proof}
  The first point follows immediately from the definition of
  $\remIw(\Nn)$ and from the fact that consuming $0$ tokens in each
  place $p$ s.t. $I(t_i)(p)=\omega$ is a valid choice when firing each
  transition $t_i$ in $\Nn$. The second point is easily shown by
  induction on the execution, because firing each $t_i$ produces the
  same amount of tokens that $t_i'$; consumes the same amount of token
  as each $t_i'$ in all places s.t. $I(t_i)(p)\neq \w$, and consumes,
  in each place $p$ s.t. $I(t_i)(p)=\w$ a number of tokens that is
  larger than or equal to the number of tokens consumed by
  $t_i'$.\qed
\end{proof}

\section{Proofs for Lemmas in Section~\ref{sec:extend-rack-techn}}
\begin{proof}[Lemma~\ref{lem:shortThPumpSeq}]
  This proof is similar to that of \cite[Lemma 4.5]{Rackoff78}, with
  some modifications to handle $\omega$-transitions. It is organized
  into the following steps.
  \begin{list}{Step}{}
    \item[Step 1:] We first associate a vector with a sequence of
      transitions to measure the effect of the sequence. This is the
      step that differs most from that of \cite[Lemma 4.5]{Rackoff78}.
      The idea in this step is similar to the one used in \cite[Lemma
      7]{BJK2010}.
    \item[Step 2:] Next we remove some simple loops from $\seq$ to
      obtain $\seq''$ such that for every intermediate
      $\omega$-marking $m$ in the run $m_{1} \tstep{\seq}{\threshold}
      m_{2}$, $m$ also occurs in the run $m_{1}
      \tstep{\seq''}{\threshold} m_{2}$.
    \item[Step 3:] The sequence $\seq''$ obtained above need not be a
      $\threshold$-PS. With the help of the vectors defined in step 1,
      we formulate a set of linear Diophantine equations that encode
      the fact that the effects of $\seq''$ and the simple loops that
      were removed in step 2 combine to give the effect of a
      $\threshold$-PS.
    \item[Step 4:] Then we use the result about existence of small solutions to
      linear Diophantine equations to construct a sequence $\seq'$
      that meets the length constraint of the lemma.
    \item[Step 5:] Finally, we prove that $\seq'$ is a $\threshold$-PS
      enabled at $m_{1}$.
  \end{list}
  
  \emph{Step 1}: Let $P_{\omega} \subseteq \cbasis{m_{1}}$ be the set
  of places $p$ such that some transition $t$ in $\seq$ has
  $\effect(t) ( p) = \omega$. If we ensure that for each place $p \in
  P_{\omega}$, some transition $t$ with $\effect (t) (p) = \omega$ is
  fired, we can ignore the effect of other transitions on $p$. This is
  formalized in the following definition of the effect of any sequence
  of transitions $\seq_{1} = t_{1} \cdots t_{r}$. We define the
  function $\effabs{P_{\omega}}{\seq_{1}}: \cbasis{m_{1}} \to \ZZ$ as
  follows.
  \begin{align*}
    \effabs{P_{\omega}}{\seq_{1}}(p) =
    \begin{cases}
      1 & p\in P_{\omega}, \exists i\in \{1, \ldots, r\}: \effect(t_{i})(p) = \omega\\
      0 & p\in P_{\omega}, \forall i\in \{1, \ldots, r\}: \effect(t_{i})(p) \ne \omega\\
      \sum_{1 \le i \le r} \effect ( t_{i})(p) & \text{otherwise}
    \end{cases}
  \end{align*}

  \emph{Step 2}: Let $m_{1} \tstep{\seq}{\threshold} m_{2}$. From
  Definition~\ref{def:thPumpSeq}, we have $\cbasis{m_{2}} =
  \cbasis{m_{1}}$. From Definition~\ref{def:thresholdMarkings}, infer
  that for any $\omega$-marking $m$ in the run $m_{1}
  \tstep{\seq}{\threshold} m_{2}$, $m(p) < \threshold( \nbwb{m_{1}})$
  for all $p \in P \setminus \cbasis{m_{1}}$.  Now we remove some
  simple loops from $\seq$ to obtain $\seq''$. To obtain some bounds
  in the next step, we first make the following observations on loops.
  Let $|P \setminus \cbasis{m_{1}}| = \numplaces$.
  Suppose $\seqbis$ is a simple loop. There can be at most
  $\threshold(\nbwb{m_{1}})^{\numplaces}$ transitions in $\seqbis$,
  so $-\threshold(\nbwb{m_{1}})^{\numplaces} \maxred \le
  \effabs{P_{\omega}}{\seqbis}(p) \le
  \threshold(\nbwb{m_{1}})^{\numplaces} \maxred$ for any $p \in P$.
  Let $\effs$ be the matrix whose set of columns is equal to
  $\{\effabs{P_{\omega}}{\seqbis} \mid \seqbis \text{ is a simple loop}
  \}$.  There are at most $(\threshold(\nbwb{m_{1}})^{\numplaces} 2
  \maxred)^{|P|}$ columns in $\effs$. We use $\effel, \effel', \ldots$
  to denote the columns of $\effs$.

  Now we remove simple loops from $\seq$ according to the
  following steps. Let $\lvv_{0} = \0$ be the zero vector whose
  dimension is equal to the number of columns in $\effs$. Begin the
  following steps with $i = 0$ and $\seq_{i} = \seq$.
  \begin{enumerate}
      \renewcommand{\theenumi}{\alph{enumi}}
    \item Think of the first $(\threshold(\nbwb{m_{1}})^{|P|} +
      1)^{2}$ transitions of $\seq_{i}$ as
      $\threshold(\nbwb{m_{1}})^{|P|} + 1$ blocks of length
      $\threshold(\nbwb{m_{1}})^{|P|} + 1$ each.
    \item There is at least one block in which all
      $\omega$-markings also occur in some other block.
    \item Let $\seqbis$ be a simple loop occurring
      in the above block.
    \item Let $\seq_{i + 1}$ be the sequence obtained from
      $\seq_{i}$ by removing $\seqbis$.
    \item Let $\lvv_{i + 1}$ be the vector obtained from
      $\lvv_{i}$ by incrementing $\lvv_{i}
      (\effabs{P_{\omega}}{\seqbis})$ by $1$.
    \item Increment $i$ by $1$.
    \item If the length of the remaining sequence is more than or
      equal to $(\threshold(\nbwb{m_{1}})^{|P|}\break + 1)^{2}$,  go back to step
      a. Otherwise, stop.
  \end{enumerate}
  Let $n$ be the value of $i$ when the above process stops. Let
  $\seq'' = \seq_{n}$ and $\lvv = \lvv_{n}$. We remove a simple loop
  $\seqbis$ starting at an $\omega$-marking $m$ only if all the
  intermediate $\omega$-markings occurring while firing $\seqbis$ from
  $m$ occur at least once more in the remaining sequence.  Hence, for
  every $\omega$-marking $m$ arising while while firing $\seq$ from
  $m_{1}$, $m$ also arises while firing $\seq''$ from $m_{1}$. We have
  $|\seq''| \le (\threshold(\nbwb{m_{1}})^{|P|} + 1)^{2}$. For
  each column $\effel$ of $\effs$, $\lvv ( \effel)$ contains the
  number of occurrences of simple loops $\seqbis$ removed from $\seq$
  such that $\effabs{P_{\omega}}{\seqbis} = \effel$.

  \emph{Step 3}: For every $p \in P_{\omega}$, we want to ensure that
  there is some transition $t$ in the shorter $\threshold$-PS that we
  will build, such that $\effect (t ) (p) = \omega$. For the other
  places, we want to ensure that the effect of the shorter
  $\threshold$-PS is non-negative. These requirements are expressed in
  the following vector $\fev$.
  \begin{align*}
    \fev(p) =
    \begin{cases}
      1 & p \in P_{\omega}\\
      0 & p \notin P_{\omega}
    \end{cases}
  \end{align*}
  Recall that for each column $\effel$ of $\effs$, $\lvv ( \effel)$
  contains the number of occurrences of simple loops $\seqbis$ removed
  from $\seq$ such that $\effabs{P_{\omega}}{\seqbis} = \effel$ and
  that $\seq''$ is the sequence remaining after all removals. Hence,
  $\effabs{P_{\omega}}{\seq} = \effs \lvv +
  \effabs{P_{\omega}}{\seq''}$.  Since $\seq$ is a $\threshold$-PS and
  for every $p \in P_{\omega}$, there is a transition $t$ in $\seq$
  such that $\effect (t ) ( p) = \omega$, we have
  \begin{align}
    \nonumber \effabs{P_{\omega}}{\seq}  & \ge \fev\\
    \nonumber \Rightarrow \effs \lvv + \effabs{P_{\omega}}{\seq''}  & \ge \fev\\
    \Rightarrow \effs \lvv & \ge \fev - \effabs{P_{\omega}}{\seq''}
    \enspace .
    \label{eq:TotalEff}
  \end{align}

  \emph{Step 4:} We use the following result about the existence of
  small integral solutions to linear equations \cite{BT76},
  which has been used by Rackoff to give \expsp{} upper bound for the
  boundedness problems in Petri nets \cite[Lemma 4.4]{Rackoff78}.

  \begin{em}
    Let $d_{1}, d_{2} \in \NNP$, let $\vec{A}$ be a $d_{1} \times
    d_{2}$ integer matrix and let $\vec{a}$ be an integer vector of
    dimension $d_{1}$. Let $d \ge d_{2}$ be an upper bound on the
    absolute value of the integers in $\vec{A}$ and $\vec{a}$. Suppose
    there is a vector $\vec{x} \in \NN^{d_{2}}$ such that $\vec{A}
    \vec{x} \ge \vec{a}$. Then for some constant $c$ independent of $d,
    d_{1}, d_{2}$, there exists a vector $\vec{y} \in \NN^{d_{2}}$
    such that $\vec{A} \vec{y} \ge \vec{a}$ and $\vec{y} ( i) \le
    d^{c d_{1}}$ for all $i$ between $1$ and $d_{2}$.
  \end{em}

  We apply the above result to \eqref{eq:TotalEff}. Each entry of
  $\effabs{P_{\omega}}{\seq''}$ is of absolute value at most
  $(\threshold(\nbwb{m_{1}})^{|P|} + 1)^{2} \maxred$.  Recall that
  there are at most $(\threshold(\nbwb{m_{1}})^{\numplaces} 2
  \maxred)^{|P|}$ columns in $\effs$, with the absolute value of each
  entry at most $\threshold(\nbwb{m_{1}})^{\numplaces} \maxred$. There
  are $|P| - \numplaces$ rows in $\effs$. Hence, we conclude that
  $\lvv$ can be replaced by $\lvvbis$ such that $\effs \lvvbis \ge
  \fev - \effabs{P_{\omega}}{\seq''}$ and the sum of all entries in
  $\lvvbis$ is at most $(\threshold(\nbwb{m_{1}}) 2 \maxred)^{d'
    |P|^{3}}$ for some constant $d'$.  This expression is obtained
  from simplifying
  $$
  (\threshold(\nbwb{m_{1}})^{\numplaces} 2 \maxred)^{|P|}
  ((\threshold(\nbwb{m_{1}})^{|P|} + 1)^{2} 2 \maxred)^{d''|P|^{2}}
  $$ 
  for some constant $d''$.

  For each column $\effel$ of $\effs$, let $\seqbis_{\effel}$ be a
  simple loop of $\seq$ such that
  $\effabs{P_{\omega}}{\seqbis_{\effel}} = \effel$. Recall from step 2
  that there is some intermediate $\omega$-marking $m_{\effel}$
  occurring while firing $\seq''$ from $m_{1}$ such that $m_{\effel}$
  is the $\omega$-marking from which the simple loop
  $\seqbis_{\effel}$ is fired in $\seq$. Let $i_{\effel}$ be the
  position in $\seq''$ where $m_{\effel}$ occurs. Let $\seq'$ be the
  sequence obtained from $\seq''$ by inserting $\lvvbis ( \effel)$
  copies of $\seqbis_{\effel}$ into $\seq''$ at the position
  $i_{\effel}$ for each column $\effel$ of $\effs$. Since we insert at
  most $(\threshold(\nbwb{m_{1}}) 2 \maxred)^{d' |P|^{3}}$ simple
  loops, each of length at most
  $\threshold(\nbwb{m_{1}})^{\numplaces}$, $|\seq'| \le
  (\threshold(\nbwb{m_{1}}) 2 \maxred)^{d' |P|^{3}}
  \threshold(\nbwb{m_{1}})^{\numplaces} +
  (\threshold(\nbwb{m_{1}})^{|P|} + 1)^{2}$.  Choose the constant $d$
  s.t.  $|\seq'| \le  (\threshold(\nbwb{m_{1}}) 2 \maxred)^{d'
  |P|^{3}}\times\break \threshold(\nbwb{m_{1}})^{\numplaces} +
  (\threshold(\nbwb{m_{1}})^{|P|} + 1)^{2} \le (
  \threshold(\nbwb{m_{1}}) 2 \maxred)^{d |P|^{3}}$. Now we have
  $|\seq'| \le (\threshold(\nbwb{m_{1}}) 2 \maxred)^{d |P|^{3}}$.

  \emph{Step 5:} Now we prove that $\seq'$ is a $\threshold$-PS
  enabled at $m_{1}$. Recall that $m_{1} \tstep{ \seq}{ \threshold}
  m_{2}$ and that $\seq'$ is obtained from $\seq$ by removing or
  adding extra copies of some simple loops. We infer that $m_{1}
  \tstep{\seq'}{\threshold} m_{2}$. Now we show that $\effect ( \seq')
  \mgeq \0$. Since for any simple loop $\seqbis$ in $\seq$, $\effect (
  \seqbis) ( p) = 0$ for all $p \in P \setminus \cbasis{m_{1}}$, we
  have $\effect( \seq') ( p) = \effect( \seq) ( p) \ge 0$.

  For any $p \in P_{\omega}$, we have $(\effs \lvvbis +
  \effabs{P_{\omega}}{\seq''}) (p) \ge \fev( p) \ge 1$.  Hence,
  $\lvvbis ( \effabs{P_{\omega}}{\seqbis}) \ge 1$ and
  $\effabs{P_{\omega}}{\seqbis} ( p) = 1$ for some simple loop
  $\seqbis$ or $\effabs{P_{\omega}}{\seq''} (p) = 1$. From the
  definitions of $\effabs{P_{\omega}}{\seqbis}$ and
  $\effabs{P_{\omega}}{\seq''}$, the only way this can happen is for
  some transition $t$ in either some simple loop $\seqbis$ or $\seq''$
  to have $\effect ( t) = \omega$.  Hence, there is some transition
  $t$ in $\seq'$ such that $\effect( t) ( p) = \omega$. Hence,
  $\effect( \seq') ( p) = \omega$.

  For any $p \in \cbasis{m_{1}} \setminus P_{\omega}$, we have
  $\effect( \seq') ( p) = (\effs \lvvbis +
  \effabs{P_{\omega}}{\seq''}) (p) \ge \fev( p) \ge 0$. Hence,
  $\effect( \seq') ( p) \ge 0$. \qed
\end{proof}

\begin{proof}[Lemma~\ref{lem:thCovToNatCov}]
  Let $\seq'$ be obtained from $\seq$ by removing all transitions
  between any two identical $\omega$-markings occurring in the run
  $m_{3} \tstep{\seq}{\threshold_{1}} m_{4}$. The number of distinct
  $\omega$-markings appearing in the run $m_{3}
  \tstep{\seq'}{\threshold_{1}} m_{4}$ is an upper bound on $|\seq'|$.
  Among the $\omega$-markings in this run, $m_{3}$ has the maximum
  number of places not marked $\omega$. Since $\threshold_{1}$ is
  non-decreasing, we infer from the definition of threshold semantics
  (Definition~\ref{def:thresholdPN}) that $\threshold_{1}
  (\nbwb{m_{3}} )^{|P|}$ is an upper bound on the number of  possible
  distinct $\omega$-markings. Hence, $|\seq'| \le \threshold_{1}
  (\nbwb{m_{3}} )^{|P|}$. We will now prove that for any run $m_{3}
  \tstep{\seq'}{\threshold_{1}} m_{4}$ where all intermediate
  $\omega$-markings are distinct from one another,
  $\floorthr{m_{3}}{\threshold_{1}} \step{\seq'} m_{4}'$ and
  $m_{4}' \mgeqp{\cbasis{m_{4}}} \floorthr{m_{4}}{\threshold_{2}}$.
  The proof is by induction on $\nbw{m_{4}} - \nbw{m_{3}}$ (the number
  of places where $\omega$ is newly introduced).

  \emph{Base case $\nbw{m_{4}} - \nbw{m_{3}} = 0$}:
  We have $|\seq'| \le
  \threshold_{1}( \nbwb{m_{3}})^{|P|} \le \lenfn( \nbwb{m_{3}})$. For
  any $p' \in \cbasis{m_{3}}$, we have by
  Definition~\ref{def:thresholdMarkings} and
  Definition~\ref{def:ThFns} that $\floorthr{m_{3}}{\threshold_{1}}
  (p') = \threshold_{1} ( \nbwb{m_{3}} + 1) = 2 \maxred \lenfn(
  \nbwb{m_{3}})$. We conclude from Proposition
  \ref{prop:thSemToNatSem} that $\floorthr{m_{3}}{\threshold_{1}}
  \step{\seq'} m_{4}'$ and $m_{4}' \mgeqp{\cbasis{m_{4}}}
  \floorthr{m_{4}}{\threshold_{2}}$.

  \emph{Induction step}: Let $m_{5}$ be the first $\omega$-marking
  after $m_{3}$ such that $\nbw{m_{5}} > \nbw{m_{3}}$. Let $\seq' =
  \seq_{1} t \seq_{2}$ where $m_{3} \tstep{\seq_{1}}{\threshold_{1}}
  m_{6} \tstep{t}{\threshold_{1}} m_{5}
  \tstep{\seq_{2}}{\threshold_{1}} m_{4}$. Note that due to our choice
  of $m_{5}$, we have $\cbasis{m_{6}} = \cbasis{m_{3}}$.  In any
  intermediate marking $m \ne m_{3}$ in the run $m_{3}
  \tstep{\seq_{1}}{\threshold_{1}} m_{6}$, $m(p) < \threshold_{1}(
  \nbwb{m_{3}})$ for all $p \in P \setminus \cbasis{m_{3}}$
  (otherwise, $p$ would have been marked $\omega$, contradicting
  $\cbasis{m_{6}} = \cbasis{m_{3}}$). Hence we have $|\seq_{1}| \le
  \threshold_{1}( \nbwb{m_{3}})^{|P|}$. For any $p' \in
  \cbasis{m_{3}}$, we have by Definition~\ref{def:thresholdMarkings}
  and Definition~\ref{def:ThFns} that
  $\floorthr{m_{3}}{\threshold_{1}} (p') = \threshold_{1} (
  \nbwb{m_{3}} + 1) = 2 \maxred \lenfn( \nbwb{m_{3}})$. We conclude
  from Proposition~\ref{prop:thSemToNatSem} that
  $\floorthr{m_{3}}{\threshold_{1}} \step{\seq_{1}} m_{6}'$ where
  $m_{6}' \mleqp{\cbasis{m_{6}}} m_{6}$ and for all $p' \in
  \cbasis{m_{6}}$, $m_{6}' (p') \ge 2 \maxred \lenfn( \nbwb{m_{3}}) -
  \maxred \threshold_{1}( \nbwb{m_{3}})^{|P|}$. Transition $t$ is
  enabled at $m_{6}'$. Let $m_{6}' \step{t} m_{5}'$, where for any $p$
  such that $\effect(t) (p) = \omega$, we chose $m_{5}' ( p) \ge
  \threshold_{1} ( \nbwb{m_{5}} + 1)$. We now conclude that
  $m_{5}' \mgeqp{\cbasis{m_{5}}} \floorthr{m_{5}}{\threshold_{1}}$ due
  to the following reasons:
  \begin{enumerate}
    \item $p \in P \setminus \cbasis{m_{5}}$: we have $p \in P
      \setminus \cbasis{m_{6}}$.
      \begin{align*}
	m_{5}' ( p) & = m_{6}' ( p) + \effect(t) & \text{[semantics of
	 \wPN]}\\
	& = m_{6} (p) + \effect(t) & [m_{6}' \mleqp{\cbasis{m_{6}}}
	m_{6}]\\
	& = m_{5} (p) & [\ceilthr{m_{6} +
	\effect(t)}{\threshold_{1}} = m_{5}, m_{5}(p) \ne \omega]\\
	& = \floorthr{m_{5}}{\threshold_{1}} (p)
      \end{align*}
    \item $p \in \cbasis{m_{5}}$, $\effect (t) (p) = \omega$:
      $m_{5}' ( p) \ge \threshold_{1} ( \nbwb{m_{5}} + 1)$ by choice.
    \item $p \in \cbasis{m_{5}}$, $\effect (t) (p) \ne \omega$,
      $p \notin \cbasis{m_{6}}$: since $\ceilthr{m_{6} +
      \effect(t)}{\threshold_{1}} = m_{5}$ and $m_{5} ( p) = \omega$,
      \begin{align*}
	m_{6} (p) + \effect(t)(p) & \ge \threshold_{1} (
	\nbwb{m_{6}})\\
	\Rightarrow m_{6} (p) + \effect(t)(p) & \ge \threshold_{1} (
	\nbwb{m_{5}} + 1) & [\nbw{m_{5}} > \nbw{m_{6}}]\\
	\Rightarrow m_{6}' (p) + \effect(t)(p) & \ge \threshold_{1} (
	\nbwb{m_{5}} + 1) & [m_{6}' \mleqp{\cbasis{m_{6}}}
	m_{6}]\\
	\Rightarrow m_{5}' ( p) & \ge \threshold_{1} (
	\nbwb{m_{5}} + 1) & \text{[semantics of
	 \wPN]}
      \end{align*}
    \item $p \in \cbasis{m_{5}}$, $\effect (t) (p) \ne \omega$,
      $p \in \cbasis{m_{6}}$:
      \begin{align*}
	m_{5}' ( p) & = m_{6}' ( p) + \effect(t)(p) & \text{[semantics of
	 \wPN]}\\
	 & \ge m_{6}' ( p) - \maxred & \text{[Definition of } \maxred
	 \text{]}\\
	 & \ge 2 \maxred \lenfn( \nbwb{m_{3}}) - \maxred
	 \threshold_{1}( \nbwb{m_{3}})^{|P|} - \maxred & [p \in
	 \cbasis{m_{6}}]\\
	 & \ge \maxred \lenfn( \nbwb{m_{3}}) - \maxred
	 \threshold_{1}( \nbwb{m_{3}})^{|P|}\\
	 & = \maxred (\threshold_{1}(\nbwb{m_{3}}) 2
	 \maxred)^{c|P|^{3}} - \maxred \threshold_{1}(
	 \nbwb{m_{3}})^{|P|} & [\text{Definition~\ref{def:ThFns}}]\\
	 & \ge \threshold_{1}(\nbwb{m_{3}})\\
	 & \ge \threshold_{1}(\nbwb{m_{5}} + 1)
      \end{align*}
      The last inequality follows since $\nbw{m_{5}} > \nbw{m_{3}}$.
  \end{enumerate}

  Since $\nbw{m_{4}} - \nbw{m_{5}} < \nbw{m_{4}} - \nbw{m_3}$ and all
  intermediate $\omega$-markings in the run $m_{5}
  \tstep{\seq_{2}}{\threshold_{1}} m_{4}$ are distinct from one
  another, we have by induction hypothesis that
  $\floorthr{m_{5}}{\threshold_{1}} \step{\seq_{2}} m_{4}''$ and
  $m_{4}'' \mgeqp{\cbasis{m_{4}}} \floorthr{m_{4}}{\threshold_{2}}$.
  Since $\floorthr{m_{3}}{\threshold_{1}} \step{\seq_{1}} m_{6}'
  \step{t} m_{5}'$, $m_{5}' \mgeqp{\cbasis{m_{5}}}
  \floorthr{m_{5}}{\threshold_{1}}$ and
  $\floorthr{m_{5}}{\threshold_{1}} \step{\seq_{2}} m_{4}''$, we infer
  by strong monotonicity that $\floorthr{m_{3}}{\threshold_{1}}
  \step{\seq_{1}t\seq_{2}} m_{4}'$ and
  $m_{4}' \mgeqp{\cbasis{m_{4}}}
  \floorthr{m_{4}}{\threshold_{2}}$.\qed
\end{proof}

\begin{proof}[Lemma~\ref{lem:GrowthFnUpBnd}]
  By induction on $i$. For the base case $i = 0$, the result is
  obvious since by Definition~\ref{def:ThFns}, $\lenfn( 0) = (
  2\maxred)^{c |P|^{3}}$.

  Induction step:
  \begin{align*}
    \lenfn( i  + 1 ) & = (\threshold_{1}( i + 1 ) 2 \maxred)^{ c
    |P|^{3}} & \text{[Definition~\ref{def:ThFns}]}\\
    & = (2 \maxred \lenfn(i) \cdot 2 \cdot \maxred)^{ c
    |P|^{3}} & \text{[Definition~\ref{def:ThFns}]}\\
    & = (4 \maxred^{2})^{c |P|^{3}}
    (\lenfn(i))^{c |P|^{3}}\\
    & = (2 \maxred)^{2 c |P|^{3}} (\lenfn(i))^{c
    |P|^{3}}\\
    & \le (2 \maxred)^{2 c |P|^{3}} (( 2 \maxred)^{
    k^{i + 1} |P|^{3 ( i + 1)}})^{ c
    |P|^{3}} & \text{[Induction hypothesis]}\\
    & = (2 \maxred)^{2 c |P|^{3}} ( 2 \maxred)^{
    c k^{i + 1} |P|^{3 ( i + 2)}}\\
    &\le (2 \maxred)^{3 c k^{i + 1} |P|^{3
    ( i + 2)}}\\
    & = (2 \maxred)^{k^{i + 2} |P|^{3
    ( i + 2)}}
  \end{align*}\qed
\end{proof}

\begin{proof}[Theorem~\ref{thm:terminationExpsp}]
  Since \wPN generalise Petri nets, and since termination is
  \textsc{ExpSpace}-c for Petri nets
  \cite{Rackoff78}\todo{Check Ref.}, termination is
  \textsc{ExpSpace}-hard for \wPN. Let us now show that termination
  for \wPN is in \textsc{ExpSpace}. We have from
  Lemma~\ref{lemma:terminates-iff-no-s-c-e} that an \wPN \Nn does not
  terminate iff it admits a self-covering execution. From
  Lemma~\ref{lem:shortSCS}, it admits a self-covering execution iff it
  admits one whose sequence of transitions is of length at most
  $\lenfn( |P|)$. The following non-deterministic algorithm can guess
  and verify the existence of such a sequence. It works with
  $\omega$-markings, storing $\omega$ in the respective places whenever
  an $w$-transition is fired.
  \begin{description}
    \item[Input] An \wPN \Nn, with initial marking $m_{0}$.
    \item[Output] SUCCESS if a self-covering execution is guessed,
      FAIL otherwise.
  \end{description}
  \begin{pseudo}
    counter := $0$
    $m$ := $m_{0}$
    if counter $ > \lenfn(|P|) $ @\label{firstLoop}@
    	return FAIL
    else
    	non-deterministically choose a transition $t$
	if $t$ is not enabled at $m$
		return FAIL
	else
		$m$ := $m + \effect(t)$
		counter := counter + $1$
	non-deterministically go to line@~\ref{firstLoop}@ or line@~\ref{beginPS}@ @\label{endFirstLoop}@
    in $m$, replace $\omega$ by $\maxred \lenfn(|P|)$ @\label{beginPS}@
    $m_{1}$ := $m$
    if counter $ > \lenfn(|P|)$ @\label{secondLoop}@
    	return FAIL
    else
    	non-deterministically choose a transition $t$
	if $t$ is not enabled at $m_{1}$
		return FAIL
	else
		$m_{1}$ : = $m_{1} + \effect(t)$
		counter := counter + $1$
		non-deterministically go to line@~\ref{secondLoop}@ or line@~\ref{endPS}@ @\label{endSecondLoop}@
    if $m_{1} \mgeq m$ @\label{endPS}@
    	return SUCCESS
    else
    	return FAIL
  \end{pseudo}
  The above algorithm tries to guess a sequence of transitions
  $\seq_{1} \seq_{2}$ such that $m_{0} \step{\seq_{1}} m
  \step{\seq_{2}} m_{1}$, guessing $\seq_{1}$ in the loop between
  lines \ref{firstLoop} and \ref{endFirstLoop} and $\seq_{2}$ in the loop
  between lines \ref{secondLoop} and \ref{endSecondLoop}. If \Nn
  admits a self-covering execution with sequence of transitions
  $\seq_{1} \seq_{2}$ such that $| \seq_{1} \seq_{2} | \le \lenfn
  ( |P| )$, then the execution of the above algorithm that guesses
  $\seq_{1} \seq_{2}$ will return SUCCESS. If all executions of \Nn
  are finite, then all executions of the above algorithm will return
  FAIL.

  The space required to store the variable ``counter'' in the above
  algorithm is at most $\log ( \lenfn( |P|) )$. The space required to
  store $m$ and $m_{1}$ is at most $|P|(\infnorm{m_{0}} + \log (
  \maxred \lenfn( |P|)))$.  Using the upper bound given by
  Lemma~\ref{lem:GrowthFnUpBnd}, we conclude that the memory space
  required by the above algorithm is $\Oh (|P| \log \infnorm{ m_{0}} +
  k^{ |P| + 1} |P|^{ 3 |P| + 4} \log \maxred)$.  This can be
  simplified to $\Oh( 2^{c' |P| \log |P|} (\log \maxred + \log
  \infnorm{m_{0}}))$. Using the well known Savitch's theorem to
  determinize the above algorithm, we get an \expsp{} upper bound for
  the termination problem in \wPN.\qed
\end{proof}

\end{document}